\keywords{graph rewriting, adhesive categories, DPO approach}
\DeclareFontFamily{U}{mathx}{\hyphenchar\font45}
\DeclareFontShape{U}{mathx}{m}{n}{
	<5> <6> <7> <8> <9> <10>
	<10.95> <12> <14.4> <17.28> <20.74> <24.88>
	mathx10
}{}
\DeclareSymbolFont{mathx}{U}{mathx}{m}{n}
\DeclareMathAccent{\widecheck}{0}{mathx}{"71}
\DeclareMathAccent{\wideparen}{0}{mathx}{"75}
\DeclareFontFamily{OT1}{pzc}{}
\DeclareFontShape{OT1}{pzc}{m}{it}{<-> s * [1.200] pzcmi7t}{}
\DeclareMathAlphabet{\mathpzc}{OT1}{pzc}{m}{it}
	\newcommand\functorop[1][l]{\csname#1functor\endcsname}
	\newcommand\lfunctorop[3]{%
		\setbox0=\hbox{$#2$}%
		\kern\wd0%
		\ensurestackMath{\Centerstack[c]{#1\\ \mathllap{#2\;\,}\mathclap{\DownArrow}\\#3}}%
	}		
	\newcommand\rfunctorop[3]{%
		\setbox0=\hbox{$#2$}%
		\ensurestackMath{\Centerstack[c]{#1\\\mathclap{\UpArrow}\mathrlap{\,\;#2}\\#3}}%
		\kern\wd0%
	}
	\newcommand\UpArrow{\rotatebox[origin=c]{90}{$\longrightarrow$\,}}
	\newcommand\DownArrow{\rotatebox[origin=c]{-90}{$\longrightarrow$\,}}
	\newcommand\functor[1][l]{\csname#1functor\endcsname}
	\newcommand\lfunctor[3]{%
		\setbox0=\hbox{$#2$}%
		\kern\wd0%
		\ensurestackMath{\Centerstack[c]{#1\\ \mathllap{#2\;\,}\mathclap{\DownArrow}\\#3}}%
	}
	\newcommand\rfunctor[3]{%
		\setbox0=\hbox{$#2$}%
		\ensurestackMath{\Centerstack[c]{#1\\\mathclap{\DownArrow}\mathrlap{\,\;#2}\\#3}}%
		\kern\wd0%
	}
\newcommand{\pad}[1]{\mathcal{#1}_{\mathsf{pa}}}
\newcommand{\ad}[1]{\mathcal{#1}_{\mathsf{a}}}
	\newcommand{\gr}{\catname{Graph}}
	\newcommand{\dgr}{\catname{SGraph}}
	\newcommand{\dg}{\catname{DAG}}
	\newcommand{\rt}{\mathsf{dcl_s}}
	\newcommand{\rta}{\mathsf{dcl}}
	\newcommand{\rtd}{\mathsf{dcl_{d}}}
	\newcommand{\catname}[1]{\mathbf{#1}}
	\newcommand{\arr}[1]{\mathsf{Mor}(\catname{#1})}
	\newcommand{\pred}[1]{{\downarrow}#1}
	\newcommand{\id}[1]{\mathsf{id}_{#1}}
	\newcommand{\comma}[2]{#1\hspace{1pt} {\downarrow}\hspace{1pt} #2}
	\newcommand{\cma}[2]{\mathcal{#1}\hspace{1pt} {\downarrow}\hspace{1pt} \mathcal{#2}}
	\newcommand{\true}{\mathsf{t}}
	\newcommand{\mor}{\mathsf{Mor}}
	\newcommand{\mon}{\mathsf{Mono}}
	\newcommand{\reg}{\mathsf{Reg}}
	\def\A{\textbf {\textup{A}}}
	\def\B{\textbf {\textup{B}}}
	\def\C{\textbf {\textup{C}}}
	\def\X{\textbf {\textup{X}}}
	\def\Y{\textbf {\textup{Y}}}
	\newcommand{\Set}{\textbf{\textup{Set}}}
	\def\j{\mathfrak{j}}
	\def\jp{\mathfrak{j}_{\mathcal{M}, \mathcal{N}}}
	\newcommand{\sh}{\mathbf{Sh}}
\newcommand{\yo}{\mathcal{Y}}
	\newcommand{\sub}{\mathsf{Sub}}
	\newcommand{\msub}[1]{\mathsf{Sub}_{\mathcal{#1}}}
	\newtheorem*{lma1}{Lemma $\mathbf{2.5}$}
	\newtheorem*{lma2}{Lemma $\mathbf{2.7}$}
	\newcommand{\renewtheorem}[1]{%
		\expandafter\let\csname #1\endcsname\relax
		\expandafter\let\csname c@#1\endcsname\relax
		\expandafter\let\csname end#1\endcsname\relax
		\newtheorem{#1}%
	}
	\title {On the axioms of $\mathcal{M}, \mathcal{N}$-adhesive categories}
\begin{document}
\author[D.~Castelnovo]{Davide Castelnovo\lmcsorcid{0000-0002-5926-5615}}[a]	
\author[M.~Miculan]{Marino Miculan\lmcsorcid{0000-0003-0755-3444}}[b]	

\address{Department of Mathematics, University of Padova, Padova, Italy.}	
\email{davide.castelnovo@math.unipd.it}

\address{Department of Mathematics, Computer Science and Physics, University of Udine, Udine, Italy.}	
\email{marino.miculan@uniud.it}

\begin{abstract} 
	Adhesive and quasiadhesive categories provide a general framework for the study of algebraic graph rewriting systems. 
	In a quasiadhesive category any two regular subobjects have a join which is again a regular subobject. Vice versa, if regular monos are \emph{adhesive}, then the existence of a regular join for any pair of regular subobjects entails quasiadhesivity. 
	It is also known that (quasi)adhesive categories can be embedded in a Grothendieck topos via a functor preserving pullbacks and pushouts along (regular) monos.
	
	In this paper we extend these results to \emph{$\mathcal{M}, \mathcal{N}$-adhesive categories}, a concept which generalizes the notion of (quasi)adhesivity.	
	We introduce the notion of \emph{$\mathcal{N}$-adhesive morphism}, which allows us to express $\mathcal{M}, \mathcal{N}$-adhesivity as a condition on the subobjects' posets. Moreover, $\mathcal{N}$-adhesive morphisms allows us to show how an $\mathcal{M},\mathcal{N}$-adhesive category can be embedded into a Grothendieck topos, preserving pullbacks and $\mathcal{M}, \mathcal{N}$-pushouts. 
\end{abstract}
\maketitle

\section{Introduction}
Since their introduction \cite{lack2005adhesive}, adhesive and quasiadhesive categories have provided a powerful and elegant  framework for studying and guaranteeing useful properties of algebraic graph rewriting systems \cite{behr2022fundamentals,corradini1997algebraic,ehrig2006fundamentals}. 
Besides being useful for the study of rewriting systems, adhesivity and quasiadhesivity entail other properties of the underlying categories \cite{johnstone2007quasitoposes,lack2005adhesive}. In a quasiadhesive category any two regular subobjects (i.e. subobjects represented by a regular mono) have a join which is again a regular subobject. 
Vice versa, if regular monos are \emph{adhesive}, then the existence of a regular join for any pair of regular subobjects entails quasiadhesivity \cite{garner2012axioms}. 
Moreover, any (quasi)adhesive category can be embedded in a Grothendieck topos via a functor preserving pullbacks and pushouts along (regular) monomorphisms \cite{garner2012axioms,lack2011embedding}, thus justifying the slogan that ``a category is (quasi)adhesive if pushouts of (regular) monomorphisms exist and interact with pullbacks as they do in a topos''.

In order to deal with situations with other conditions on the rewriting rules or on the allowed matchings, adhesivity and quasiadhesivity have been generalized with the introduction of $\mathcal{M}$- and \emph{$\mathcal{M}, \mathcal{N}$-adhesive categories} \cite{azzi2019essence,CastelnovoGM22,habel2012mathcal,peuser2016composition}, where $\mathcal{M}$ and $\mathcal{N}$ are various classes of monomorphisms.
While results concerning rewriting,  like the Church-Rosser Theorem and the Parallelism Theorem, can be proved also in this setting, a study of the consequences on the underlying categorical structure entailed by  $\mathcal{M}, \mathcal{N}$-adhesivity is lacking. 

To this end, in this paper we first establish a relationship between $\mathcal{M}, \mathcal{N}$-adhesivity and the existence of some binary suprema in the poset of subobjects of a given object.  Generalizing the approach of \cite{garner2012axioms}, we introduce the notion of \emph{$\mathcal{N}$-adhesive morphism}; then we show that, if $\mathcal{M}$ and $\mathcal{N}$ are nice enough,  $\mathcal{M}, \mathcal{N}$-adhesivity entails the existence of suprema for some pairs of subobjects; vice versa, the existence of these suprema, together with every arrow in $\mathcal{M}$ being $\mathcal{N}$-adhesive, is enough to guarantee $\mathcal{M}, \mathcal{N}$-adhesivity.

Moreover, the framework of $\mathcal{N}$-adhesive morphisms allows us to generalize also the embedding results provided in \cite{garner2012axioms,lack2011embedding}. As in the (quasi)adhesive cases, and under some hypotheses on the classes $\mathcal{M}$ and $\mathcal{N}$, an $\mathcal{M},\mathcal{N}$-adhesive category admits a full and faithful functor into a Grothendieck topos, which preserves pullbacks and $\mathcal{M}, \mathcal{N}$-pushouts.
Thus, this suggests the new slogan that ``a category is $\mathcal{M},\mathcal{N}$-adhesive if $\mathcal{M}, \mathcal{N}$-pushouts exist and interact with pullbacks as they do in a topos''.

\paragraph{Synopsis} 
In \cref{sec:defi} we will recall the notion of Van Kampen square and the definition of $\mathcal{M},\mathcal{N}$-adhesive category, showing how it relates with $\mathcal{M}$-adhesivity and (quasi)adhesivity. 
\cref{sec:n-pread-mor} introduces the notion of $\mathcal{N}$-(pre)adhesive morphism, which is used in \cref{sec:uniad} to provide a formula to compute certain suprema in the poset of subobjects of an $\mathcal{M}, \mathcal{N}$-category. \cref{sec:aduni} goes on the other direction: starting from the existence of certain suprema we can prove $\mathcal{M}, \mathcal{N}$-adhesivity. 
Using the results of these sections, in \cref{sec:top} we will  show that, under some hypotheses on $\mathcal{M}$ and $\mathcal{N}$, every $\mathcal{M}, \mathcal{N}$-adhesive category can be embedded in a Grothendieck topos via a functor preserving pullbacks and $\mathcal{M}, \mathcal{N}$-pushouts. 
Conclusions and suggestions for further work are in \cref{sec:concl}.

For the sake of brevity, proofs of straightforward results are omitted, but they can be found in the extended version of this paper \cite{castelnovo2024v4}.


\section{$\mathcal{M}, \mathcal{N}$-adhesive categories}\label{sec:defi}
In this section we recall the basic definitions and results about $\mathcal{M}, \mathcal{N}$-adhesive categories. We begin with some notational conventions which will be used throughout this paper. 

\begin{itemize}\item 
	Given a category $\X$ we will not distinguish notationally between $\X$ and its class of objects: so that ``$X\in \X$'' means that $X$ belongs to the class of objects of $\X$.  
	\item 
	If $1$ is a terminal object in a category $\X$,  the unique arrow $X\to 1$ from another object $X$ will be denoted by $!_X$. Similarly, if $0$ is initial in $\X$ then $?_X$ will denote the unique arrow $0\to X$. Moreover, when $\X$ is $\Set$ and $1$ is a singleton, $\delta_x$ will denote the arrow $1\to X$ with value $x\in X$.
	\item  $\mor(\X)$, $\mon(\X)$ and $\reg(\X)$ will denote the class of all arrows, monos and regular monos of $\X$, respectively.
	\item If $X$ is an object of a category $\X$, $\sub(X)$ will denote the collection of its \emph{subobjects}, i.e. equivalence classes of monos with codomain $X$ with respect to the relation which identifies $m: M\to X$ and $n: N\to X$ if and only if there exists an isomorphism $\phi: M\to N$ such that $n\circ \phi = m$. More generally, we will say that $[m]\leq [n]$ if and only if there exists an arrow $f:M\to N$ such that $n\circ f=m$. The resulting poset will be denoted by $(\sub(X), \leq)$.
	\item If $\mathcal{M}\subseteq \mon(\X)$ is a class of monomorphisms closed under composition with isomorphisms, we define $\msub{M}(X)$ to be the subclass of $\sub(X)$ given by classes of monos belonging to $\mathcal{M}$, $(\msub{M}(X), \leq)$ will denote the induced poset.
\end{itemize}

\subsection{The Van Kampen condition}

The key property that $\mathcal{M}, \mathcal{N}$-adhesive categories enjoy is given by  the so-called \emph{Van Kampen condition} \cite{brown1997van,johnstone2007quasitoposes,lack2005adhesive}. We will recall it and examine some of its consequences.

 \begin{defi} Let $\X$ be a category  and consider the two diagrams below
 	\[\xymatrix@C=10pt@R=10pt{&&&&&A'\ar[dd]|\hole_(.65){a}\ar[rr]^{g'} \ar[dl]_{m'} && B' \ar[dd]^{b} \ar[dl]_{n'} \\ A \ar[dd]_{m}\ar[rr]^{f}&&B\ar[dd]^{n}&&C'  \ar[dd]_{c}\ar[rr]^(.7){f'} & & D' \ar[dd]_(.3){d}\\&&&&&A\ar[rr]|\hole^(.65){g} \ar[dl]^{m} && B \ar[dl]^{n} \\C\ar[rr]_{g} &&D&&C \ar[rr]_{g} & & D}\]
	We say that the left square is a \emph{Van Kampen square} if:
	\begin{enumerate}
		\item it is a pushout square;
		\item 	in every commutative cube as the one above, with pullbacks as back and left faces, the top face is a pushout if and only if the front and right faces are pullbacks.
	\end{enumerate}

	Pushout squares which enjoy the ``if'' half of this condition are called \emph{stable}.
\end{defi}

\begin{rem}\label{rem:tec1} Van Kampen property and stability are invariant under isomorphism in two ways:
	\begin{itemize}
		\item 
		if a stable  (Van Kampen) stable pushout square of $m$ along $n$ exists, then every other pushout square of $m$ along $n$ is stable (Van Kampen);
		\item to prove that a given pushout square is stable (Van Kampen), it is enough to verify stability (the Van Kampen property) for a cube in which we have chosen representatives for the pullback squares given by the vertical (back and left) faces. 
	\end{itemize}
\end{rem}

 Before proceeding further, we need to recall a classical result about pullbacks \cite{mac2013categories,mclarty1992elementary}.

\begin{lem}\label{lem:pb1}
	Let $\X$ be a category, and consider the following diagram 	in which the right square is a pullback.
	\[\xymatrix{X \ar[d]_{a} \ar[r]^{f}& \ar[r]^{g} Y \ar[d]^{b}& Z \ar[d]^{c}\\ A \ar[r]_{h}& B \ar[r]_{k}& C}\]
	Then the whole rectangle is a pullback if and only if the left square is one.
\end{lem}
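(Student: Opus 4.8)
The plan is to verify the universal property of a pullback directly, in each direction separately, using only the commutativity of the two inner squares together with the existence-and-uniqueness halves of the universal properties of whichever pullbacks are available. Throughout I write $g\circ f$ and $k\circ h$ for the two long edges of the outer rectangle, and I note that commutativity of the outer rectangle, $(k\circ h)\circ a = c\circ(g\circ f)$, follows at once from $b\circ f = h\circ a$ and $c\circ g = k\circ b$.

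For the ``if'' direction, I assume the left square is a pullback and take a competing cone $(p\colon T\to A,\ q\colon T\to Z)$ over the outer cospan, so that $(k\circ h)\circ p = c\circ q$. First I would feed $(h\circ p,\ q)$ into the right square: since $k\circ(h\circ p) = c\circ q$ and the right square is a pullback, there is a unique $w\colon T\to Y$ with $b\circ w = h\circ p$ and $g\circ w = q$. Now $(p, w)$ is a cone over the left cospan because $h\circ p = b\circ w$, so the left pullback yields a unique $u\colon T\to X$ with $a\circ u = p$ and $f\circ u = w$; then $a\circ u = p$ and $(g\circ f)\circ u = g\circ w = q$, as required. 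Uniqueness of $u$ is obtained by chaining the two uniqueness clauses: any competitor $u'$ forces $f\circ u'$ to be a factorization through the right pullback equal to $w$ (using $b\circ f\circ u' = h\circ a\circ u' = h\circ p$), and this in turn forces $u' = u$ through the left pullback.

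For the ``only if'' direction, I assume the outer rectangle is a pullback (while the right square is a pullback by hypothesis) and take a cone $(p\colon T\to A,\ w\colon T\to Y)$ over the left cospan, so $h\circ p = b\circ w$. I would push $w$ across by setting $q := g\circ w$; then $c\circ q = c\circ g\circ w = k\circ b\circ w = k\circ h\circ p$, so $(p, q)$ is a cone over the outer rectangle and yields a unique $u\colon T\to X$ with $a\circ u = p$ and $(g\circ f)\circ u = q$. The one nontrivial point — and the step I expect to be the main obstacle, being the sole place where the hypothesis on the right square is genuinely used in this direction — is to upgrade the weak identity $(g\circ f)\circ u = q = g\circ w$ into the sharper $f\circ u = w$. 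Here the uniqueness half of the right pullback does the work: both $f\circ u$ and $w$ are arrows $T\to Y$ satisfying $b\circ(-) = h\circ p$ (for $f\circ u$ via $b\circ f = h\circ a$ and $a\circ u = p$) and $g\circ(-) = q$, so they must coincide. Uniqueness of $u$ then follows immediately from the universal property of the outer rectangle, completing the proof.
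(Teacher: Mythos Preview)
Your proof is correct and follows essentially the same approach as the paper's own (suppressed) proof: in each direction you chain the universal properties of the two available pullbacks, and the key step in the ``only if'' direction---upgrading $g\circ f\circ u = g\circ w$ to $f\circ u = w$ via uniqueness in the right square---is exactly the argument the paper gives.
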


\begin{cor}\label{cor:cube} Let $\X$ be a category and let the solid part of the following cube be given
		\[\xymatrix@C=13pt@R=13pt{&Y'\ar[dd]|\hole_(.65){y}\ar[rr]^{g'} \ar@{.>}[dl]_{q'} && Z' \ar[dd]^{z} \ar[dl]_{r'} \\ B'  \ar[dd]_{b}\ar[rr]^(.65){k'} & & C' \ar[dd]_(.3){c}\\&Y\ar[rr]|\hole^(.65){g} \ar[dl]^{q} && Z \ar[dl]^{r} \\B \ar[rr]_{k} & & C}\]
	If the front face is a pullback then there is a unique $q':Y'\to B'$ completing the cube. If, moreover, the right and back faces are pullbacks, then the left face is a pullback too.
\end{cor}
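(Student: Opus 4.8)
The plan is to treat the two assertions in turn, using only the commutativity of the solid faces together with the pasting lemma for pullbacks (\cref{lem:pb1}).

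For the existence and uniqueness of $q'$, I would use the front face as a pullback of $c$ and $k$ (with projections $k'$ and $b$) and exhibit $Y'$ as a competing cone. The two candidate legs are $r'\circ g':Y'\to C'$ and $q\circ y:Y'\to B$, and the compatibility condition $c\circ(r'\circ g')=k\circ(q\circ y)$ follows by a short chase around the solid faces: $c\circ r'\circ g' = r\circ z\circ g'$ (right face), $= r\circ g\circ y$ (back face), $= k\circ q\circ y$ (bottom face). The universal property of the front face then produces a unique $q':Y'\to B'$ with $k'\circ q'=r'\circ g'$ and $b\circ q'=q\circ y$. The second equation is commutativity of the left face, the first is commutativity of the top face, so $q'$ indeed completes the cube, and uniqueness is immediate from the pullback property.

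For the second assertion I would realize the left face as the left half of a horizontally composed rectangle and invoke \cref{lem:pb1} twice. First, pasting the back and right faces along their shared edge $z:Z'\to Z$ gives the rectangle
\[\xymatrix{Y' \ar[d]_{y} \ar[r]^{g'}& Z' \ar[d]^{z} \ar[r]^{r'}& C' \ar[d]^{c}\\ Y \ar[r]_{g}& Z \ar[r]_{r}& C}\]
whose two squares are pullbacks by hypothesis; the forward direction of \cref{lem:pb1} then makes the outer rectangle a pullback. Next I would note that $r'\circ g'=k'\circ q'$ and $r\circ g=k\circ q$ (the top and bottom faces), so this same outer pullback square is also the outer rectangle of the composite of the left face with the front face along the edge $b:B'\to B$:
\[\xymatrix{Y' \ar[d]_{y} \ar[r]^{q'}& B' \ar[d]^{b} \ar[r]^{k'}& C' \ar[d]^{c}\\ Y \ar[r]_{q}& B \ar[r]_{k}& C}\]
Here the right square is the front face, a pullback, and the whole rectangle is a pullback by the previous step, so the backward direction of \cref{lem:pb1} forces the left square --- exactly the left face --- to be a pullback.

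The only genuine subtlety is the bookkeeping: one must check carefully that the two pasted rectangles have literally the same outer square, which is precisely the observation that the top and bottom faces commute ($k'\circ q'=r'\circ g'$ and $k\circ q=r\circ g$). Once orientations are fixed, the two applications of \cref{lem:pb1} do all the work; I expect this identification of the shared outer square to be the main, though modest, obstacle.
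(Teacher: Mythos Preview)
Your argument is correct and is precisely the intended one: the paper states this result as a corollary of \cref{lem:pb1} without further proof, and your two-step use of that lemma (first pasting the back and right faces, then cancelling the front face) is exactly the standard derivation. The construction of $q'$ via the universal property of the front face is likewise the only reasonable move, and your diagram chase for the compatibility condition is accurate.
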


We can dualize Lemma \ref{lem:pb1} to get half of the following. 

\begin{lem}\label{lem:po}
	Let $\X$ be a category, and consider the following diagram 	in which the left square is a pushout.
	\[\xymatrix@R=16pt{X \ar[d]_{p} \ar[r]^{f}& \ar[r]^{g} Y \ar[d]^{q}& Z \ar[d]^{r}\\ A \ar[r]_{h}& B \ar[r]_{k}& C}\]
	Then the whole rectangle is a pushout if and only if the right square is one. 
	
	Moreover, if $\X$ has pullbacks and the left square is stable, then stability of the rectangle is equivalent to that of the right square.
\end{lem}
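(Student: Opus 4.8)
The first assertion is the exact dual of \cref{lem:pb1}: in $\X^{\mathrm{op}}$ the rectangle and its two squares are read right-to-left and pushouts become pullbacks, so ``the rectangle is a pushout iff the right square is, given that the left one is'' comes for free. This is the ``half'' alluded to above; the remaining, genuinely new, content is the stability clause, where duality is of no help since stability is not a self-dual property.

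For the stability clause the plan is to prove both implications by one and the same device: a commutative test cube sitting over the rectangle is split into (respectively, assembled from) two test cubes sitting over the left and the right square, glued along a common middle face that I manufacture by pullback. Here I use in an essential way that $\X$ has pullbacks, together with \cref{lem:pb1} and \cref{cor:cube} to recognise the newly created faces as pullbacks, the standing stability hypothesis on the left square, and the pushout-pasting already established in the first half to recombine the two top faces.

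Concretely, for ``right square stable $\Rightarrow$ rectangle stable'' I start from a commutative cube over the rectangle with all four side faces pullbacks and top the square on $X',Z',A',C'$, which I must show is a pushout. I first form $B'$ as the pullback of $k\colon B\to C$ along the vertical $C'\to C$, and then $Y'$ as the pullback of $q\colon Y\to B$ along $B'\to B$; the universal property of the side pullback on $Z',C',Z,C$ produces the comparison $Y'\to Z'$, exactly as in \cref{cor:cube}, and \cref{lem:pb1} (paste two pullbacks, then cancel) shows the back face on $Y',Z',Y,Z$ is a pullback. As all faces of this cube over the right square are now pullbacks and the right square is stable, its top on $Y',Z',B',C'$ is a pushout. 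A symmetric construction of $A'\to B'$ and $X'\to Y'$ yields a cube over the left square whose top on $X',Y',A',B'$ is a pushout by stability of the left square. Pasting these two pushouts via the pushout-pasting clause makes the square on $X',Z',A',C'$ a pushout, and it remains to identify it with the prescribed top. The converse ``rectangle stable $\Rightarrow$ right square stable'' is dual: from a test cube over the right square I build $A'$ and $X'$ by pullback, form a cube over the left square (top a pushout by left stability), paste it onto the given right cube to obtain a test cube over the rectangle, use stability of the rectangle to make the total top a pushout, and finally invoke the pushout-pasting clause to conclude that the right top on $Y',Z',B',C'$ is a pushout.

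The hard part is purely bookkeeping: one must (i) construct the comparison arrows $Y'\to Z'$, $A'\to B'$ and $X'\to Y'$ from the correct universal properties; (ii) check that each freshly assembled square really is a pullback, which is where the two directions of \cref{lem:pb1} are used back to back; and (iii) verify that the square produced by pasting the two top faces coincides with the given top face of the cube, i.e.\ that $X'\to Y'\to Z'$ agrees with the prescribed $X'\to Z'$ and $A'\to B'\to C'$ with the prescribed $A'\to C'$ --- a short chase using once more that the square on $Z',C',Z,C$ is a pullback and that the cube's top commutes. One must also keep track at each stage that the square to which ``stable'' is applied is indeed a pushout, which is exactly guaranteed by the first half of the lemma.
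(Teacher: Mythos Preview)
Your proposal is correct and follows essentially the same route as the paper's proof. The first half is dispatched by duality in both; for the stability clause both arguments proceed by inserting a middle vertical face via pullback, invoking \cref{cor:cube} and the two directions of \cref{lem:pb1} to recognise the new faces as pullbacks, applying the standing stability of the left square and the assumed stability of the rectangle (resp.\ right square), and finishing with the pushout-pasting from the first half.

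The only cosmetic difference is in the order of construction of the middle objects: in the direction ``right square stable $\Rightarrow$ rectangle stable'' you build $B'$ first and then $Y'$ as a pullback over $B'$, whereas the paper pulls $Y'$ back from $Z'$ and $B'$ from $C'$ independently and then uses \cref{cor:cube} to produce $q'\colon Y'\to B'$. Your choice makes the left face $Y',B',Y,B$ a pullback for free but forces the extra cancellation step to see that $Y',Z',Y,Z$ is one; the paper's choice does the opposite. Both are fine. Similarly, your bookkeeping point (iii)---that the pasted top agrees with the prescribed one---is handled in the paper by building the factorisations $s=g'\circ f'$ and $t=k'\circ h'$ into the very definition of $f'$ and $h'$ via the universal property of the $Z'$- and $C'$-pullbacks; with your construction of $Y'$ this identification genuinely requires the short chase you describe through the pullback $Z',C',Z,C$, and that chase goes through exactly as you indicate.
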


We recall another property of Van Kampen squares from \cite{lack2005adhesive}.
\begin{prop}\label{prop:pbpo} Let $m:A\to C$ be a monomorphism in a category $\X$, then every Van Kampen square
		\[\xymatrix@R=16pt{A\ar[r]^{g} \ar[d]_{m} & B \ar[d]^{n} \\ C \ar[r]_{f}  & D}\]
		is also a pullback square and $n$ is a monomorphism.
\end{prop}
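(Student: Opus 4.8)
The plan is to invoke the Van Kampen condition directly, by exhibiting one carefully chosen commutative cube whose bottom face is the given square. Since a Van Kampen square is in particular a pushout and $m$ is assumed to be monic, I expect both conclusions---that the square is a pullback and that $n$ is monic---to drop out of a single application of the cube condition. The trick is to arrange the cube so that the two faces which the Van Kampen property \emph{forces} to be pullbacks are precisely the original square itself and the kernel-pair square of $n$; this is the classical argument of \cite{lack2005adhesive}.

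Concretely, I would take the bottom face to be the given pushout of $g$ and $m$, and the top face to be the square on the vertices $A,B,A,B$ with edges $g,\id{A},\id{B},g$, so that it is the pushout of the span $B \xleftarrow{g} A \xrightarrow{\id{A}} A$ and hence genuinely a pushout. For the four vertical maps I would choose $\id{A}\colon A\to A$ and $\id{B}\colon B\to B$ above $A$ and $B$, together with $m\colon A\to C$ and $n\colon B\to D$ above $C$ and $D$. A routine diagram chase shows that every face of this cube commutes, so it is a legitimate cube to which the condition may be applied; by \cref{rem:tec1} it suffices to test a single such representative.

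Next I would verify the two hypotheses of the Van Kampen condition on this cube. The left face is the square
\[
\xymatrix@R=14pt@C=20pt{A \ar[r]^{\id{A}} \ar[d]_{\id{A}} & A \ar[d]^{m}\\ A \ar[r]_{m} & C}
\]
which is a pullback exactly because $m$ is a monomorphism (its kernel pair is trivial), while the back face carries two parallel identity edges and is therefore automatically a pullback. As already observed, the top face is a pushout. Hence the ``only if'' half of the Van Kampen property applies and the front and right faces must be pullbacks.

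Finally I would read off the two conclusions. By construction the front face has edges $g,m,n,f$, i.e. it \emph{is} the original square, so the first claim follows immediately; and the right face is
\[
\xymatrix@R=14pt@C=20pt{B \ar[r]^{\id{B}} \ar[d]_{\id{B}} & B \ar[d]^{n}\\ B \ar[r]_{n} & D}
\]
whose being a pullback says precisely that the kernel pair of $n$ is trivial, that is, $n$ is monic. The only delicate point is bookkeeping about orientation: one must set up the cube so that the $m$-monic kernel-pair square and the trivial identity square occupy the two faces \emph{assumed} to be pullbacks (back and left), while the original square and the $n$-kernel-pair square occupy the two faces the condition \emph{concludes} about (front and right). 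Once this matching is fixed, no real computation remains.
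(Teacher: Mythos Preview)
Your argument is correct and is precisely the classical proof from \cite{lack2005adhesive} to which the paper defers (the proposition is stated without proof in the paper). The cube you build, with identities over $A$ and $B$ and with $m,n$ over $C,D$, places the kernel-pair square of $m$ and a trivial identity square on the back and left faces, so that the Van Kampen condition forces the original square (front) and the kernel-pair square of $n$ (right) to be pullbacks; nothing is missing.
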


Finally, some kind of left cancellation property holds for pullbacks \cite[Lemma $2.2$]{garner2012axioms}.

\begin{lem}\label{lem:pb2}
	Let $\X$ be a category with pullbacks, given the following diagrams:
	\[
	\xymatrix@R=16pt{Y\ar[r]^{f_2} \ar[d]_{f_1} & X_2 \ar[d]^{r_2} & Z_1 \ar[d]_{x_1}\ar[r]^{z_1} & W \ar[r]^{w} \ar[d]_{r} & Q'\ar[d]^{q} & Z_2 \ar[d]_{x_2} \ar[r]^{z_2}  & W  \ar[r]^{w} \ar[d]_{r}  & Q' \ar[d]^{q}\\ X_1 \ar[r]_{r_1} &R  & X_1 \ar[r]_{r_1} & R \ar[r]_{s}  & Q& X_2 \ar[r]_{r_2} & R \ar[r]_{s} & Q}\]
	if the first square is a stable pushout and the whole rectangles and their left halves are pullbacks, then their common right half is a pullback too.	
\end{lem}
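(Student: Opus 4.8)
The plan is to show that the canonical comparison map from $W$ into the pullback $R\times_Q Q'$ is an isomorphism, which is exactly what it means for the common right half to be a pullback. Write $P:=R\times_Q Q'$ with projections $p_1\colon P\to R$ and $p_2\colon P\to Q'$. Since $s\circ r=q\circ w$, there is a unique $u\colon W\to P$ with $p_1\circ u=r$ and $p_2\circ u=w$, and the right square is a pullback precisely when $u$ is an isomorphism. The strategy is to exhibit both $W$ and $P$ as pushouts of two \emph{isomorphic} spans, obtained by pulling $r$ and $p_1$ back along the legs of the first (stable) pushout, and then to recognize $u$ as the induced map between these pushouts.

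First I would pull $r\colon W\to R$ and $p_1\colon P\to R$ back along $r_1$, $r_2$, and along $r_1\circ f_1=r_2\circ f_2$. By hypothesis the left halves of the two rectangles are pullbacks, so $Z_1\cong X_1\times_R W$ and $Z_2\cong X_2\times_R W$ with the displayed projections; setting $W_0:=Y\times_R W$, the pasting Lemma~\ref{lem:pb1} identifies $W_0$ both with $Y\times_{X_1}Z_1$ and with $Y\times_{X_2}Z_2$. On the $P$ side I put $P_i:=X_i\times_R P$ and $P_0:=Y\times_R P$. Here the two \emph{whole} rectangles enter: since they are pullbacks we have $Z_1\cong X_1\times_Q Q'$ directly, while pasting along $X_1\xrightarrow{r_1}R\xrightarrow{s}Q$ (Lemma~\ref{lem:pb1}) identifies $P_1=X_1\times_R P$ with the same object $X_1\times_Q Q'$. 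Comparing projections (to $X_1$ and, using $p_2\circ u=w$, to $Q'$) shows that the pullback $u_1$ of $u$ along $r_1$ is an isomorphism $Z_1\xrightarrow{\sim}P_1$. Symmetrically $u_2\colon Z_2\xrightarrow{\sim}P_2$, and pulling back once more along $f_1$ yields an isomorphism $u_0\colon W_0\xrightarrow{\sim}P_0$; functoriality of pullback makes $(u_0,u_1,u_2)$ a morphism of spans.

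Next I would invoke stability of the first square twice. Pulling the pushout square back along $r\colon W\to R$ produces a cube whose bottom is that pushout and whose four vertical faces are all pullbacks: the front and right faces are exactly the two left halves assumed to be pullbacks, while the left and back faces are pullbacks by the pasting Lemma~\ref{lem:pb1}. Stability then forces the top face to be a pushout, i.e.\ $W=\colim(Z_1\leftarrow W_0\to Z_2)$. The identical argument applied to $p_1\colon P\to R$, where now every vertical face is a pullback by the very construction of the $P_i$, gives $P=\colim(P_1\leftarrow P_0\to P_2)$. Since $(u_0,u_1,u_2)$ is an isomorphism of spans, the induced map on pushouts is an isomorphism; and because $u\circ z_i$ agrees with $u_i$ followed by the coprojection $P_i\to P$, that induced map is precisely $u$. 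Hence $u$ is an isomorphism and the right square is a pullback.

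I expect the main obstacle to be the two applications of stability: one must orient each cube correctly, check that all four vertical faces are pullbacks (the genuinely non-formal inputs being the two left-half hypotheses, everything else coming from Lemma~\ref{lem:pb1}), and verify that $u_0,u_1,u_2$ commute with the span legs so that the map induced on pushouts really is $u$. The conceptual point is that naive pullback cancellation is unavailable --- it is exactly stability of the pushout that lets the two ``local'' pullback conditions over $X_1$ and $X_2$ be glued into a single pullback over $R$.
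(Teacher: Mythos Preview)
Your proof is correct and follows essentially the same strategy as the paper's: form the pullback $P=R\times_Q Q'$ with the comparison map $W\to P$, use Lemma~\ref{lem:pb1} together with the hypotheses to identify the pullbacks of $W\to R$ and of $P\to R$ along $r_1$, $r_2$, and $r_1\circ f_1$, then apply stability of the first square twice to exhibit $W$ and $P$ as pushouts of isomorphic spans, whence the comparison map is an isomorphism. The paper's presentation is only cosmetically different---it shows $W$ and $P$ are pushouts of the \emph{same} span $Z_1\leftarrow Z'_0\to Z_2$ rather than of isomorphic spans---but the argument is identical.
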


\subsection{Definition of $\mathcal{M}, \mathcal{N}$-adhesivity}

In this section we will define the notion of $\mathcal{M}, \mathcal{N}$-adhesivity and explore some of the consequences of such a property. Let us start fixing some terminology.

\begin{defi}
Let $\X$ be a category and $\mathcal{A}$, $\mathcal{B}$ two classes of arrows, we say that  $\mathcal{A}$ is
\begin{itemize}
	\item 
	\emph{stable under pushouts (pullbacks)} if for every pushout (pullbacks) square 
	\[\xymatrix@R=16pt{A\ar[r]^f  \ar[d]_{m}& B \ar[d]^n \\ C \ar[r]_g & D}\]
 if $m \in \mathcal{A}$ ($n\in \mathcal{A}$) then $n \in \mathcal{A}$ ($m \in \mathcal{A}$);
	\item \emph{closed under composition} if $g, f\in \mathcal{A}$ implies $g\circ f\in \mathcal{A}$ whenever $g$ and $f$ are composable;
	\item \emph{closed under $\mathcal{B}$-decomposition} if $g\circ f\in \mathcal{A}$ and $g\in \mathcal{B}$ implies $f\in \mathcal{A}$;
	\item \emph{closed under decomposition} if it is closed under $\mathcal{A}$-decomposition.
\end{itemize}
\end{defi}

\begin{rem}Clearly, ``decomposition'' corresponds to ``left cancellation'', but we prefer to stick to the name commonly used in literature (see e.g.~\cite{habel2012mathcal}).
\end{rem}

\begin{rem}\label{rem:stab1}	
	In the literature, when studying \emph{stable systems of monos}, stability under pullbacks (pushouts) is frequently assumed to imply the existence of those (co)limits. In this paper we will not make this assumption, because it would lead to a definition of $\mathcal{M}, \mathcal{N}$-adhesivity more restrictive than the one found in the literature (see Remark \ref{rem:stab2}).
\end{rem}

\begin{exa}\label{ex:split}
	In every category $\X$, split monomorphisms (i.e. those arrows which have a left inverse) are stable under pushouts. Indeed, take a pushout square 
		\[\xymatrix@R=16pt{A\ar[r]^f  \ar[d]_{m}& B \ar[d]^n \\ C \ar[r]_g & D}\] with $m$ a split monomorphism. Let $r:C\to A$ be a left inverse of $m$, then
	\begin{align*}
	f\circ r\circ m = f\circ \id{A}=f=\id{B}\circ f 
	\end{align*}
 This equality in turn entails the existence of a unique $t:D\to B$ as in the diagram below.
	\[\xymatrix@R=16pt{A\ar[r]^f  \ar[d]_{m}& B \ar@/^.3cm/[ddr]^{\id{B}} \ar[d]^n \\ C \ar@/_.3cm/[dr]_{r} \ar[r]_g & D \ar@{.>}[dr]^{t}\\ & A \ar[r]_{f}&B}\]
\end{exa}

\begin{lem}\label{lem:stab}Let $\mathcal{M}$ be a class of monos in a category $\X$ which is stable under pullbacks and contains all  split monos. If pushouts along arrows in $\mathcal{M}$ exist and are Van Kampen, then $\mathcal{M}$ is stable under pushouts. 
\end{lem}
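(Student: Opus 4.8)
The plan is to derive pushout‑stability from pullback‑stability by presenting the relevant coprojection as a pullback of an arrow that already belongs to $\mathcal{M}$. So I fix a pushout square
\[\xymatrix@R=16pt{A\ar[r]^f  \ar[d]_{m}& B \ar[d]^n \\ C \ar[r]_g & D}\]
with $m\in\mathcal{M}$ and must show $n\in\mathcal{M}$. First I would use \cref{rem:tec1} to assume the square is Van Kampen, and then \cref{prop:pbpo} to record that it is also a pullback and that $n$ is a monomorphism. Since $\mathcal{M}$ is stable under pullbacks, it then suffices to exhibit a single arrow $q\in\mathcal{M}$ together with a map out of $D$ along which $n$ is the pullback of $q$; the final membership $n\in\mathcal{M}$ will follow at once.

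The guiding observation is that $n$ behaves like an $\mathcal{M}$-map \emph{locally over the cocone} $(g,n)$ of the pushout. Because the square is a pullback, pulling $n$ back along $g$ reproduces exactly the arrow $m\in\mathcal{M}$, while pulling $n$ back along $n$ yields $\id{B}$, which is a split mono and hence lies in $\mathcal{M}$ by hypothesis; the inclusion of all split monos is precisely what makes this $n$-component already $\mathcal{M}$. To turn this local data into a genuine pullback over $D$ I would invoke \cref{lem:pb2}: as the Van Kampen square is in particular a stable pushout, a square over $D$ is a pullback as soon as its two restrictions along $g$ and $n$, together with their left halves, are pullbacks. Feeding in the two computations above—one restriction controlled by pullback‑stability of $m$, the other trivial because $n$ is mono—lets \cref{lem:pb2} certify the comparison square, after which stability under pullbacks delivers $n\in\mathcal{M}$.

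The hard part is manufacturing the comparison datum itself, i.e. the arrow $q\in\mathcal{M}$ and the classifying map $D\to \operatorname{cod}(q)$ along which $n$ is to be pulled back. In a topos $q$ would be the generic subobject and the classifying map its characteristic morphism, but an $\mathcal{M},\mathcal{N}$-adhesive category has no subobject classifier, so this map must instead be assembled from the Van Kampen/descent structure; a natural source of the required split $\mathcal{M}$-subobject is the pushout of $m$ along itself, whose coprojections are split by the codiagonal and whose mutual pullback is again $m$ by \cref{prop:pbpo}. The delicate point, and the step I expect to be the main obstacle, is to use the universal property of $D$ to produce a map out of $D$ that restricts to $m$ over $C$ and sends $B$ into the chosen $\mathcal{M}$-subobject, and then to verify that the two restricted rectangles of \cref{lem:pb2} are indeed pullbacks. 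This descent‑style verification, rather than the concluding appeal to pullback‑stability, is where the genuine work of the lemma is concentrated.
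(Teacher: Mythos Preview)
The paper does not print a proof of this lemma (it is deferred to the extended version), so let me assess your plan on its own merits.

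Your overall strategy is sound: exhibit $n$ as a pullback of some $q\in\mathcal{M}$ and invoke pullback stability, using \cref{lem:pb2} over the given Van Kampen (hence stable) square to certify the pullback. The two ``local'' observations you record---that the given square is a pullback so the $g$-restriction of $n$ is $m$, and that $n$ is mono so the $n$-restriction is an identity---are exactly the left halves needed for \cref{lem:pb2}.

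The gap is in your choice of comparison datum. Taking $q$ to be a coprojection $i_1\colon C\to C+_A C$ of the cokernel pair of $m$ does not work: a classifying map $s\colon D\to C+_A C$ with $s\circ g=i_2$ is determined by the universal property of $D$ once you supply a compatible $s\circ n\colon B\to C+_A C$, and forcing $s\circ n$ to factor through $i_1$ via some $w\colon B\to C$ imposes $w\circ f=m$. There is no reason such a $w$ exists (take $f$ non-monic, for instance).

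The fix is to use the cokernel pair of $n$ rather than of $m$. A priori you cannot form it because $n$ is not yet known to lie in $\mathcal{M}$, but you can build it indirectly: form the pushout of $m\in\mathcal{M}$ along $n\circ f$, which exists and is Van Kampen by hypothesis; by \cref{lem:po} its right half over the given square is precisely the cokernel pair $j_1,j_2\colon D\rightrightarrows P$ of $n$. Both $j_1,j_2$ are split (by the codiagonal $\upsilon_n$), hence lie in $\mathcal{M}$. Now take $q=j_1$ and $s=j_2$. The two rectangles required by \cref{lem:pb2} are: the pasted rectangle itself, which is a pullback by \cref{prop:pbpo}; and the rectangle with top row $B\xrightarrow{\id}B\xrightarrow{n}D$ over $B\xrightarrow{n}D\xrightarrow{j_2}P$, which is a pullback because $j_1$ is mono and $j_1\circ n=j_2\circ n$. \Cref{lem:pb2} then yields that the cokernel-pair square is a pullback, so $n$ arises as a pullback of $j_1\in\mathcal{M}$ and therefore $n\in\mathcal{M}$.
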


We are now ready to give the definition of $\mathcal{M},\mathcal{N}$-adhesive category	
\begin{defiC}[\cite{habel2012mathcal,peuser2016composition}]\label{def:class}
	Let $\X$ be a category, $\mathcal{M}\subseteq \mon(\X)$ and $\mathcal{N}\subseteq\mor(\X)$, we say that the pair $(\mathcal{M}, \mathcal{N})$ is a \emph{preadhesive structure} on $\X$ if the following conditions hold.
	\begin{enumerate}
		\item $\mathcal{M}$ and $\mathcal{N}$ contain all isomorphisms and are closed under composition and decomposition;
		\item $\mathcal{N}$ is closed under $\mathcal{M}$-decomposition;
		\item $\mathcal{M}$ and $\mathcal{N}$ are stable under pullbacks and pushouts.
	\end{enumerate}
	Given a preadhesive structure $(\mathcal{M}, \mathcal{N})$, we say that $\X$ is \emph{$\mathcal{M}, \mathcal{N}$-adhesive} if
	\begin{enumerate}
		\item for every $m:X\to Y$ in $\mathcal{M}$ and $g:Z\to Y$, a pullback square
		\[\xymatrix@R=16pt{P\ar[r]^p \ar[d]_{n}& X \ar[d]^{m}\\ Z \ar[r]_g& Y}\]
		exists, such pullbacks will be called \emph{$\mathcal{M}$-pullbacks};
		\item for every $m:X\to Y$ in $\mathcal{M}$ and $n:X\to Z$ in $\mathcal{N}$, a pushout square
		\[\xymatrix@R=16pt{X \ar[r]^n \ar[d]_{m}& Z \ar[d]^{q}\\ Y\ar[r]_p &Q}\]
		exists, such pushouts  will be called \emph{$\mathcal{M}, \mathcal{N}$-pushouts}; 
		\item  $\mathcal{M}, \mathcal{N}$-pushouts are Van Kampen squares.
	\end{enumerate}
\end{defiC}

\begin{rem}\label{rem:diff}
	Our notion of $\mathcal{M}$, $\mathcal{N}$-adhesivity is different from the one of \cite{habel2012mathcal}: in that paper,  $\mathcal{M}, \mathcal{N}$-pushouts are required to satisfy the Van Kampen condition only for cubes in which the vertical arrows belong to $\mathcal{M}$. 
\end{rem}

\begin{rem}\label{rem:stab2}
	As noted in Remark \ref{rem:stab1}, we do not assume that stability under pullbacks or pushouts of $\mathcal{M}$ and $\mathcal{N}$ implies the existence of those (co)limits. Indeed, if we were to make this assumption, an $\mathcal{M}, \mor(\X)$-adhesive category would have all pushouts, but this does not correspond to the definition of $\mathcal{M}$-adhesivity in the literature \cite{azzi2019essence} (see also \Cref{subsec:m-ad}).
\end{rem}

Proposition \ref{prop:pbpo} yields at once the following fact.
\begin{prop}\label{prop:pbpoad}
	If $\X$ is an $\mathcal{M},\mathcal{N}$-adhesive category, then $\mathcal{M}, \mathcal{N}$-pushouts are also pullback squares.
\end{prop}

It can be shown that the class of $\mathcal{M}, \mathcal{N}$-adhesive categories is closed under the most common categorical constructions \cite{CastelnovoGM22,Castelnovo2024simple,castelnovo2023thesis}.

\begin{lem}\label{cor:varie1}
	Let $\{\X_i\}_{i\in I}$ be a family of categories such that each $\X_i$ is $\mathcal{M}_i,\mathcal{N}_i$-adhesive. Then the product category $\prod_{i\in I}\X_i$ is $\prod_{i\in I}\mathcal{M}_i,\prod_{i\in I}\mathcal{N}_i$-adhesive, where
	\begin{align*}
		\textstyle \prod_{i\in I}\mathcal{M}_i&:= \textstyle \{(m_i)_{i\in I}\in \mathsf{Mor}(\prod_{i\in I}\X_i)\mid m_i \in \mathcal{M}_i \text{ for every } i \in I\}\\
		\textstyle \prod_{i\in I}\mathcal{N}_i&:= \textstyle \{(n_i)_{i\in I}\in \mathsf{Mor}(\prod_{i\in I}\X_i)\mid n_i \in \mathcal{N}_i \text{ for every } i \in I\}
	\end{align*}
\end{lem}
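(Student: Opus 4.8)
The plan is to verify the definition of $\mathcal{M},\mathcal{N}$-adhesivity component by component for the product category, reducing each condition to the corresponding condition in each factor $\X_i$. The crucial fact I would rely on throughout is that in a product category $\prod_{i\in I}\X_i$, limits and colimits are computed componentwise: a cone (cocone) is a limit (colimit) if and only if each of its components is a limit (colimit) in the corresponding $\X_i$. In particular, a square in $\prod_{i\in I}\X_i$ is a pullback (pushout) precisely when each of its $I$-indexed components is a pullback (pushout) in $\X_i$. This componentwise principle is what makes every clause of the definition separate cleanly over the index set.

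First I would check that $(\prod_i\mathcal{M}_i,\prod_i\mathcal{N}_i)$ is a preadhesive structure. Since a morphism $(m_i)_{i\in I}$ is a mono in the product exactly when each $m_i$ is a mono in $\X_i$, we have $\prod_i\mathcal{M}_i\subseteq\mon(\prod_i\X_i)$. The closure properties—containing isomorphisms, closure under composition and ($\mathcal{B}$-)decomposition—all hold because an isomorphism, composite, or factor in the product is exactly a componentwise isomorphism, composite, or factor, so each requirement reduces to the same requirement holding in every $\X_i$, which it does by hypothesis. Stability under pullbacks and pushouts follows from the componentwise computation of these (co)limits: if $(m_i)_i\in\prod_i\mathcal{M}_i$ and we form a pushout, each component square is a pushout in $\X_i$ with $m_i\in\mathcal{M}_i$, so stability in $\X_i$ gives the opposite leg in $\mathcal{M}_i$, and collecting these gives membership in $\prod_i\mathcal{M}_i$; the argument for $\mathcal{N}$ and for pullbacks is identical.

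Next I would verify the three adhesivity clauses. Existence of $\mathcal{M}$-pullbacks and of $\mathcal{M},\mathcal{N}$-pushouts follows by forming the required (co)limit componentwise—each factor provides the relevant pullback or pushout because $m_i\in\mathcal{M}_i$ and $n_i\in\mathcal{N}_i$—and then assembling the componentwise results into a cone (cocone) in the product, which is then the desired (co)limit by the componentwise principle. For the Van Kampen condition, I would take an $\mathcal{M},\mathcal{N}$-pushout in the product together with a commutative cube over it having pullback back and left faces. By the componentwise principle, this cube projects in each $i$ to a cube whose back and left faces are pullbacks and whose bottom is an $\mathcal{M}_i,\mathcal{N}_i$-pushout, hence a Van Kampen square in $\X_i$. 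Then the top face is a pushout in the product iff it is so in each component, and by Van Kampen in each $\X_i$ this holds iff the front and right faces are pullbacks in each component, which again by the componentwise principle is equivalent to their being pullbacks in the product. This chain of equivalences is exactly the Van Kampen condition for the product square.

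The main obstacle, to the extent there is one, is purely bookkeeping: making the componentwise translation precise and ensuring that the ``only if''/``if'' directions of the Van Kampen biconditional are handled symmetrically, since the equivalence must be threaded through the index set in both directions simultaneously. There is no genuine difficulty—the entire argument is a faithful transport of each axiom across the componentwise (co)limit isomorphism—so I would state the componentwise principle once at the outset and then dispatch each clause with a brief appeal to it, rather than spelling out every diagram. For a large or infinite index set $I$ one should note that assembling componentwise (co)limits into a genuine (co)limit in the product uses the fact that the product of categories has exactly the componentwise structure, which holds for arbitrary $I$; thus the result is insensitive to the cardinality of $I$.
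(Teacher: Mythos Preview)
Your proposal is correct and is exactly the standard componentwise argument that the paper has in mind: the paper omits the proof as straightforward (referring to \cite{CastelnovoGM22,Castelnovo2024simple,castelnovo2023thesis} and the extended version), and the expected verification is precisely the componentwise reduction you describe.
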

\begin{lem}\label{cor:varie2}
	Let $\X$ be an $\mathcal{M},\mathcal{N}$-adhesive category with all pullbacks and pushouts. Then for every other category $\Y$, the category of functors $\X^{\Y}$ is $\mathcal{M}^{\Y},\mathcal{N}^{\Y}$-adhesive, where
	\begin{align*}
		\mathcal{M}^{\Y}&:=\{\eta \in \arr{X^Y} \mid \eta_Y \in \mathcal{M} \text{ for every object } Y \text{ of } \Y \}\\
		\mathcal{N}^{\Y}&:=\{\eta \in \arr{X^Y} \mid \eta_Y \in \mathcal{N} \text{ for every object } Y \text{ of } \Y \}
	\end{align*} 
\end{lem}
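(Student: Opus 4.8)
The plan is to exploit the fundamental fact that, since $\X$ possesses all pullbacks and pushouts, limits and colimits in the functor category $\X^{\Y}$ are computed objectwise. Concretely, a commutative square in $\X^{\Y}$ is a pullback (resp.\ a pushout) if and only if its component square at every object $Y$ of $\Y$ is a pullback (resp.\ a pushout) in $\X$. I will also use that a natural transformation all of whose components are monos is itself a monomorphism, so that $\mathcal{M}^{\Y}\subseteq \mon(\X^{\Y})$; together these reductions let me transfer every structural property of $(\mathcal{M},\mathcal{N})$ to $(\mathcal{M}^{\Y},\mathcal{N}^{\Y})$ componentwise.

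First I would check that $(\mathcal{M}^{\Y},\mathcal{N}^{\Y})$ is a preadhesive structure. Each required closure property is phrased purely in terms of (co)limits and composites of arrows, and each of these operations is computed componentwise. Thus isomorphisms in $\X^{\Y}$ are exactly the pointwise isomorphisms, so both classes contain all isos; closure under composition and under ($\mathcal{M}^{\Y}$-)decomposition reduces to the corresponding conditions for $(\mathcal{M},\mathcal{N})$ applied to the components $\eta_Y$; and stability under pullbacks and pushouts follows because a (co)limit square in $\X^{\Y}$ is a pointwise (co)limit square, so membership of each component of the transferred arrow in $\mathcal{M}$ or $\mathcal{N}$ is inherited from the stability of $(\mathcal{M},\mathcal{N})$ in $\X$.

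Next I would establish the three adhesivity axioms. Existence of $\mathcal{M}^{\Y}$-pullbacks and of $\mathcal{M}^{\Y},\mathcal{N}^{\Y}$-pushouts is immediate, since $\X$ has all pullbacks and pushouts, hence so does $\X^{\Y}$, computed pointwise. The substantive point, and the main obstacle, is the Van Kampen condition for $\mathcal{M}^{\Y},\mathcal{N}^{\Y}$-pushouts, which I would again handle by passing to components. A commutative cube in $\X^{\Y}$ standing over such a pushout is the same datum as a family of commutative cubes in $\X$, one for each $Y$, natural in $Y$. Its base square is an $\mathcal{M}^{\Y},\mathcal{N}^{\Y}$-pushout, so at each $Y$ the base component is a pushout of an arrow of $\mathcal{M}$ along an arrow of $\mathcal{N}$, i.e.\ an $\mathcal{M},\mathcal{N}$-pushout in $\X$, which is Van Kampen by hypothesis.

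Finally, I would assemble the biconditional. Assuming the back and left faces of the cube in $\X^{\Y}$ are pullbacks, pointwise this yields, for every $Y$, a cube whose back and left component faces are pullbacks over a Van Kampen base; by the Van Kampen property in $\X$, the top component is a pushout if and only if the front and right components are pullbacks. Applying the pointwise characterisation of (co)limits in $\X^{\Y}$ in both directions, ``the top face is a pushout'' is equivalent to ``every top component is a pushout'', and ``the front and right faces are pullbacks'' is equivalent to ``every front and right component is a pullback''; chaining these equivalences gives exactly the Van Kampen biconditional for the cube in $\X^{\Y}$. I expect the only genuine care needed to be in recording that the pointwise (co)limit characterisation really does convert each face of the functor-category cube into the corresponding family of component faces; once that bookkeeping is in place, the $\mathcal{M},\mathcal{N}$-adhesivity of $\X$ does all the work.
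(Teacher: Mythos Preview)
Your componentwise argument is correct and is exactly the standard approach one expects for this kind of result. The paper itself does not provide a proof of this lemma: it is stated without proof and attributed to \cite{CastelnovoGM22,Castelnovo2024simple,castelnovo2023thesis}, with the remark that proofs of straightforward results are omitted and can be found in the extended version. Your reduction---pointwise computation of limits and colimits in $\X^{\Y}$, hence pointwise verification of the preadhesive structure axioms and of the Van Kampen biconditional---is precisely the argument used in those references, so there is nothing to add.
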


\begin{lem}\label{cor:varie3} Let $\catname{Y}$ be a full subcategory of an $\mathcal{M}, \mathcal{N}$-adhesive category $\catname{X}$ with all pullbacks. If $(\mathcal{M}', \mathcal{N}')$ is a preadhesive structure on $\Y$ such that $\mathcal{M}'\subseteq \mathcal{M}$, $\mathcal{N}'\subseteq \mathcal{N}$ and $\catname{Y}$ is closed in $\catname{X}$ under pullbacks and $\mathcal{M'}, \mathcal{N'}$-pushouts, then $\catname{Y}$ is $\mathcal{M}', \mathcal{N'}$-adhesive.
\end{lem}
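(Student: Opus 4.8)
The plan is to verify directly the three clauses of \Cref{def:class} for the pair $(\mathcal{M}', \mathcal{N}')$ on $\Y$, relying throughout on the standard fact that a fully faithful functor (here the inclusion $\Y\hookrightarrow\X$) reflects limits and colimits. As a preliminary I would record three ``transfer'' facts. First, since $\Y$ is full in $\X$ and closed under pullbacks, for any cospan in $\Y$ its pullback computed in $\X$ lands in $\Y$ and, by reflection of limits, is a pullback in $\Y$; combined with the uniqueness of pullbacks this yields that \emph{a square in $\Y$ is a pullback in $\Y$ if and only if it is a pullback in $\X$}. Second, by reflection of colimits, \emph{any square in $\Y$ which is a pushout in $\X$ is a pushout in $\Y$}. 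Third, combining the closure hypothesis with the first two facts, \emph{any $\mathcal{M}', \mathcal{N}'$-pushout computed in $\Y$ is also a pushout in $\X$}: the $\mathcal{M}, \mathcal{N}$-pushout of the same span exists in $\X$ (as $\mathcal{M}'\subseteq\mathcal{M}$, $\mathcal{N}'\subseteq\mathcal{N}$ and $\X$ is $\mathcal{M}, \mathcal{N}$-adhesive), lands in $\Y$ by closure, is a pushout there, and is therefore isomorphic in $\Y$ to the given square.

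With these in hand, clauses (1) and (2) are immediate. For (1), given $m\in\mathcal{M}'$ and an arbitrary $g$ in $\Y$, the pullback exists in $\X$ (which has all pullbacks), lands in $\Y$ by closure, and is an $\mathcal{M}'$-pullback in $\Y$ by the first transfer fact. For (2), given $m\in\mathcal{M}'\subseteq\mathcal{M}$ and $n\in\mathcal{N}'\subseteq\mathcal{N}$, the $\mathcal{M}, \mathcal{N}$-pushout exists in $\X$, lands in $\Y$ by closure, and is a pushout in $\Y$ by the second transfer fact.

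The substance lies in clause (3): every $\mathcal{M}', \mathcal{N}'$-pushout is Van Kampen in $\Y$. Fixing such a square $S$, the third transfer fact makes it an $\mathcal{M}, \mathcal{N}$-pushout in $\X$, hence Van Kampen there. Given a cube in $\Y$ with $S$ as bottom face and pullbacks (in $\Y$) as back and left faces, the first transfer fact renders these two faces pullbacks in $\X$ as well. The crucial observation is that the top face is again an $\mathcal{M}'$-$\mathcal{N}'$ span: since $(\mathcal{M}', \mathcal{N}')$ is a preadhesive structure on $\Y$, both classes are stable under pullbacks \emph{in} $\Y$, so pulling the $\mathcal{M}'$-leg of $S$ back along the left face and its $\mathcal{N}'$-leg back along the back face keeps them in $\mathcal{M}'$ and $\mathcal{N}'$. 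Consequently, if the top face is a pushout in $\Y$ it is an $\mathcal{M}', \mathcal{N}'$-pushout, hence a pushout in $\X$ by the third transfer fact; the Van Kampen property of $S$ in $\X$ then forces the front and right faces to be pullbacks in $\X$, and the first transfer fact brings them back to pullbacks in $\Y$. Conversely, if the front and right faces are pullbacks in $\Y$, they are pullbacks in $\X$, the Van Kampen property of $S$ in $\X$ makes the top face a pushout in $\X$, and the second transfer fact turns it into a pushout in $\Y$.

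I expect the main obstacle to be the forward (``only if'') half of the Van Kampen condition, where the top face must be promoted from a pushout in $\Y$ to a pushout in $\X$ before the Van Kampen property of $\X$ can be invoked; this is precisely where the identification of the top face as an $\mathcal{M}', \mathcal{N}'$-pushout (via stability of $\mathcal{M}'$ and $\mathcal{N}'$ under pullbacks inside $\Y$) together with the closure hypothesis is indispensable. The backward half, by contrast, uses only reflection of colimits along the full inclusion.
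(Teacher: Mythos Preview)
Your proposal is correct and is precisely the standard argument: the paper itself omits the proof of this lemma as routine, referring to \cite{CastelnovoGM22,Castelnovo2024simple,castelnovo2023thesis} for details, and your direct verification via the three ``transfer'' facts is exactly what one expects. The only point worth watching is the forward half of the Van Kampen condition, which you have handled correctly by observing that pullback stability of $\mathcal{M}'$ and $\mathcal{N}'$ \emph{in} $\Y$ forces the top span to be an $\mathcal{M}',\mathcal{N}'$-span, so that the closure hypothesis promotes the top face to a pushout in $\X$.
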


\begin{lem}\label{comma} 
	Let $\catname{X}$ and $\catname{Y}$ be respectively $\mathcal{M},\mathcal{N}$-adhesive and $\mathcal{M}',\mathcal{N}'$-adhesive categories with all pushouts and pullbacks. Let also   $L:\catname{X}\rightarrow \catname{Z}$ be a functor that preserves $\mathcal{M}, \mathcal{N}$-pushouts, and  $R:\catname{Y}\rightarrow \catname{Z}$ a functor which preserves pullbacks. Then $\comma{L}{R}$ is $\cma{M}{M'}, \cma{N}{N'}$-adhesive, where 
	\begin{align*}
		\cma{M}{M}'&:=\{(h,k)\in \mathsf{Mor}(\comma{L}{R}) \mid h\in \mathcal{M}, k\in \mathcal{M}'\}\\
		\cma{N}{N}&:=\{(h,k)\in \mathsf{Mor}(\comma{L}{R}) \mid h\in \mathcal{N}, k\in \mathcal{N}'\}
	\end{align*}
\end{lem}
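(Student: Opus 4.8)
The plan is to verify the conditions of \Cref{def:class} for the pair $(\cma{M}{M'},\cma{N}{N'})$ on $\comma{L}{R}$ by transporting each of them across the two projection functors $U\colon\comma{L}{R}\to\X$ and $V\colon\comma{L}{R}\to\Y$, which send an object $(X,Y,\alpha)$, with $\alpha\colon LX\to RY$, to $X$ and to $Y$. Everything rests on two componentwise computations. Since $\X$ and $\Y$ have pullbacks and $R$ preserves them, $U$ and $V$ jointly create pullbacks: one forms the pullbacks in the two coordinates, and the fact that $R$ carries the $\Y$-pullback to the pullback of the image cospan produces, by its universal property, the unique structure map needed to assemble a pullback in $\comma{L}{R}$; hence a square there is a pullback exactly when both its projections are, and in particular $\cma{M}{M'}$-pullbacks exist. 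Dually, given $(m,m')\in\cma{M}{M'}$ and $(n,n')\in\cma{N}{N'}$ out of a common object, I would form the $\mathcal{M},\mathcal{N}$-pushout in $\X$ and the $\mathcal{M}',\mathcal{N}'$-pushout in $\Y$; since $L$ preserves the former, pasting the structure maps with the $R$-images of the two $\Y$-injections yields a cocone whose mediating arrow is the desired structure morphism, and a check of the universal property shows the result is the $\cma{M}{M'},\cma{N}{N'}$-pushout. This gives a recognition principle: a square of $\comma{L}{R}$ whose edges lie in the distinguished classes is such a pushout if and only if both of its projections are pushouts.

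Next I would confirm that $(\cma{M}{M'},\cma{N}{N'})$ is a preadhesive structure. Containment of isomorphisms and closure under composition, decomposition and $\cma{M}{M'}$-decomposition are tested coordinatewise and follow at once from the corresponding properties of $(\mathcal{M},\mathcal{N})$ and $(\mathcal{M}',\mathcal{N}')$. Stability under pullbacks is read directly off the componentwise description above together with the pullback-stability of the four classes, while stability under pushouts reduces in the same manner to the base categories, the relevant pushouts being exactly the componentwise ones created by $U$ and $V$.

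The heart of the proof, and the step I expect to be the main obstacle, is that every $\cma{M}{M'},\cma{N}{N'}$-pushout is Van Kampen. Given a cube over such a pushout with pullback back and left faces, applying $U$ and $V$ produces cubes in $\X$ and $\Y$ whose back and left faces are pullbacks and whose bottom faces are the Van Kampen $\mathcal{M},\mathcal{N}$- and $\mathcal{M}',\mathcal{N}'$-pushouts. The decisive observation is that, since the four classes are stable under pullbacks, pulling the bottom edges back along the vertical maps forces the edges of the top face to lie again in $\mathcal{M},\mathcal{N}$ in $\X$ and in $\mathcal{M}',\mathcal{N}'$ in $\Y$, so the top is once more a square in the distinguished classes. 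For the ``if'' half, if the front and right faces are pullbacks in $\comma{L}{R}$ they are so componentwise, the Van Kampen property in $\X$ and $\Y$ makes the top a pushout in each coordinate, and the recognition principle---available precisely because $L$ preserves $\mathcal{M},\mathcal{N}$-pushouts---lifts it to a pushout in $\comma{L}{R}$. For the ``only if'' half, a top face that is a pushout in $\comma{L}{R}$ is, having distinguished edges, an $\cma{M}{M'},\cma{N}{N'}$-pushout and hence componentwise a pushout; the Van Kampen property in $\X$ and $\Y$ then makes the front and right faces pullbacks in both coordinates, and the componentwise characterisation of pullbacks carries them back to $\comma{L}{R}$. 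The genuine subtlety is exactly this asymmetry: pullbacks are created by the projections unconditionally through $R$, whereas only the distinguished pushouts are created through $L$, so the whole reduction depends on first checking, via pullback-stability, that the top face of the cube never leaves the distinguished classes.
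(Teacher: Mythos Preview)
Your componentwise reduction via the projections $U$ and $V$ is correct and is exactly the standard argument; the paper itself omits the proof of this lemma, referring instead to \cite{CastelnovoGM22,Castelnovo2024simple,castelnovo2023thesis}, where the same strategy is carried out. The one point you might sharpen is the verification that $\cma{M}{M'}$ and $\cma{N}{N'}$ are stable under pushouts: your phrase ``the relevant pushouts being exactly the componentwise ones'' is doing real work, since an arbitrary pushout in $\comma{L}{R}$ need not be created componentwise when $L$ fails to preserve the underlying $\X$-pushout, and the definition of preadhesive structure in \Cref{def:class} asks for stability under \emph{every} pushout that happens to exist---but this is a minor wrinkle that does not affect the substance of the argument.
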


These three results, in turn, allow us to provide examples of graphical and hypergraphical $\mathcal{M}, \mathcal{N}$-adhesive categories (see again \cite{CastelnovoGM22,Castelnovo2024simple} for details and other examples).
\begin{exa} Take the category $\gr$ of graphs and let $\dgr$ and $\dg$ be its full subcategories of given by simple graphs and directed acyclic graphs.  We say that a monomorphism $(f,g):\mathcal{{G}}\rightarrow \mathcal{{H}}$ in $\gr$ \emph{downward closed} if, for all $e\in E_\mathcal{{H}}$,  $t_\mathcal{H}(e) \in g(V_\mathcal{G})$ implies $e \in f(E_\mathcal{G})$. We denote by $\rta$, $\rt$ and $\rtd$ the classes of downward closed morphisms in $\gr$, $\dgr$ and $\dg$ respectively. Then we have:
	\begin{itemize}
		\item $\dgr$  is $\reg(\dgr), \mon(\dgr)$-adhesive and $\mon(\dgr),\reg(\dgr)$-adhesive;
		\item $\dg$ is $\rtd, \mon(\dg)$-adhesive.
	\end{itemize}
\end{exa}

\begin{exa}\label{ex:1}Let  $\catname{Tree}$ be the category of \emph{tree orders}, i.e. partial orders $(E, \leq)$ such that for every $e\in E$, the set
	$\pred{e}:=\{e'\in E \mid e' \leq e\}$	is  finite and totally ordered by the restriction of $\leq$.  The category $\catname{HGraph}$ of \emph{hierarchical graphs} \cite{palacz2004algebraic} is defined as the comma category $\comma{L}{R}$ where 
	$L:\catname{Tree}\to \Set$ is the forgetful functor and $R:\Set\to \Set$ is the functor sending $X\mapsto X\times X$.
	Then, $\catname{HGraph}$ is adhesive. 
\end{exa}

\begin{exa}\label{ex:2} More generally, if we want to use simple or directed acyclic graphs to model the hierarchy between edges, we can define the categories of \emph{$\dgr$-graphs} respectively \emph{$\dg$-graphs} as the comma categories $\comma{L}{R}$ where
	\begin{itemize}
	\item $L$ is one of the two forgetful functors $U_\dgr: \dgr \to \Set$ or $U_\dg:\dg\to \Set$;
	\item $R:\Set\to \Set$ is the functor sending $X\mapsto X\times X$.
	\end{itemize}
Then we get the following results:
\begin{itemize}
	\item 	 the categories of $\dgr$-graphs  is $\mathcal{M}, \mathcal{N}$-adhesive, where
	\begin{align*}
		\mathcal{M}&:=\{((h_1,h_2), k)\in \comma {U_\dgr}{R}\mid (h_1,h_2)\in \reg(\dgr), k \in \mon(Set)\}\\
		\mathcal{N}&:=\{((h_1,h_2), k)\in \comma {U_\dgr}{R}\mid (h_1,h_2)\in \mon(\dgr)\}
	\end{align*}

	\item the categories of $\dg$-graphs is $\mathcal{M}, \mathcal{N}$-adhesive where	\begin{align*}
		\mathcal{M}&:=\{((h_1,h_2), k)\in \comma {U_\dg}{R}\mid (h_1,h_2)\in \rtd, k \in \mon(\Set)\}\\
		\mathcal{N}&:=\{((h_1,h_2), k)\in \comma {U_\dg}{R}\mid (h_1,h_2)\in \mon(\dg)\}
	\end{align*}
\end{itemize}
\end{exa}

\begin{exa}We can further modify Examples \ref{ex:1} and \ref{ex:2} to get hypergraphical versions of it: it is enough to replace the functor $\Set\to \Set$ sending $X$ to $X \times X$, with the functor sending $X$ to $X^\star \times X^\star$, where $(-)^\star:\Set \to \Set$ is the Kleene star. 
\end{exa}

\subsection{Relation with $\mathcal{M}$-adhesivity}\label{subsec:m-ad}
We will end this section proving that, under suitable hypothesis, $\mathcal{M}, \mathcal{N}$-adhesivity subsumes  
$\mathcal{M}$-adhesivity as defined in \cite{azzi2019essence}.

\begin{defi}
 Let $\X$ be a category, a \emph{stable system of monos}  $\mathcal{M}$ is a class of monomorphisms  closed under composition, containing all isomorphisms and stable under pullbacks.
\end{defi}

\begin{lem}\label{lem:deco}Let $f:X\to Y$ be an arrow in a category $\X$ equipped of a stable system of monos  $\mathcal{M}$. For any mono $m:Y\to Z$, if $m\circ f \in\mathcal{M}$ then $f\in \mathcal{M}$.
\end{lem}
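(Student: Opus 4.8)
The plan is to realise $f$ as the pullback of the arrow $m\circ f$, which by hypothesis lies in $\mathcal{M}$, and then to invoke the fact that a stable system of monos is, by definition, stable under pullbacks. Note first that, since $\mathcal{M}\subseteq\mon(\X)$, the conclusion $f\in\mathcal{M}$ will in particular force $f$ to be a monomorphism; this is consistent with the observation that $m\circ f$ being monic already makes $f$ monic, so no information is lost, but we shall not need this remark for the argument itself.

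The key construction is the square
$$\xymatrix@R=16pt{X \ar[r]^{\id{X}} \ar[d]_{f}& X \ar[d]^{m\circ f}\\ Y \ar[r]_{m}& Z}$$
which commutes since $(m\circ f)\circ\id{X}=m\circ f$. I claim it is a pullback of the cospan $Y\xrightarrow{m}Z\xleftarrow{m\circ f}X$. Indeed, given $a\colon W\to X$ and $b\colon W\to Y$ with $m\circ b=(m\circ f)\circ a$, the equality rewrites as $m\circ b=m\circ(f\circ a)$, so monicity of $m$ gives $b=f\circ a$; hence $a$ is a mediating arrow, and it is the unique one because the top edge of the square is $\id{X}$. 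This verification is the only point at which the hypothesis that $m$ is a mono enters, and it is the crux of the lemma.

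Having established the square as a pullback, its right-hand vertical arrow is $m\circ f\in\mathcal{M}$, while its left-hand vertical arrow is exactly $f$. Since $\mathcal{M}$ is stable under pullbacks, the latter must also belong to $\mathcal{M}$, which is the desired conclusion. The one genuinely delicate choice in this proof is the orientation of the square: one must place $\id{X}$ on the top edge so that $m\circ f$ sits as a \emph{vertical} arrow whose pullback is $f$; the opposite orientation, with $m\circ f$ as the base, leaves $f$ as a horizontal edge and yields no information about it. Getting this orientation right, and exploiting monicity of $m$ in the universal property, is what makes the argument go through.
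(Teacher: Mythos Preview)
Your argument is correct: the square
\[\xymatrix@R=16pt{X \ar[r]^{\id{X}} \ar[d]_{f}& X \ar[d]^{m\circ f}\\ Y \ar[r]_{m}& Z}\]
is indeed a pullback (monicity of $m$ is exactly what forces $b=f\circ a$ in the universal property), and pullback-stability of $\mathcal{M}$ then yields $f\in\mathcal{M}$. The paper omits the proof as straightforward, but this is precisely the standard argument one would supply; there is no alternative route worth contrasting.
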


\begin{defiC} [\cite{azzi2019essence}] Given a stable system of monos $\mathcal{M}$ on a category $\X$, we say that $\X$ is \emph{$\mathcal{M}$-adhesive} if 
	\begin{enumerate}
		\item 	it has $\mathcal{M}$-pullbacks;
		\item  for every $m:X\to Y$ in $\mathcal{M}$ and for any other arrow $f:X\to Z$, a pushout square 
		\[\xymatrix@R=16pt{X \ar[r]^{f} \ar[d]_{m}& Z\ar[d]^{n}\\ Y \ar[r]^{g} & Q}\]
		exists and it is a Van Kampen square. 
	\end{enumerate}
\end{defiC}
\begin{rem}\label{rem:diff2}  We will stick to the notion of
	$\mathcal{M}$-adhesivity as defined in \cite{azzi2019essence}, as noted in Remark \ref{rem:diff}. 
	Other authors have introduced weaker notions of $\mathcal{M}$-adhesivity, where the Van Kampen condition is required to hold only for some cubes; see, e.g.,  \cite{behr2022fundamentals,ehrig2006fundamentals,ehrig2014adhesive,ehrig2004adhesive,sobocinski2020rule}, where our $\mathcal{M}$-adhesive categories are called \emph{adhesive HLR categories}.
	
	On the other hand,  in \cite{azzi2019essence} no requirement about the existence of pullbacks or $\mathcal{M}$-pullbacks is made, while in \cite{garner2012axioms,johnstone2007quasitoposes,lack2005adhesive} adhesive and quasiadhesive categories are required to have all pullbacks. Mimicking the definition of $\mathcal{M}, \mathcal{N}$-adhesivity,  for us an $\mathcal{M}$-adhesive category  must have $\mathcal{M}$-pullbacks. 
\end{rem}

\begin{prop}\label{prop:po}
	Let $\X$ be an $\mathcal{M}$-adhesive category and suppose that every split mono is in  $\mathcal{M}$, then $\mathcal{M}$ is stable under pushouts.
\end{prop}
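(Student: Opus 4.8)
The plan is to observe that this proposition is an immediate consequence of \cref{lem:stab}, so the entire task reduces to checking that the three hypotheses of that lemma are in force under the assumptions of the statement. First I would unwind the definitions: by hypothesis $\X$ is $\mathcal{M}$-adhesive, and an $\mathcal{M}$-adhesive category is by definition equipped with a \emph{stable system of monos} $\mathcal{M}$. In particular $\mathcal{M}$ is a class of monomorphisms that is stable under pullbacks, which is exactly the first hypothesis required by \cref{lem:stab}.

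Next I would note that the extra assumption of the present proposition, namely that every split monomorphism lies in $\mathcal{M}$, is precisely the second hypothesis of \cref{lem:stab}. For the third, I would invoke the defining clause of $\mathcal{M}$-adhesivity: for every $m\in\mathcal{M}$ and every arrow $f$ with the same domain, a pushout of $m$ along $f$ exists and is a Van Kampen square. Rephrased, this says that pushouts along arrows of $\mathcal{M}$ exist and are Van Kampen, which is the remaining hypothesis. With all three verified, I would simply apply \cref{lem:stab} to conclude that $\mathcal{M}$ is stable under pushouts.

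I do not expect any genuine obstacle at the level of this proposition, since all of the categorical work is packaged inside \cref{lem:stab}, whose proof is where one actually builds a suitable Van Kampen cube exhibiting the pushed-out copy of an $\mathcal{M}$-map again as a member of $\mathcal{M}$, using stability under pullbacks together with the presence of split monos. The one point I would be careful to check when matching hypotheses is the weak existence assumption flagged in \cref{rem:stab1} and \cref{rem:stab2}: $\mathcal{M}$-adhesivity only guarantees $\mathcal{M}$-pullbacks and pushouts along $\mathcal{M}$-maps, not all (co)limits. This is, however, exactly the level of generality at which \cref{lem:stab} is stated, so no stronger existence hypothesis is needed and the application is clean.
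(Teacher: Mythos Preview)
Your proposal is correct and matches the paper's intended approach: the proposition is placed so as to follow immediately from \cref{lem:stab}, and you have verified each of its hypotheses exactly as required. There is nothing further to add.
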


\begin{exa}\label{ex:ad}The first, and fundamental, example is when $\mathcal{M}$ is the class of all monomorphisms: in this case $\mathcal{M}$-adhesivity is simply called \emph{adhesivity}.
\end{exa}

One would weaken the previous example using regular monos instead of simple monomorphisms. The problem is that $\reg(\X)$ is not in general closed under composition. Anyway, the Van Kampen property and the existence of $\reg(\X)$-pushouts ensure the closure of regular monos under composition.

\begin{prop}\label{prop:qad}
	For $\X$ a category with $\reg(\X)$-pullbacks, the following are equivalent:
	\begin{enumerate}
		\item $\reg(\X)$ is a stable system of monos and $\X$ is $\reg(\X)$-adhesive;
		\item pushouts along regular monos exist and are Van Kampen.
	\end{enumerate}
\end{prop}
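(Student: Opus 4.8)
The plan is to treat the two implications separately; $(1)\Rightarrow(2)$ is a matter of unwinding definitions, whereas $(2)\Rightarrow(1)$ carries all the content. For $(1)\Rightarrow(2)$, assume $\reg(\X)$ is a stable system of monos and $\X$ is $\reg(\X)$-adhesive. Taking $\mathcal{M}=\reg(\X)$ in the definition of $\mathcal{M}$-adhesivity, its second clause states precisely that for every regular mono $m$ and every arrow $f$ a pushout of $m$ along $f$ exists and is Van Kampen, which is exactly (2).

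For $(2)\Rightarrow(1)$, since $\X$ already has $\reg(\X)$-pullbacks by hypothesis and Van Kampen pushouts along regular monos by (2), it suffices to show that $\reg(\X)$ is a stable system of monos: then $\reg(\X)$-adhesivity follows at once. Containment of all isomorphisms is clear. Stability under pullbacks is the general fact that pullbacks of equalizers are equalizers: if $m\colon A\to B$ is the equalizer of $u,v\colon B\to D$ and $g\colon C\to B$ is arbitrary, the pullback of $m$ along $g$ exists by the $\reg(\X)$-pullback hypothesis, and its leg over $C$ is readily checked to be the equalizer of $u\circ g,v\circ g$, hence regular. The real work, and the point flagged in the paragraph preceding the statement, is closure under composition.

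To prove that a composite $n\circ m$ of regular monos $m\colon A\to B$ and $n\colon B\to C$ is regular, I would realise it as an equalizer built from the cokernel pair of $m$, following the approach of the quasiadhesive case in \cite{garner2012axioms}. First form the cokernel pair $i_1,i_2\colon B\to P$ of $m$ (the pushout of $m$ along itself, which exists by (2)); since $m$ is regular it is the equalizer of $i_1,i_2$. The obstacle is that $i_1,i_2$ are not monos, so one cannot push them out directly; instead I push out the regular mono $n$ twice. Pushing $n$ out along $i_1$ yields a square with new legs $k\colon P\to R$ and $j_1\colon C\to R$, which by (2) exists and is Van Kampen, hence by \Cref{prop:pbpo} is also a pullback with $k$ a monomorphism. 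Pushing $n$ out along $k\circ i_2\colon B\to R$ yields legs $l\colon R\to S$ and $j_2\colon C\to S$, again a Van Kampen pushout and so a pullback. I then claim $n\circ m$ is the equalizer of $l\circ j_1,\ j_2\colon C\to S$: that $n\circ m$ equalizes this pair is a direct computation from $i_1\circ m=i_2\circ m$, while conversely any $h$ with $l\circ j_1\circ h=j_2\circ h$ factors uniquely as $h=n\circ w$ through the second (pullback) square, and the resulting $w$ satisfies $k\circ i_1\circ w=k\circ i_2\circ w$, so $i_1\circ w=i_2\circ w$ because $k$ is monic, whence $w$ factors through $m$ and $h$ through $n\circ m$; uniqueness is immediate since $n\circ m$ is monic.

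The main obstacle is exactly this composition step: regularity is a universal property that is not visibly stable under composition, and the only pushouts at our disposal are those along regular monos. The key leverage is \Cref{prop:pbpo}, which turns the Van Kampen pushouts along the mono $n$ into pullbacks and, crucially, guarantees that the transferred leg $k$ is again a mono, precisely what is needed to descend the equalizer property of $m$ along $n$.
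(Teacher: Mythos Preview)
Your proof is correct. The paper omits its own proof of this proposition (deferring to the extended version), so there is nothing to compare against directly; your argument for $(2)\Rightarrow(1)$---pullback stability of equalizers plus the two-step pushout construction along $n$ to exhibit $n\circ m$ as an equalizer, using \Cref{prop:pbpo} to get that $k$ is monic---is precisely the standard route (as in \cite{garner2012axioms}) and is what one would expect the extended version to contain.
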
 
%

\begin{rem}
A category with pullbacks and pushouts along regular monos and in which such pushouts are Van Kampen is what in the literature is usually called a \emph{quasiadhesive category}, a notable exception is \cite{garner2012axioms}, in which \emph{rm-adhesive} is used. 
\end{rem}
\begin{lem}\label{lem:mad}
	Let $\mathcal{M}$ be a stable system of monos in a category $\X$ which is also stable under pushouts, then the following are equivalent: 
		\begin{enumerate}
		\item $\X$ is $\mathcal{M}$-adhesive;
		\item  $\X$ is $\mathcal{M}, \mor(\X)$-adhesive.
	\end{enumerate}
\end{lem}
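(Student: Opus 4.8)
The plan is to recognise that, once the right preadhesive structure has been identified, the two notions are almost definitionally the same; the only genuine work is checking that $(\mathcal{M},\mor(\X))$ is a preadhesive structure in the sense of Definition~\ref{def:class}, since otherwise the phrase ``$\mathcal{M},\mor(\X)$-adhesive'' would not even be defined.

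So first I would verify the preadhesive axioms for the pair $(\mathcal{M},\mathcal{N})$ with $\mathcal{N}:=\mor(\X)$. For $\mathcal{N}=\mor(\X)$ every requirement is automatic: it contains all isomorphisms, and since every arrow lies in it, closure under composition, under decomposition, under $\mathcal{M}$-decomposition, and stability under pullbacks and pushouts all hold trivially. For $\mathcal{M}$, being a stable system of monos supplies containment of isomorphisms, closure under composition and stability under pullbacks, while stability under pushouts is exactly the extra hypothesis of the lemma. The one axiom needing an argument is closure of $\mathcal{M}$ under decomposition, i.e.\ that $g\circ f\in\mathcal{M}$ and $g\in\mathcal{M}$ force $f\in\mathcal{M}$; since $g\in\mathcal{M}$ is in particular a mono, this is precisely Lemma~\ref{lem:deco}.

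With the preadhesive structure in place, both statements now concern the same category $\X$ equipped with $(\mathcal{M},\mor(\X))$, and I would prove the equivalence by comparing the two definitions clause by clause. Both notions demand exactly the existence of $\mathcal{M}$-pullbacks. For the pushouts, $\mathcal{M}$-adhesivity asks that for every $m\in\mathcal{M}$ and every arrow $f$ a pushout exist and be Van Kampen; since $\mathcal{N}=\mor(\X)$ is the class of all arrows, $f\in\mathcal{N}$, so this is literally the conjunction of the $\mathcal{M},\mor(\X)$-pushout existence clause and the Van Kampen clause of $\mathcal{M},\mor(\X)$-adhesivity. Conversely, every $\mathcal{M},\mor(\X)$-pushout is a pushout of some $m\in\mathcal{M}$ along an arbitrary arrow, so the two Van Kampen requirements coincide, yielding $(1)\Leftrightarrow(2)$.

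The argument has no real obstacle beyond this bookkeeping: the only non-formal step is the appeal to Lemma~\ref{lem:deco} to obtain closure of $\mathcal{M}$ under decomposition, which is what certifies $(\mathcal{M},\mor(\X))$ as a legitimate preadhesive structure and hence makes the notion of $\mathcal{M},\mor(\X)$-adhesivity meaningful in the first place.
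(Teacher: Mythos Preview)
Your proposal is correct and follows essentially the approach intended by the paper: the placement of Lemma~\ref{lem:deco} immediately before the definition of $\mathcal{M}$-adhesivity signals that its role is precisely to supply the closure of $\mathcal{M}$ under decomposition needed for $(\mathcal{M},\mor(\X))$ to be a preadhesive structure, after which the two definitions coincide clause by clause exactly as you describe.
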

%

We can apply Proposition \ref{prop:po} to get the following corollary at once.

\begin{cor}\label{cor:mad}	
	Let $\mathcal{M}$ be a stable system of monos in a category $\X$ and suppose that it contains all split monomorphisms., then the following are equivalent: 
	\begin{enumerate}
		\item $\X$ is $\mathcal{M}$-adhesive;
		\item  $\X$ is $\mathcal{M}, \mor(\X)$-adhesive.
	\end{enumerate}
\end{cor}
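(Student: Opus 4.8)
The plan is to deduce the corollary directly from Lemma~\ref{lem:mad}, whose hypotheses coincide with ours except that it demands $\mathcal{M}$ be stable under pushouts in place of our assumption that $\mathcal{M}$ contain all split monos. So the whole task reduces to extracting the statement ``$\mathcal{M}$ is stable under pushouts'' from the split-mono hypothesis, separately in each of the two directions of the equivalence; once this is in hand, Lemma~\ref{lem:mad} applies verbatim and supplies the rest, which is exactly why the result can be claimed ``at once''.

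For $(1)\Rightarrow(2)$ I would argue as follows. Assume $\X$ is $\mathcal{M}$-adhesive. Since by hypothesis every split mono lies in $\mathcal{M}$, Proposition~\ref{prop:po} applies and tells us that $\mathcal{M}$ is stable under pushouts. Thus $\mathcal{M}$ is a stable system of monos that is moreover stable under pushouts, so the hypotheses of Lemma~\ref{lem:mad} are met, and its equivalence gives that $\X$ is $\mathcal{M}, \mor(\X)$-adhesive. For $(2)\Rightarrow(1)$ the split-mono assumption is not even needed: if $\X$ is $\mathcal{M}, \mor(\X)$-adhesive then, by Definition~\ref{def:class}, the pair $(\mathcal{M}, \mor(\X))$ is already a preadhesive structure, so condition~(3) of that definition states directly that $\mathcal{M}$ is stable under pushouts. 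Again the hypotheses of Lemma~\ref{lem:mad} hold, and its equivalence returns $\mathcal{M}$-adhesivity.

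The only genuine subtlety worth recording is that ``$\mathcal{M}$ stable under pushouts'' has a different source on each side: Proposition~\ref{prop:po} (which is where the split-mono hypothesis is actually consumed) for the forward implication, and the preadhesive-structure axiom for the backward one. I expect no further obstacle, since all the substantive work is already packaged inside Lemma~\ref{lem:mad} and Proposition~\ref{prop:po}, and the corollary is essentially a bookkeeping step that records how the split-mono condition lets one drop the standing ``stable under pushouts'' hypothesis of Lemma~\ref{lem:mad}.
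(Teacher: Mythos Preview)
Your proposal is correct and matches the paper's approach exactly: the paper says only that the corollary follows ``at once'' from Proposition~\ref{prop:po}, which is precisely your strategy of using Proposition~\ref{prop:po} (for $(1)\Rightarrow(2)$) and the preadhesive-structure axioms (for $(2)\Rightarrow(1)$) to feed the pushout-stability hypothesis into Lemma~\ref{lem:mad}.
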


If we specialize the previous results to the classes of monos and regular monos we get the following.
\begin{cor}\label{cor:equi}
	A category $\X$ is adhesive if and only if it is $\mon(\X), \mor(\X)$-adhesive and it is quasiadhesive if and only if it is $\reg(\X), \mor(\X)$-adhesive).
\end{cor}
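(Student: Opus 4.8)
The plan is to reduce both equivalences to the earlier characterisations, exploiting that both $\mon(\X)$ and $\reg(\X)$ contain all split monomorphisms, so that Corollary~\ref{cor:mad} becomes applicable once we know the relevant class is a stable system of monos.

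For the adhesive case, recall from Example~\ref{ex:ad} that adhesivity is by definition $\mon(\X)$-adhesivity. Now $\mon(\X)$ is a stable system of monos (monomorphisms are closed under composition, contain every isomorphism, and are stable under pullbacks), and trivially every split mono is a mono. Hence Corollary~\ref{cor:mad} applies verbatim with $\mathcal{M}=\mon(\X)$ and yields that $\X$ is $\mon(\X)$-adhesive if and only if it is $\mon(\X),\mor(\X)$-adhesive. Chaining this with Example~\ref{ex:ad} gives the first claim.

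For the quasiadhesive case, I would first unfold the definition: $\X$ is quasiadhesive exactly when it has $\reg(\X)$-pullbacks and pushouts along regular monos exist and are Van Kampen. By Proposition~\ref{prop:qad} this is equivalent to the conjunction ``$\reg(\X)$ is a stable system of monos and $\X$ is $\reg(\X)$-adhesive''. Since every split mono is a regular mono, $\reg(\X)$ contains all split monos; so, once we know $\reg(\X)$ is a stable system of monos, Corollary~\ref{cor:mad} (now with $\mathcal{M}=\reg(\X)$) turns $\reg(\X)$-adhesivity into $\reg(\X),\mor(\X)$-adhesivity. Conversely, if $\X$ is $\reg(\X),\mor(\X)$-adhesive then $(\reg(\X),\mor(\X))$ is a preadhesive structure, so the clause ``$\reg(\X)$ is a stable system of monos'' comes for free; the two conjunctions therefore collapse to the single condition of being $\reg(\X),\mor(\X)$-adhesive, and stringing together the equivalences yields the second claim.

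The one genuine subtlety, and the point where all the real content sits, is the quasiadhesive direction: unlike $\mon(\X)$, the class $\reg(\X)$ need not be closed under composition, hence is not automatically a stable system of monos, and without this Corollary~\ref{cor:mad} cannot even be invoked. This is precisely what Proposition~\ref{prop:qad} supplies, extracting closure of regular monos under composition from the Van Kampen property of pushouts along them; everything else is a formal chaining of Example~\ref{ex:ad}, Proposition~\ref{prop:qad}, and Corollary~\ref{cor:mad}. Throughout I would keep the convention of Remark~\ref{rem:diff2} in force, namely that here (quasi)adhesive categories are only required to possess the relevant $\mon(\X)$- (resp. $\reg(\X)$-) pullbacks, so that the pullback hypotheses on the two sides of each equivalence line up exactly.
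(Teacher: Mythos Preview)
Your proposal is correct and follows the same approach as the paper, which simply states that the corollary is obtained by specializing the previous results (Lemma~\ref{lem:mad}, Corollary~\ref{cor:mad}, and implicitly Proposition~\ref{prop:qad}) to $\mathcal{M}=\mon(\X)$ and $\mathcal{M}=\reg(\X)$. You have merely made explicit the one nontrivial step the paper leaves tacit, namely that Proposition~\ref{prop:qad} is needed to secure that $\reg(\X)$ is a stable system of monos before Corollary~\ref{cor:mad} can be invoked.
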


 
\section{$\mathcal{N}$-(pre)adhesive morphisms} \label{sec:n-pread-mor}
 In \cite{garner2012axioms,johnstone2007quasitoposes} a criterion to establish quasiadhesivity is provided, involving the closure of regular monos under unions. The aim of this section is to adapt those results to the setting of $\mathcal{M}, \mathcal{N}$-adhesivity.

The first step that we need to take is to generalize the notions of \emph{adhesive} and \emph{preadhesive morphism} provided in \cite{garner2012axioms}.

\begin{defi}\label{def:ade}
	Given a class $\mathcal{N}$ of arrows of a category $\X$, we say that $\mathcal{N}$ is a \emph{matching class} if
	\begin{enumerate}
		\item it contains all isomorphisms;
		\item is closed under composition and decomposition;
		\item is stable under pullbacks and pushouts.
	\end{enumerate}

Given a matching class $\mathcal{N}$,  a morphism $m:X \to Y$  in $\X$ is \emph{$\mathcal{N}$-preadhesive} if for every $n:X\to Z$ in $\mathcal{N}$, a stable pushout square
	\[\xymatrix@R=16pt{X\ar[r]^{n} \ar[d]_{m} & Z \ar[d]^{p} \\ Y \ar[r]_{q}  & W}\]	
	exists and it is also a pullback for $p$ along $q$. The arrow $m$ will be called \emph{$\mathcal{N}$-adhesive} if all its pullbacks are $\mathcal{N}$-preadhesive.

	We will denote by $\pad{N}$ and by $\ad{N}$ the classes of, respectively, $\mathcal{N}$-preadhesive and $\mathcal{N}$-adhesive morphisms. 
\end{defi}
\begin{nota}Instead of ``$\mor(\X)$-(pre)adhesive'' we will use ``(pre)adhesive''. 
\end{nota}

\begin{exa}\label{ex:ade}
	If $\X$ is an $\mathcal{M}, \mathcal{N}$-adhesive category then $\mathcal{N}$ is a matching class. Moreover, $\mathcal{M}, \mathcal{N}$-pushouts are Van Kampen squares, so every $m\in \mathcal{M}$ is preadhesive. Since $\mathcal{M}$ is closed under pullback this implies that every arrow in $\mathcal{M}$ is also adhesive.
\end{exa}

The following proposition collects some useful facts about $\mathcal{N}$-(pre)adhesive morphisms.
\begin{prop}\label{prop:dec}Let $\mathcal{N}$ be a matching class on a category $\X$, then the following hold true:
	\begin{enumerate}
		\item if $m$ is $\mathcal{N}$-adhesive then it is $\mathcal{N}$-preadhesive;
		\item every isomorphism is $\mathcal{N}$-adhesive;
		\item if $n\in \mathcal{N}$ is $\mathcal{N}$-preadhesive then it is a regular mono;
		\item the class $\pad{N}$ is closed under composition;
		\item $\ad{N}$ is stable under pullbacks;
		\item if $\X$ has pullbacks along $\mathcal{N}$-adhesive arrows, then $\ad{N}$ is closed under composition.
	\end{enumerate}
\end{prop}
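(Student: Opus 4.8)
The plan is to establish each of the six claims in \cref{prop:dec} separately, drawing on the pullback and pushout lemmas developed earlier in the section. Several items are nearly immediate from the definitions, so I would dispatch those first and reserve the real work for items (4) and (6), which concern closure under composition.

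For the easy items: claim (1) is definitional, since an $\mathcal{N}$-adhesive morphism has all its pullbacks $\mathcal{N}$-preadhesive, and $m$ is trivially a pullback of itself (along identities). Claim (2) follows because any pushout of an isomorphism is again an isomorphism and hence trivially a stable pushout that is also a pullback, so isomorphisms are $\mathcal{N}$-preadhesive; and since pullbacks of an iso are isos, every iso is $\mathcal{N}$-adhesive. For claim (3), if $n\in\mathcal{N}$ is $\mathcal{N}$-preadhesive, I would form the pushout of $n$ along itself: $\mathcal{N}$-preadhesivity (applied with $m=n$) makes this pushout also a pullback, and a morphism whose pushout along itself is a pullback is precisely a regular mono (it is the equalizer of its own cokernel pair). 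Claim (5) is the statement that $\ad{N}$ is stable under pullbacks, which is essentially built into the definition: a pullback of an $\mathcal{N}$-adhesive morphism has, by the pullback-pasting \cref{lem:pb1}, all of its own pullbacks appearing as pullbacks of the original morphism, hence $\mathcal{N}$-preadhesive.

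The main obstacle is claim (4), closure of $\pad{N}$ under composition. Here I would take composable $\mathcal{N}$-preadhesive morphisms $m:X\to Y$ and $m':Y\to Y'$ and an arbitrary $n:X\to Z$ in $\mathcal{N}$; I must produce the pushout of $n$ along $m'\circ m$ and verify it is stable and a pullback. The strategy is to build the pushout in two stages: first push $n$ out along $m$ to get a square with sides $p:Z\to W$ and $q:Y\to W$, then push the resulting $p$ out along $m'$. To justify that $p$ (or rather its relevant pushout partner) lies in $\mathcal{N}$ so that $m'$-preadhesivity applies, I would invoke stability of $\mathcal{N}$ under pushouts, part of the matching-class hypothesis. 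The two pushout squares paste by \cref{lem:po} into a single pushout along $m'\circ m$; the same lemma's moreover-clause transfers stability, since each inner square is stable; and that the composite square is a pullback follows from pasting the two pullback squares via \cref{lem:pb1}. Assembling these pasting arguments carefully, with attention to which arrows are known to lie in $\mathcal{N}$ at each stage, is the delicate part.

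Finally, claim (6) promotes the composition result from $\pad{N}$ to $\ad{N}$ under the extra hypothesis that pullbacks along $\mathcal{N}$-adhesive arrows exist. Given $\mathcal{N}$-adhesive $m$ and $m'$, I must show every pullback of $m'\circ m$ is $\mathcal{N}$-preadhesive. The plan is to take such a pullback and factor it, using the existence of pullbacks along $\mathcal{N}$-adhesive arrows together with \cref{cor:cube} or \cref{lem:pb1}, as a composite of a pullback of $m$ and a pullback of $m'$; these factors are $\mathcal{N}$-preadhesive because $m,m'$ are $\mathcal{N}$-adhesive, and then claim (4) finishes the job. The care needed here is the construction of the intermediate object realizing the pullback of the composite as a composite of pullbacks, which is exactly where the extra existence hypothesis is used.
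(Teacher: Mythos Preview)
The paper omits the proof of \cref{prop:dec} entirely (referring to its extended version), so there is nothing to compare against directly; your outline is the standard argument, essentially the one in \cite{garner2012axioms}, and items (1), (2), (3), (5) are handled correctly.

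Two small points on item (4). First, your exposition slips: after forming the pushout of $n$ along $m$ with coprojections $p:Z\to W$ and $q:Y\to W$, it is $q$ (not $p$) that you push out along $m':Y\to Y'$, since $q$ and $m'$ share the domain $Y$; you half-correct this parenthetically, but it should be stated cleanly. The arrow you need in $\mathcal{N}$ is $q$, and indeed $q\in\mathcal{N}$ because $\mathcal{N}$ is stable under pushouts.

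Second, and more substantively: you invoke the ``moreover'' clause of \cref{lem:po} to paste stability, but that clause is stated under the hypothesis that $\X$ has pullbacks, which item (4) does not assume. The pasting argument for stability genuinely needs to split a cube over the composite rectangle into two cubes by forming intermediate pullbacks over $Y$ and $W$, and nothing in the hypotheses of (4) guarantees those exist. Either the statement of (4) is slightly loose (and the extended version tacitly works with pullbacks, as Garner--Lack do), or a more careful argument is needed. In any case, for the applications in the paper this is harmless, since from \cref{cor:stru} onward pullbacks are always assumed; but you should flag the reliance on \cref{lem:po} explicitly rather than treat it as automatic. Your plan for (6) is fine once (4) is in hand.
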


\begin{cor}\label{cor:rega}
	In any category $\X$, $\mor(\X)_{\mathsf{a}}\subseteq \reg(\X)$.
\end{cor}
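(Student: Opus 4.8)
The plan is to read the result straight off Proposition \ref{prop:dec}, once the notational convention that ``(pre)adhesive'' abbreviates ``$\mor(\X)$-(pre)adhesive'' is unwound. The first thing I would check is that $\mor(\X)$ is a legitimate matching class, so that Proposition \ref{prop:dec} is applicable with $\mathcal{N}=\mor(\X)$. This is immediate: $\mor(\X)$ contains every isomorphism, is closed under composition and under (any) decomposition since it contains all arrows, and is vacuously stable under pullbacks and pushouts for the same reason. Thus all six items of Proposition \ref{prop:dec} are available in the case $\mathcal{N}=\mor(\X)$.

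Next I would take an arbitrary $m\in\mor(\X)_{\mathsf{a}}$, that is, an adhesive (i.e. $\mor(\X)$-adhesive) morphism, and pass to preadhesivity: by Proposition \ref{prop:dec}(1), $m$ is preadhesive. The decisive observation is then that, since $\mathcal{N}=\mor(\X)$ contains \emph{every} arrow of $\X$, the morphism $m$ itself belongs to $\mathcal{N}$. So $m$ is an element of $\mathcal{N}$ which is $\mathcal{N}$-preadhesive, and Proposition \ref{prop:dec}(3) applies verbatim to conclude that $m$ is a regular mono. This gives the desired inclusion $\mor(\X)_{\mathsf{a}}\subseteq\reg(\X)$.

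I do not expect any real obstacle here: the whole content is the remark that, when $\mathcal{N}$ is taken to be the class of all arrows, the membership hypothesis ``$n\in\mathcal{N}$'' appearing in Proposition \ref{prop:dec}(3) is automatically satisfied, so that for these morphisms preadhesivity alone already forces regularity. The only point requiring a little care is the order of the two invocations: one must first use part (1) to move from adhesivity to preadhesivity, and only then apply part (3). No diagram chase or explicit construction is needed.
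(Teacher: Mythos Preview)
Your proposal is correct and follows exactly the intended route: the paper states the result as an immediate corollary of Proposition~\ref{prop:dec}, and your argument—checking that $\mor(\X)$ is a matching class, then applying parts (1) and (3) with $\mathcal{N}=\mor(\X)$—is precisely the unpacking of that implication.
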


\begin{cor}\label{cor:stru}
Let $\mathcal{N} $ be a matching class on a category $\X$ with pullbacks, then:
\begin{enumerate}
	\item $\ad{N}\cap \mon(\X)$ is a stable system of monos;
	\item if $\ad{N}\cap \mon(\X)$ is stable under pushouts, then $(\ad{N}\cap \mon(\X), \mathcal{N})$ is a preadhesive structure.
\end{enumerate}
\end{cor}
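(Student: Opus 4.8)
The plan is to verify the two claims directly from the toolbox assembled in Proposition~\ref{prop:dec}, together with Lemma~\ref{lem:deco} and a single elementary pullback computation. Throughout write $\mathcal{M}:=\ad{N}\cap\mon(\X)$.

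For item~(1), recall that a stable system of monos is a class of monomorphisms that contains all isomorphisms and is closed under composition and stable under pullbacks. The ``monomorphism'' and ``isomorphism'' requirements are immediate: $\mathcal{M}$ consists of monos by construction, and every isomorphism is $\mathcal{N}$-adhesive by Proposition~\ref{prop:dec}(2), hence lies in $\mathcal{M}$. For the two closure properties I would intersect the corresponding facts for $\ad{N}$ and for $\mon(\X)$: monos are closed under composition and stable under pullbacks by standard category theory, while $\ad{N}$ is stable under pullbacks by Proposition~\ref{prop:dec}(5) and --- since $\X$ has all pullbacks, in particular pullbacks along $\mathcal{N}$-adhesive arrows --- closed under composition by Proposition~\ref{prop:dec}(6). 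Intersecting gives both properties for $\mathcal{M}$, completing~(1).

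For item~(2) I would check, one by one, the three axioms of a preadhesive structure for the pair $(\mathcal{M},\mathcal{N})$. The conditions concerning $\mathcal{N}$ alone (containing isomorphisms, closure under composition and decomposition, stability under pullbacks and pushouts) are exactly the defining properties of a matching class and so hold by hypothesis. For $\mathcal{M}$: it contains isomorphisms and is closed under composition by item~(1); it is stable under pullbacks by item~(1) and under pushouts by the standing assumption of~(2); and it is closed under decomposition because, being a stable system of monos by~(1), Lemma~\ref{lem:deco} applies --- if $g\circ f\in\mathcal{M}$ with $g\in\mathcal{M}\subseteq\mon(\X)$, then $f\in\mathcal{M}$.

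The only axiom requiring an actual argument is that $\mathcal{N}$ is closed under $\mathcal{M}$-decomposition, and this is where I expect the single genuine step to lie. Given $f:X\to Y$ and $g:Y\to W$ with $g\circ f\in\mathcal{N}$ and $g\in\mathcal{M}$, I would exploit that $g$ is a monomorphism to observe that
\[\xymatrix@R=16pt{X\ar[r]^{\id{X}}  \ar[d]_{f}& X \ar[d]^{g\circ f} \\ Y \ar[r]_g & W}\]
is a pullback square: any cone $(a,b)$ over the cospan $Y\xrightarrow{g}W\xleftarrow{g\circ f}X$ satisfies $g\circ b=(g\circ f)\circ a=g\circ(f\circ a)$, whence $b=f\circ a$ because $g$ is monic, so the mediating map into the top-left corner is forced to be $a$. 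Thus $f$ is a pullback of $g\circ f\in\mathcal{N}$, and stability of the matching class $\mathcal{N}$ under pullbacks yields $f\in\mathcal{N}$. This shows $\mathcal{N}$ is closed under $\mathcal{M}$-decomposition (in fact under decomposition by arbitrary monos), and all three axioms of a preadhesive structure are then verified.
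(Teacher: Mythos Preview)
Your proof is correct. The paper omits the proof of this corollary as straightforward, and your argument is exactly the intended one: item~(1) follows by intersecting the closure properties of $\ad{N}$ from Proposition~\ref{prop:dec} with those of $\mon(\X)$, and item~(2) is a routine verification of the axioms using Lemma~\ref{lem:deco} for $\mathcal{M}$-decomposition and the pullback-along-a-mono trick for $\mathcal{M}$-decomposition of $\mathcal{N}$.
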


In general we cannot guarantee closure of $\ad{N}$ under all pushouts, nonetheless we can still establish some result  along this line. 

\begin{lem}[Cfr.~\cite{garner2012axioms}, Prop.~2.3]\label{lem:adpush} Let $\mathcal{N}$ be a matching class in a category $\X$ with pullbacks and consider the following pushout 
	\[\xymatrix@R=16pt{X\ar[r]^{n} \ar[d]_{m_1} & Z \ar[d]^{m_2}\\
	Y \ar[r]_{g}& W}\]
with $n\in \mathcal{N}$. If $m_1$ is mono and $\mathcal{N}$-adhesive, then:
\begin{enumerate}
	\item $m_2$ is mono;
	\item $m_2$ is $\mathcal{N}$-preadhesive;
	\item $m_2$ is $\mathcal{N}$-adhesive.
\end{enumerate}
\end{lem}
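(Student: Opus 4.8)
The plan is to reduce everything to the \emph{stability} of a single pushout. Since $m_1$ is a monomorphism that is $\mathcal{N}$-adhesive, it is in particular $\mathcal{N}$-preadhesive by \cref{prop:dec}, so the displayed pushout of $n\in\mathcal{N}$ along $m_1$ is a \emph{stable} pushout which is moreover a pullback; by \cref{rem:tec1} we work throughout with this stable pullback--pushout representative, which I denote $S$. The guiding principle is that all three assertions must be extracted from stability (the ``if'' half of the Van Kampen condition) together with the pasting lemmas, since the full Van Kampen property of $S$ is \emph{not} available in the merely $\mathcal{N}$-adhesive setting; this is exactly the feature that distinguishes the argument from the classical adhesive one.

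I would prove (1) first. To see that $m_2$ is monic it suffices to show that its kernel pair $(k_1,k_2)\colon K\to Z$ is trivial. Form the cube whose bottom face is $S$ and whose top face is obtained by pulling $S$ back along $m_2$; by construction and by the pullback property of $S$ the four vertical faces are pullbacks, so stability of $S$ forces the top face to be a pushout. In this top pushout the right-hand column is the kernel pair of $m_2$, while the left-hand column degenerates to an identity because $m_1$ is monic (its kernel pair is $\id{X}$); combining this with \cref{cor:cube} one reads off $k_1=k_2$, i.e.\ that the diagonal $Z\to K$ is invertible. I expect this to be the main obstacle: because $S$ is simultaneously a pushout and a pullback, the cube tends to degenerate, so the comparison maps must be chosen with care and one must lean on \cref{cor:cube} and \cref{lem:pb2} to replace the Van Kampen \emph{descent} (the ``only if'' half) that is missing here.

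For (2), fix $n'\colon Z\to V$ in $\mathcal{N}$; I must exhibit a stable pushout of $n'$ along $m_2$ that is also a pullback. Since $\mathcal{N}$ is closed under composition, $n'\circ n\in\mathcal{N}$, so $\mathcal{N}$-preadhesivity of $m_1$ makes the pushout of $n'\circ n$ along $m_1$ exist, be stable, and be a pullback. Pasting it onto $S$, \cref{lem:po} yields first that the pushout of $n'$ along $m_2$ exists and is a pushout (it is the right-hand square of a rectangle whose total square and left square are pushouts), and then, by the stability clause of \cref{lem:po}, that it is stable. To see it is a pullback I would apply \cref{lem:pb2} with $S$ as its stable pushout: the two rectangles are obtained by pasting along the legs $g$ and $m_2$ of $S$, their total squares are pullbacks by $\mathcal{N}$-preadhesivity of $m_1$ (applied to $n$ and to $n'\circ n$), and their left halves are pullbacks --- one being $S$ itself and the other the kernel pair square of $m_2$, which is trivial by part (1). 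Hence the common right half, namely the $n'$--$m_2$ square, is a pullback, and $m_2\in\pad{N}$.

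Finally, (3) follows from (1) and (2) by descent. Let $t\colon T\to W$ be arbitrary and let $\bar m_2$ be the pullback of $m_2$ along $t$; I must show $\bar m_2$ is $\mathcal{N}$-preadhesive. Pulling $S$ back along $t$ and invoking stability once more, the resulting square is again a pushout, of the shape ``$\bar m_1$ along $\bar n$'', where $\bar m_1$ is a pullback of $m_1$ and $\bar n$ a pullback of $n$. Now $\bar n\in\mathcal{N}$ because $\mathcal{N}$ is stable under pullbacks, while $\bar m_1$ is again a monomorphism and is $\mathcal{N}$-adhesive by \cref{prop:dec}. Thus parts (1) and (2), applied to this pulled-back pushout, show that $\bar m_2$ is monic and $\mathcal{N}$-preadhesive. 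As $t$ was arbitrary, every pullback of $m_2$ is $\mathcal{N}$-preadhesive, i.e.\ $m_2\in\ad{N}$, as required.
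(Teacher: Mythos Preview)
Your strategy coincides with the paper's in all three parts: the kernel-pair cube for (1), the pasting of pushouts via \cref{lem:po} followed by \cref{lem:pb2} for (2), and the pullback-descent argument for (3).

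There is, however, one slip in your argument for (2). When you apply \cref{lem:pb2} with $S$ as the stable pushout, the rectangle pasted along $m_2$ has as its left half the identity square over $m_2$ (a pullback by (1), as you say) and as its total the square with vertical sides $\id_Z$ and $f$, where $f\colon V\to S$ is the arrow opposite $m_1$ in the outer pushout of $n'\circ n$ along $m_1$. You assert that this total is a pullback ``by $\mathcal{N}$-preadhesivity of $m_1$ applied to $n$'', but preadhesivity applied to $n$ only tells you that $S$ itself is a pullback---and $S$ is the left half of the \emph{other} rectangle, not this one. The correct justification, and the one the paper gives, is that $f$ is a monomorphism: this follows by applying your part (1) to the \emph{outer} pushout (which is again a pushout of an arrow $n'\circ n\in\mathcal{N}$ along the mono $\mathcal{N}$-adhesive $m_1$), and once $f$ is mono the total square with $\id_Z$ on the left is trivially a pullback. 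With this one-line fix your proof is complete and matches the paper's.
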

\begin{proof}
	\begin{enumerate}[leftmargin=0pt,itemindent=1.7em]
		\item  Since $\X$ has pullbacks, we have a diagram
		\[\xymatrix@R=16pt{X\ar@/^.3cm/[drr]^{n} \ar@/_.3cm/[ddr]_{n} \ar@{.>}[dr]^{h}\\ &P\ar[r]^{p} \ar[d]_{q} & Z \ar[d]^{m_2}\\&
			Z \ar[r]_{m_2}& W}\]
		in which the square is a pullback, so that the dotted $h$ exists because of its universal property. We aim to prove that $p$ and $q$ are isomorphisms.
		To get the thesis it is now enough to prove that $p$ is an isomorphism. Indeed, when taking the pullback of an arrow along itself, the fact that one projection is an isomorphism entails that also the other one is so. Therefore, if $p$ is invertible then $q$ is an isomorphism too and thus $m_2$ a monomorphism.
		
		To this end, we can build the following cube, starting from the bottom face (the hypothesis) and the right face (the pullback above):
		\[\xymatrix@C=13pt@R=13pt{&X\ar[dd]|\hole_(.65){\id{X}}\ar[rr]^{h} \ar[dl]_{\id{X}} && P \ar[dd]^{q} \ar[dl]_{p} \\ X  \ar[dd]_{m_1}\ar[rr]^(.65){n} & & Z \ar[dd]_(.3){m_2} \\&X\ar[rr]|\hole^(.65){n} \ar[dl]^{m_1} && Z \ar[dl]^{m_2}\\Y \ar[rr]_{g} & & W }\]
		By Proposition \ref{prop:dec}(1) the bottom and front faces are stable pushouts and pullbacks because $m_1$ is $\mathcal{N}$-adhesive, and the right square is a pullback by  hypothesis.  Lemma \ref{lem:pb1} entails that the rectangle
		\[\xymatrix{X \ar@/^.4cm/[rr]^{n}\ar[r]_{h} \ar[d]_{\id{X}}&  P \ar[r]_{p}\ar[d]_{q} & Z \ar[d]^{m_2} \\ X \ar@/_.4cm/[rr]_{g\circ m_1}\ar[r]^{n} & Z \ar[r]^{m_2} & W }\]
		are pullbacks, thus the back faces of the two cubes are pullbacks too.
		The left faces are pullbacks because $m_1$ is mono, therefore, by stability of the bottom faces, it follows that the top faces pushouts and thus $p$ is an isomorphism.
		\item Let $k:Z\to Q$ be another arrow in $\mathcal{N}$ and consider the diagram
		\[\xymatrix@R=16pt{X \ar[r]^{n} \ar[d]_{m_1}&  Z \ar[r]^{k}\ar[d]^{m_2} & Q \ar[d]^{f} \\ Y \ar@/_.4cm/[rr]_{t}\ar[r]^{g} & W \ar@{.>}[r]^{s} & S}\]
		in which the left square and the external rectangle are stable pushouts and pullbacks. Since
		\[f\circ k \circ n = t\circ m_1\]
		 the universal property of the left square yields the dotted $s$.  By Lemma \ref{lem:po} the square so obtained is a stable pushout,  thus we are left with showing that it is a pullback.
		 Given the solid part of the diagram
		 \[\xymatrix@R=16pt{L \ar@{.>}[dr]^{l_2}\ar@/^.3cm/[drrr]^{l_1} \ar@/_.3cm/[ddr]_{l_2}\\&Z \ar[rr]^{k} \ar[d]^{\id{Z}}&  & Q \ar[d]^{f} \\ & Z \ar[r]_{m_2} & W \ar[r]_{s} & S}\]
		 we have
		 \begin{align*}
		 	f\circ l_1&=s\circ m_2\circ l_2=f\circ k\circ l_2
		 \end{align*}
		 By the previous point $f$ is mono and thus the following rectangle is a pullback
\[\xymatrix@R=16pt{Z \ar[rr]^{k} \ar[d]_{\id{Z}}&  & Q \ar[d]^{f} \\ Z \ar[r]_{m_2} & W \ar[r]_{s} & S}\]
	The thesis now follows applying the previous point and Lemma \ref{lem:pb2} to the following diagrams.
			\[
		\xymatrix@R=16pt{X\ar[r]^{n} \ar[d]_{m_1} & Z \ar[d]^{m_2} & X \ar[d]_{m_1}\ar[r]^{n} & Z \ar[r]^{k} \ar[d]_{m_2} & Q\ar[d]^{f} & Z \ar[d]_{\id{Z}} \ar[r]^{\id{Z}}  & Z  \ar[r]^{k} \ar[d]_{m_2}  & Q \ar[d]^{f}\\ Y \ar[r]_{g} &W  & Y \ar[r]_{g} &W \ar[r]_{s}  & S& Z \ar[r]_{m_2} & W \ar[r]_{s} & S}\] 
		\item  Take an arrow $w:W'\to W$ and consider the following cube, in which the solid faces are pullbacks
		\[\xymatrix@C=13pt@R=13pt{&X'\ar[dd]|\hole_(.65){x}\ar[rr]^{n'} \ar@{.>}[dl]_{m'_1} && Z' \ar[dd]^{z} \ar[dl]_{m'_2} \\ Y'  \ar[dd]_{y}\ar[rr]^(.65){g'} & & W' \ar[dd]_(.3){w}\\&X\ar[rr]|\hole^(.65){n} \ar[dl]^{m_1} && Z \ar[dl]^{m_2} \\Y\ar[rr]_{g} & & W}\]
By Corollary \ref{cor:cube} the arrow $m'_1:X'\to Y'$ exists and the added face is a pullback. Since the bottom face is a stable pushout then the top face is a pushout too. By point $5$ of $\ref{prop:dec}$ $m'_1$ is $\mathcal{N}$-adhesive and, since $\mathcal{N}$ is matching, $n'$ is in $\mathcal{N}$. The previous point of this lemma implies that $m'_2$ is $\mathcal{N}$-preadhesive and we can conclude.		\qedhere 
	\end{enumerate}
\end{proof}

\begin{cor}\label{cor:pads} If $\X$ is a category with pullbacks then $(\mor(\X)_{\mathsf{a}}, \mor(\X))$ is a preadhesive structure.
\end{cor}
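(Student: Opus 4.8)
The plan is to reduce the entire statement to Corollary~\ref{cor:stru}(2), which already bundles every defining clause of a preadhesive structure (containment of isomorphisms, closure under composition and decomposition of the two classes, closure of the matching class under $\mathcal{M}$-decomposition, and stability under pullbacks) into the single hypothesis that $\ad{N}\cap\mon(\X)$ be stable under pushouts. The key move is to instantiate the matching class $\mathcal{N}$ as the class $\mor(\X)$ of \emph{all} arrows. This is trivially a matching class in the sense of Definition~\ref{def:ade}: it contains all isomorphisms and is closed under composition, decomposition, and stable under pullbacks and pushouts, each clause holding vacuously for the class of all morphisms. Since $\X$ has pullbacks, both Corollary~\ref{cor:stru} and Lemma~\ref{lem:adpush} are applicable with this choice.

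First I would identify the two classes appearing in the statement with those of Corollary~\ref{cor:stru}. By Corollary~\ref{cor:rega} we have $\mor(\X)_{\mathsf{a}}\subseteq\reg(\X)\subseteq\mon(\X)$, hence $\mor(\X)_{\mathsf{a}}\cap\mon(\X)=\mor(\X)_{\mathsf{a}}$. Consequently, for the choice $\mathcal{N}=\mor(\X)$ the pair $(\ad{N}\cap\mon(\X),\mathcal{N})$ of Corollary~\ref{cor:stru} is precisely $(\mor(\X)_{\mathsf{a}},\mor(\X))$. Invoking Corollary~\ref{cor:stru}(2), the only property left to verify is that $\mor(\X)_{\mathsf{a}}$ is stable under pushouts; all the other conditions defining a preadhesive structure are then delivered automatically by that corollary.

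The stability under pushouts is the sole piece of genuine content, and it is exactly Lemma~\ref{lem:adpush}(3) read in this special case. Given any pushout square whose left vertical leg $m$ lies in $\mor(\X)_{\mathsf{a}}$, I would match it to the square of Lemma~\ref{lem:adpush}, taking $m_1:=m$ and letting $n$ be the top horizontal leg. Both hypotheses of that lemma are met for free: the requirement $n\in\mathcal{N}=\mor(\X)$ holds vacuously, and $m_1=m$ is a monomorphism since $\mor(\X)_{\mathsf{a}}\subseteq\mon(\X)$ by Corollary~\ref{cor:rega}. Conclusion~(3) of the lemma then asserts that the opposite vertical leg is again $\mathcal{N}$-adhesive, i.e.\ lies in $\mor(\X)_{\mathsf{a}}$, which is precisely the required stability. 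I expect no real obstacle here: the whole argument rests on the already-established Lemma~\ref{lem:adpush}(3), and the entire point of choosing $\mathcal{N}=\mor(\X)$ is that it makes its hypothesis $n\in\mathcal{N}$ cost nothing. The one subtlety worth flagging is that ``stable under pushouts'' quantifies over all pushout squares of the given shape but asserts nothing about their existence, so no existence argument for pushouts is needed anywhere in the proof.
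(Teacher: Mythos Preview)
Your proposal is correct and follows exactly the route the paper intends: the corollary is stated immediately after Lemma~\ref{lem:adpush} without proof because it is meant to be read off from Corollary~\ref{cor:stru}(2) once Corollary~\ref{cor:rega} gives $\mor(\X)_{\mathsf{a}}\subseteq\mon(\X)$ and Lemma~\ref{lem:adpush}(3) supplies the required pushout stability. Your write-up simply makes this implicit reasoning explicit, with the correct observation that taking $\mathcal{N}=\mor(\X)$ trivializes the hypothesis $n\in\mathcal{N}$.
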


Finally, $\mathcal{N}$-adhesivity allows us to compute suprema of certain pairs of subobjects.

\begin{prop}[Cfr.~\cite{garner2012axioms}, Prop.~2.4]\label{prop:uni}
Let $\mathcal{N}$ be a matching class in a category $\X$ with pullbacks. Given an $\mathcal{N}$-adhesive mono $m:M\to X$  and another mono $n:N\to X$ in $\mathcal{N}$, consider the diagram
	\[\xymatrix@R=16pt{P \ar[r]^{p_1} \ar[d]_{p_2} & M\ar[d]^{u_2} \ar@/^.3cm/[ddr]^{m} \\ N \ar@/_.3cm/[drr]_{n}\ar[r]_{u_1} & U \ar@{.>}[dr]^{u} \\ && X}\]
	in which the outer boundary form a pullback and the inner square a pushout. Then the dotted arrow $u:U\to X$ is a monomorphism and, in $(\sub(X), \leq)$
	\[[u]=[m]\vee [n]\]  
\end{prop}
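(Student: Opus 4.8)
The plan is to read off from the two universal squares what the four legs $p_1,p_2,u_1,u_2$ are, then to prove the monomorphism claim by a kernel-pair computation resting on the \emph{stability} of the inner pushout, and finally to obtain the join from the universal properties. As preliminary observations: since the outer boundary is the pullback of $m$ and $n$, the leg $p_2\colon P\to N$ is a pullback of the $\mathcal{N}$-adhesive mono $m$, hence is itself an $\mathcal{N}$-adhesive mono (Proposition~\ref{prop:dec}(5), a pullback of a mono being mono), while $p_1\colon P\to M$ is a pullback of $n\in\mathcal{N}$ and so lies in $\mathcal{N}$ because $\mathcal{N}$ is stable under pullbacks. Thus the inner square is the pushout of $p_1\in\mathcal{N}$ along the $\mathcal{N}$-preadhesive arrow $p_2$ (Proposition~\ref{prop:dec}(1)), so by Definition~\ref{def:ade} and Remark~\ref{rem:tec1} it is a \emph{stable} pushout (and a pullback). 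This stability is essentially the only structural input the argument needs; throughout I also use that $u_1,u_2$ are jointly epic, being a pushout cocone, and that $u\circ u_2=m$ and $u\circ u_1=n$.

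The crux, and the step I expect to be hardest, is the monomorphism claim. I would prove it by showing that the diagonal $\delta\colon U\to U\times_X U$ into the kernel pair of $u$ is invertible (equivalently, that the two projections $\pi_1,\pi_2$ agree). The one fact I exploit is that a stable pushout is preserved by pullback along any morphism into its apex: applied to the cube obtained by pulling the square back, all four side faces become pullbacks, so the ``if'' half of the Van Kampen condition forces the top face to be a pushout. Pulling the inner stable pushout back along the projection $U\times_X U\to U$ therefore exhibits the kernel pair as a pushout of the span $M\times_X U\leftarrow P\times_X U\to N\times_X U$. Pulling the same stable pushout back along the canonical projections $M\times_X U\to U$, $N\times_X U\to U$ and $P\times_X U\to U$, and simplifying the resulting pushouts using the outer identity $P\cong M\times_X N$ together with the monicity of $m$ and $n$, identifies these three vertices with $M$, $N$ and $P$ respectively, compatibly with $p_1$ and $p_2$. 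Hence the kernel pair is isomorphic to $M+_P N=U$, and chasing the cocone shows this isomorphism is exactly $\delta$; so $\delta$ is an isomorphism and $u$ is mono.

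The genuine difficulty lurking in this step is that $n$ is only assumed to lie in $\mathcal{N}$ and need not be $\mathcal{N}$-adhesive, so the argument cannot be symmetrised in $M$ and $N$: every identification must be routed through the single stable pushout produced by the $\mathcal{N}$-adhesive mono $m$ and through the outer pullback square, rather than by running a dual adhesivity argument on the $N$-side.

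Finally, for the join, the upper-bound part is immediate: $u\circ u_2=m$ and $u\circ u_1=n$ give $[m]\le[u]$ and $[n]\le[u]$. For the least upper bound, let $k\colon K\to X$ be any mono with $[m]\le[k]$ and $[n]\le[k]$, and pick $a,b$ with $k\circ a=m$ and $k\circ b=n$. Then $k\circ a\circ p_1=m\circ p_1=n\circ p_2=k\circ b\circ p_2$, and cancelling the mono $k$ yields $a\circ p_1=b\circ p_2$; the universal property of the inner pushout then produces $c\colon U\to K$ with $c\circ u_2=a$ and $c\circ u_1=b$. Since $k\circ c$ agrees with $u$ after precomposition with the jointly epic pair $u_1,u_2$, we get $k\circ c=u$, i.e.\ $[u]\le[k]$. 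Therefore $[u]=[m]\vee[n]$ in $(\sub(X),\le)$.
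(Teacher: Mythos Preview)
Your argument is correct, and the join part is identical to the paper's. For the monomorphism claim, your route and the paper's share the same core step---using stability of the inner pushout to show that the comparison maps $M\to M\times_X U$ and $N\to N\times_X U$ (induced by $(\id_M,u_2)$ and $(\id_N,u_1)$) are isomorphisms; the paper does this with two explicit cubes over the pushout, which is exactly what your ``pull the stable pushout back along $M\times_X U\to U$ and $N\times_X U\to U$'' amounts to. Where you diverge is the last inference: you build one further cube over the kernel-pair projection $U\times_X U\to U$, identify its top span with $M\xleftarrow{p_1}P\xrightarrow{p_2}N$ via the isomorphisms just obtained, and conclude that the diagonal $U\to U\times_X U$ is invertible. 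The paper instead avoids computing $U\times_X U$ altogether: having shown that the squares $(M,\id_M,u_2,m,u)$ and $(N,\id_N,u_1,n,u)$ are pullbacks, it feeds these, together with the stable pushout, into the left-cancellation Lemma~\ref{lem:pb2} to conclude directly that $(U,\id_U,\id_U,u,u)$ is a pullback. Your approach is self-contained and conceptually clean (it literally computes the kernel pair); the paper's is shorter once Lemma~\ref{lem:pb2} is in hand, and it sidesteps the bookkeeping needed to verify that your final isomorphism really is the diagonal. Both exploit the asymmetry you flagged in the same way: every identification is routed through the single stable pushout coming from the $\mathcal{N}$-adhesive side.
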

\begin{rem}
	Notice that the  $p_2$ and $p_1$ are both monos, moreover, $p_2$ is $\mathcal{N}$-preadhesive while $p_1\in\mathcal{N}$, as the pullback of $n$, thus the inner pushout exists.
\end{rem}
\begin{proof}[Proof of Proposition \ref{prop:uni}.]
	Consider the following two pullback squares
	\[\xymatrix@R=16pt{Q \ar[r]^{q_1} \ar[d]_{q_2} & N\ar[d]^{n} & W \ar[r]^{w_1} \ar[d]_{w_2}& M\ar[d]^{m}\\
		U \ar[r]_{u}& X & U \ar[r]_{u}& X}\]

Since $n\circ p_2$ is equal to $m\circ p_1$, we get arrows $f_1:N\to Q$, $f_2:P\to Q$, $g_1:M\to W$, $g_2:P\to W$ making the following diagrams commute.
	\[\xymatrix@C=20pt@R=16pt{N\ar@/^.4cm/[rr]^{\id{N}}\ar[r]_{f_1}  \ar[d]_{\id{N}} &Q\ar[r]_{q_1} \ar[d]_{q_2} & N \ar[d]^{n}  &P\ar@/^.4cm/[rr]^{p_2}\ar[r]_{f_2}  \ar[d]_{p_1} &Q\ar[r]_{q_1} \ar[d]_{q_2} & N \ar[d]^{n} & M\ar@/^.4cm/[rr]^{\id{M}}\ar[r]_{g_1}  \ar[d]_{\id{M}} &W\ar[r]_{w_1} \ar[d]_{w_2} & M \ar[d]^{m} & P\ar@/^.4cm/[rr]^{p_1}\ar[r]_{g_2}  \ar[d]_{p_2} &W\ar[r]_{w_1} \ar[d]_{w_2} & M \ar[d]^{m}
		\\ N \ar@/_.4cm/[rr]_{n}\ar[r]^{u_1} & U \ar[r]^{u}  & X  & M\ar@/_.4cm/[rr]_{m}\ar[r]^{u_2}   &U\ar[r]^{u}  & X &  M \ar@/_.4cm/[rr]_{m}\ar[r]^{u_2} & U \ar[r]^{u}  & X & N\ar@/_.4cm/[rr]_{n}\ar[r]^{u_1}  &U\ar[r]^{u} & X }\]
	Their outer edges are pullbacks, thus in the following cubes, the vertical faces are pullbacks
	\[\xymatrix@C=13pt@R=13pt{&P\ar[dd]|\hole_(.65){\id{P}}\ar[rr]^{\id{P}} \ar[dl]_{p_2} && P \ar[dd]^{p_1} \ar[dl]_{f_2} & & P\ar[dd]|\hole_(.65){\id{P}}\ar[rr]^{p_1} \ar[dl]_{\id{P}} && M \ar[dd]^{\id{M}} \ar[dl]_{g_1}\\ N  \ar[dd]_{\id{N}}\ar[rr]^(.65){f_1} & & Q \ar[dd]_(.3){q_2}&& P  \ar[dd]_{p_2}\ar[rr]^(.65){g_2} & & W \ar[dd]_(.3){w_2}\\&P\ar[rr]|\hole^(.65){p_1} \ar[dl]^{p_2} && M \ar[dl]^{u_2} && P\ar[rr]|\hole^(.65){p_1} \ar[dl]^{p_2} && M \ar[dl]^{u_2}\\N \ar[rr]_{u_1} & & U && N \ar[rr]_{u_1} & & U}\]
	$p_2$ is $\mathcal{N}$-preadhesive, so the top faces are pushouts and therefore $f_1$, and $g_1$ are isomorphisms with inverses given by $q_1$ and $w_1$.  But then, since
	$u_1=q_2\circ f_1$ and  $u_2=w_2\circ g_1$,	we can further deduce that the squares below  are both pullbacks.
	\[\xymatrix@R=16pt{N \ar[r]^{\id{N}} \ar[d]_{u_1} & N\ar[d]^{n} & M \ar[r]^{\id{M}} \ar[d]_{u_2}& M\ar[d]^{m}\\
		U \ar[r]_{u}& X & U \ar[r]_{u}& X}\]
	We have three diagrams 
	\[\xymatrix@R=16pt{P\ar[d]_{p_2} \ar[r]^{p_1} & M \ar[d]^{m}  & N \ar[d]_{\id{N}} \ar[r]^{u_1}& U\ar[d]_{\id{U}} \ar[r]^{\id{U}}  & U \ar[d]^{u} &  M \ar[d]_{\id{M}} \ar[r]^{u_2}& U \ar[r]^{\id{U}} \ar[d]_{\id{U}}  & U \ar[d]^{u}\\N \ar[r]_{n} & X & N \ar@/_.3cm/[rr]_{n}\ar[r]^{u_1} & U  \ar[r]^u  &X & M\ar@/_.3cm/[rr]_{m}\ar[r]^{u_2} & U  \ar[r]^u  &X }\]
	and we have just proved that the rectangles are pullbacks, thus we can apply Lemma \ref{lem:pb2} to deduce that the following square is a pullback---which means exactly that $u$ is a mono.
	\[\xymatrix@R=16pt{ U \ar[r]^{\id{U}} \ar[d]_{\id{U}}  & U \ar[d]^{u}\\ U  \ar[r]^u  &X }\]
	
	For the second half: suppose that $k:K\to X$ is an upper bound for $m$ and $n$, thus there exists $k_1:M\to K$ and $k_2:N\to K$ such that $m=k\circ k_1$ and $n=k\circ k_2$ so that
	\begin{align*}
		k\circ k_1 \circ p_1 &= m\circ p_1= n\circ p_2= k\circ k_2\circ p_2
	\end{align*}
	$k$ is mono, thus this implies that there exists a unique $h:U\to K$ such that 
	\[k_2=h\circ u_1 \qquad k_1=h\circ u_2\]
	and we have
	\begin{gather*}
		k\circ h \circ u_1=k\circ k_2=n =u \circ u_1 \qquad
		k\circ h \circ u_2=k\circ k_1=m =u \circ u_2 
	\end{gather*}
	showing that $u=k\circ h$, i.e. $u\leq k$.
\end{proof}

\section{From $\mathcal{M}, \mathcal{N}$-unions to $\mathcal{M}, \mathcal{N}$-adhesivity}\label{sec:uniad}

Given a preadhesive structure $(\mathcal{M}, \mathcal{N})$ such that $\mathcal{M}\subseteq \ad{N}$, in this section we will show how to deduce $\mathcal{M}, \mathcal{N}$-adhesivity  from the closure of $\mathcal{M}$ under some kind of unions.

\begin{defi} Let $(\mathcal{M}, \mathcal{N})$ be a preadhesive structure, a monomorphism $u:U\to X$ is an \emph{$\mathcal{M}, \mathcal{N}$-union} if there exist $m\in \mathcal{M}$ and $n\in \mathcal{M}\cap \mathcal{N}$ such that, in the poset $(\sub(X), \leq)$,
	\[[u]=[m]\vee [n]\]
	We will say that $\mathcal{M}$ is \emph{closed under $\mathcal{M},\mathcal{N}$-unions}, if it contains all such monos.  
\end{defi}

We will need a technical lemma involving kernel pairs. In order to state it, we need some preliminary observations. 

Take a category $\X$ with pullbacks endowed with a preadhesive structure $(\mathcal{M}, \mathcal{N})$, and take also the following $\mathcal{M}, \mathcal{N}$-pushout square
\[\xymatrix@R=16pt{X \ar[r]^n \ar[d]_{m}& Z\ar[d]^{p} \\ Y\ar[r]_{q} & W}\]
Pulling back $n$ and $q$ along themselves, we get two diagrams
	\[\xymatrix@R=16pt{ X\ar@/^.3cm/[drr]^{\id{X}}  \ar@/_.3cm/[ddr]_{\id{X}}\ar@{.>}[dr]^{\gamma_n}& & & Y \ar@/^.3cm/[drr]^{\id{Y}} \ar@/_.3cm/[ddr]_{\id{Y}} \ar@{.>}[dr]^{\gamma_q}\\&K_n \ar[r]^{x_1} \ar[d]_{x_2}& X \ar[d]^{n} & &K_q \ar[r]^{y_1} \ar[d]_{y_2}& Y \ar[d]^{q}\\  &X \ar[r]_{n} & Z & &Y \ar[r]_{q}& W  }\]
	with the dotted arrows $\gamma_n:Z\to K_n$ and $\gamma_q:Y\to K_q$. Moreover, we have
\begin{align*}
	q\circ m\circ x_1&=p\circ n\circ x_1=p\circ n\circ x_2=q\circ m\circ x_2
\end{align*}
thus we have an arrow $k:K_n\to K_q$ as in the following squares.
\[\xymatrix@R=16pt{K_n\ar@{.>}[d]_{k}  \ar[r]^{x_1}& X \ar[d]^{m} &  K_n\ar@{.>}[d]_{k}  \ar[r]^{x_2}& X \ar[d]^{m}\\ K_q \ar[r]_{y_1} & Y & K_q \ar[r]_{y_2} & Y}\]

We can also construct another commutative square. From the following chains of equalities
\begin{gather*}
	y_1\circ \gamma_q \circ m = \id{Y}\circ m=m=m\circ \id{X}=m\circ x_1\circ \gamma_n=y_1 \circ k \circ \gamma_n\\
	y_2\circ \gamma_q \circ m = \id{Y}\circ m=m=m\circ \id{X}=m\circ x_2\circ \gamma_n=y_2 \circ k \circ \gamma_n
\end{gather*}
we can deduce that $ k\circ \gamma_n = \gamma_{q}\circ m$.
 
\begin{lem}[Cfr.~\cite{garner2012axioms}, Lemma 2.5]\label{lem:fon}
Let $(\mathcal{M}, \mathcal{N})$ be a preadhesive structure on a category $\X$ with pullbacks such that  $\mathcal{M}\subseteq \ad{N}$, $\mathcal{M}\cap\mathcal{N}$ contains every split mono  and  $\mathcal{M}$ is closed under $\mathcal{M},\mathcal{N}$-unions. Then given  an $\mathcal{M}, \mathcal{N}$-pushout square 
\[\xymatrix@R=16pt{X \ar[r]^n \ar[d]_{m}& Z\ar[d]^{p} \\ Y\ar[r]_{q} & W}\]
all the squares in the following diagrams, constructed as above, are stable pushouts and pullbacks.
\[\xymatrix@R=16pt{X \ar[r]^-{\gamma_n} \ar[d]_{m} & K_n \ar[d]_{k} \ar[r]^{x_1}& X \ar[r]^n \ar[d]_{m}& Z\ar[d]^{p} & X \ar[r]^-{\gamma_n} \ar[d]_{m}& K_n \ar[d]_{k} \ar[r]^{x_2}& X \ar[r]^n \ar[d]_{m}& Z\ar[d]^{p}\\Y \ar[r]_-{\gamma_q}& K_q \ar[r]_{y_1}& Y\ar[r]_{q} & W & Y \ar[r]_-{\gamma_q} & K_q \ar[r]_{y_2}& Y\ar[r]_{q} & W}\]
\end{lem}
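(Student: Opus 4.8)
The plan is to reduce everything to the single left-hand square
\[
\xymatrix@R=16pt{X \ar[r]^-{\gamma_n} \ar[d]_{m} & K_n \ar[d]^{k} \\ Y \ar[r]_-{\gamma_q} & K_q}
\]
(call it $A$), and then propagate to the others by pasting. First I would record the classes of the maps involved. Since $x_i\circ \gamma_n=\id{X}$ and $y_i\circ\gamma_q=\id{Y}$, both $\gamma_n$ and $\gamma_q$ are split monos, hence lie in $\mathcal{M}\cap\mathcal{N}$ by hypothesis; the projections $x_1,x_2,y_1,y_2$ are pullbacks of $n,q\in\mathcal{N}$, so they belong to $\mathcal{N}$; and $p$, being the pushout of $m\in\mathcal{M}$, lies in $\mathcal{M}$. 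A routine joint-monicity argument (using that $m$ is mono and that $(x_1,x_2)$, $(y_1,y_2)$ are jointly monic) shows that $k$ is a mono. Finally, as $m\in\mathcal{M}\subseteq\ad{N}$ is $\mathcal{N}$-preadhesive (\Cref{prop:dec}(1)) and $n,\gamma_n\in\mathcal{N}$, both the given square $C$ and the pushout $P$ of $\gamma_n$ along $m$ exist and are automatically stable pushouts and pullbacks.

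Assuming $A$ is a stable pushout and a pullback, the rest follows formally. By \Cref{lem:adpush} applied to $A$ (with $\gamma_n\in\mathcal{N}$ and $m$ an $\mathcal{N}$-adhesive mono) the map $k$ is an $\mathcal{N}$-adhesive, $\mathcal{N}$-preadhesive mono. The pasted rectangle $A\mid B$ has top $x_1\circ\gamma_n=\id{X}$ and bottom $y_1\circ\gamma_q=\id{Y}$, so it is the trivial square, which is a stable pushout and a pullback. \Cref{lem:po} (with $A$ as stable-pushout left half) then forces the right half $B$ to be a stable pushout, and, being a pushout of $x_1\in\mathcal{N}$ along the $\mathcal{N}$-preadhesive $k$, $B$ is a pullback as well by \Cref{def:ade}. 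The square $C$ is given, and the $x_2,y_2$ row is handled symmetrically; this yields all the assertions.

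It remains to prove that $A$ is a stable pushout and a pullback. For the pullback part I would avoid $A$ itself at first and show that the outer rectangle $B\mid C$, namely
\[
\xymatrix@R=16pt{K_n \ar[r]^-{\bar n} \ar[d]_{k} & Z \ar[d]^{p} \\ K_q \ar[r]_-{\bar q} & W}
\]
with $\bar n=n\circ x_1=n\circ x_2$ and $\bar q=q\circ y_1=q\circ y_2$, is a pullback. Because $C$ is a pullback, $X\cong Y\times_W Z$ with $n,m$ the projections; iterating this identification through the kernel-pair pullbacks of $n$ and $q$ and applying \Cref{lem:pb1} repeatedly shows $K_n\cong Z\times_W K_q$ with projections $\bar n$ and $k$. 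Thus $B\mid C$ is a pullback and $k$ is a pullback of $p\in\mathcal{M}$, whence $k\in\mathcal{M}$. Two applications of \Cref{lem:pb1} (first to $B\mid C$ with $C$ a pullback, then to the trivial $A\mid B$ with $B$ a pullback) give that $B$ and then $A$ are pullbacks. For the pushout part, now that $k\in\mathcal{M}\subseteq\ad{N}$ is an $\mathcal{N}$-adhesive mono and $\gamma_q\in\mathcal{M}\cap\mathcal{N}$ is a mono in $\mathcal{N}$, I would invoke \Cref{prop:uni} with $m'=k$ and $n'=\gamma_q$: its pullback step is exactly the pullback square $A$, so the pushout step is the pushout $P$ of $\gamma_n$ along $m$, and the induced comparison $\phi\colon P\to K_q$ is a mono realizing $[k]\vee[\gamma_q]$ in $\sub(K_q)$. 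As $\mathcal{M}$ is closed under $\mathcal{M},\mathcal{N}$-unions, $\phi\in\mathcal{M}$, and $A$ is a pushout precisely when $\phi$ is invertible, i.e.\ when $[k]\vee[\gamma_q]=\top$; in that case $A\cong P$ is a stable pushout and a pullback, closing the reduction.

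The main obstacle is this last point: showing that $k$ and $\gamma_q$ jointly cover $K_q$, equivalently that $\phi$ is an isomorphism. I expect to extract this from the stability of $C$, by pulling the pushout $C$ back along $\bar q\colon K_q\to W$ to exhibit $K_q$ as a pushout whose coprojections factor through $k$ and $\gamma_q$ (so that these are jointly epic), the closure of $\mathcal{M}$ under $\mathcal{M},\mathcal{N}$-unions being what guarantees that the relevant comparison lands in $\mathcal{M}$ and is thereby forced to be invertible. Everything else is diagram pasting via \Cref{lem:pb1} and \Cref{lem:po}.
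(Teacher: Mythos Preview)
Your overall architecture matches the paper's: establish the class memberships, show the middle squares $B$ (with $x_i,y_i,k,m$) and the outer composite $B\mid C$ are pullbacks, deduce $k\in\mathcal{M}$, get $A$ as a pullback by pasting, and then invoke \Cref{prop:uni} to produce the comparison mono $\phi\colon P\to K_q$ (the paper calls it $e\colon E\to K_q$) realising $[k]\vee[\gamma_q]$, which lies in $\mathcal{M}$ by the union hypothesis. Where you diverge is the order and, crucially, the endgame.

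The paper proves that $B$ is a pushout \emph{before} attacking $A$: it pulls back the stable pushout $C$ along $y_i$ (not along $\bar q$) and reads off $B$ as the top face of the resulting cube. With $B$ a pushout in hand, the paper decomposes it through $E$: one first shows the square $(K_n,X,E,Y)$ with $x_1,p_1,m,h$ is a pushout (since the pasted rectangle $X\to K_n\to X$ over $Y\to E\to Y$ is the identity square), and then \Cref{lem:po} forces the remaining square $(E,Y,K_q,Y)$ with $e,h,\id{Y},y_1$ to be a pushout. Because $e\in\mathcal{M}\subseteq\ad{N}$ and $h\in\mathcal{N}$, this last square is also a pullback, and a pullback of $\id{Y}$ makes $e$ an isomorphism. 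That is the missing mechanism.

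Your final paragraph does not supply an argument of this kind. Pulling $C$ back along $\bar q$ and concluding that $k,\gamma_q$ are jointly epic only gives that $\phi$ is an epimorphic mono in $\mathcal{M}$; nothing in the hypotheses makes $\mathcal{M}$-monos regular or balanced, so this does not force invertibility. The closure of $\mathcal{M}$ under $\mathcal{M},\mathcal{N}$-unions is used only to place $\phi$ in $\mathcal{M}$ (hence in $\ad{N}$); the actual invertibility comes from the decomposition of the pushout $B$ described above. So the gap is real: you need $B$ to be a pushout first (via the stability cube over $y_i$), and then the two-step factorisation of $B$ through $E$ to conclude that $\phi$ is an isomorphism. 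Once you have that, your reduction ``$A$ done $\Rightarrow$ everything else by pasting'' goes through exactly as you wrote.
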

\begin{proof}
	The rightmost square in both diagrams is a pushout by hypothesis, since it is an $\mathcal{M}, \mathcal{N}$-pushout and $m$ is $\mathcal{N}$-adhesive.	Now, by Lemma \ref{lem:pb1} the rectangles
	\[\xymatrix@R=16pt{ K_n \ar[d]_{x_1} \ar[r]^{x_2}& X \ar[r]^{m} \ar[d]_{n}& Y\ar[d]^{q}  & K_n \ar[d]_{x_2} \ar[r]^{x_1}& X \ar[r]^{m} \ar[d]_{n}& Y\ar[d]^{q}\\X \ar[r]_{n}& Y\ar[r]_{p} & W &  X \ar[r]_{n}& Y\ar[r]_{p} & W}\]	
	are pullbacks, but then also the following rectangles are pullbacks.
	\[\xymatrix@R=16pt{ K_n \ar@/^.4cm/[rr]^{m\circ x_2}\ar[d]_{x_1} \ar[r]_{k}& K_q \ar[r]_{y_2} \ar[d]_{y_1}& Y\ar[d]^{q}  & K_n \ar[d]_{x_2} \ar@/^.4cm/[rr]^{m\circ x_1} \ar[r]_{k}& K_q \ar[r]_{y_1} \ar[d]_{y_2}& Y\ar[d]^{q}\\X \ar@/_.4cm/[rr]_{p\circ n}\ar[r]^{m}& Y\ar[r]^{q} & W &  X \ar@/_.4cm/[rr]_{p\circ n} \ar[r]^{m}& Y\ar[r]^{q} & W}\]
	Therefore their left halves, which are the  central squares of the original diagrams, are pullbacks too. In particular this shows that $k$ belongs to $\mathcal{M}$ and thus it is $\mathcal{N}$-adhesive. We can now consider the following two cubes in which all faces are pullbacks
	\[\xymatrix@C=13pt@R=13pt{&K_n\ar[dd]|\hole_(.65){x_2}\ar[rr]^{x_1} \ar[dl]_{k} && X \ar[dd]^{n} \ar[dl]_{m} & & K_n\ar[dd]|\hole_(.65){x_1}\ar[rr]^{x_2} \ar[dl]_{k} && X \ar[dd]^{n} \ar[dl]_{m}\\ K_q  \ar[dd]_{y_2}\ar[rr]^(.65){y_1} & & Y \ar[dd]_(.3){q}&& K_q  \ar[dd]_{y_1}\ar[rr]^(.65){y_2} & & Y \ar[dd]_(.3){q}\\&X\ar[rr]|\hole^(.65){n} \ar[dl]_{m} && Z \ar[dl]^{p} && X\ar[rr]|\hole^(.65){n} \ar[dl]_{m} && Z \ar[dl]^{p}\\Y \ar[rr]_{q} & & W && Y \ar[rr]_{q} & & W}\]
	which prove that the two central squares in the original diagram are also pushouts.
	
	We are left with the last square. We can deduce that is a pullback applying Lemma \ref{lem:pb1} to the rectangle
	\[\xymatrix@R=16pt{X \ar@/^.4cm/[rr]^{\id{X}}\ar[r]_{\gamma_n} \ar[d]_{m} & K_n \ar[d]_{k} \ar[r]_{x_1}& X  \ar[d]^{m}\\Y \ar@/_.4cm/[rr]_{\id{X}}\ar[r]^{\gamma_q}& K_q \ar[r]^{y_1}& Y}\]
	
	 By construction $\gamma_n$ is a split mono thus it is in $\mathcal{N}$. By hypothesis $m\in\mathcal{M}$ is $\mathcal{N}$-adhesive and we can build the following diagram, in which the inner square is a pushout.
	\[\xymatrix@R=16pt{X\ar[r]^-{\gamma_n} \ar[d]_{m} & K_n \ar[d]^{p_1} \ar@/^.3cm/[ddr]^{k}\\
	Y\ar[r]_{p_2} \ar@/_.3cm/[drr]_-{\gamma_q} & E\ar@{.>}[dr]^{e} \\ && K_q}\]
We already know that the outer edges form a pullback square. $\gamma_q$ is in $\mathcal{N}$ because it is a split mono and $k$ is $\mathcal{N}$-adhesive, thus  by Proposition \ref{prop:uni} we get  a mono $e:E\to K_q$ filling the diagram and such that
$[e]=[k]\vee [\gamma_q]$.
Since $\gamma_q$ is also in $\mathcal{M}$, $e$ is an $\mathcal{M},\mathcal{N}$-union and thus it belongs to $\mathcal{M}$. Now, by construction
\begin{align*}
	\id{Y}\circ m &=m  = m\circ \id{X}=m\circ x_1\circ \gamma_n
\end{align*}
thus there exists a $h:E \to Y$ filling the diagram
\[\xymatrix@C=25pt@R=16pt{X \ar@/^.4cm/[rr]^{\id{X}}\ar[d]_{m} \ar[r]_{\gamma_n}& K_n \ar[d]^{p_1} \ar[r]_{x_1} &X \ar[d]^{m}\\ Y \ar@/_.4cm/[rr]_{\id{Y}}\ar[r]^{p_2} & E\ar@{.>}[r]^{h} & Y}\]
In this diagram the left square and the whole rectangle are pushouts, thus by Lemma \ref{lem:po} the right square is a pushout too. Now, $x_1\in \mathcal{N}$ as it is the pullback of $n$, and thus $h$ belongs to $\mathcal{N}$ too.  On the other hand we have already proved that in diagram
\[\xymatrix@R=16pt{ K_1 \ar@/^.4cm/[rr]^{k} \ar[d]_{x_1} \ar[r]_{p_1}& E\ar[d]^{h} \ar[r]_{e}& K_q\ar[d]^{y_1}\\X \ar@/_.4cm/[rr]_{m}\ar[r]^m & Y \ar[r]^{\id{Y}}& Y}\]
the whole rectangle is a pushout hence, using again Lemma \ref{lem:po}, it follows that its right half
is a pushout too. By hypothesis $e$ is $\mathcal{N}$-adhesive and thus the previous square is also a pullback, showing that $e$ is an isomorphism. 

We are left with stability: $n\in \mathcal{N}$ by hypothesis, $\gamma_n$ is in $\mathcal{N}$ because it is a split mono and $x_1$ and $x_2$ belongs to $\mathcal{N}$ as they are pullbacks of $n$. Since we have proved that $m$ and $k$ are in $\mathcal{M}$ we know that they are $\mathcal{N}$-adhesive and we can conclude.
\end{proof}

Our next step is proving that if $\mathcal{M}$ contains only $\mathcal{N}$-adhesive morphisms then $\mathcal{M}, \mathcal{N}$-pushouts are almost Van Kampen squares.

\begin{lem}\label{lem:q3/4} Let $\X$ be a category with pullbacks and consider the following cube in which the left, back, bottom and top faces are pullbacks.
\[\xymatrix@C=13pt@R=13pt{&X'\ar[dd]|\hole_(.65){x}\ar[rr]^{n'} \ar[dl]_{m'} && Z' \ar[dd]^{z} \ar[dl]_{p'} \\ Y'  \ar[dd]_{y}\ar[rr]^(.65){q'} & & W' \ar[dd]_(.3){w}\\&X\ar[rr]|\hole^(.65){n} \ar[dl]^{m} && Z \ar[dl]^{p} \\Y \ar[rr]_{q} & & W}\]
Suppose that $p$ and $p'$ are monos and that the top face is a stable pushout. Then the right face is a pullback.
\end{lem}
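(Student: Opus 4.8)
The plan is to reduce the statement to the left-cancellation property for pullbacks along a stable pushout recorded in \cref{lem:pb2}, taking the top face as the distinguished stable pushout. Since the cube commutes, the right face commutes, so it suffices to show it is a pullback; and a square is a pullback iff its transpose is. I will therefore read the right face with $z:Z'\to Z$ and $p:Z\to W$ as its horizontal edges and $p':Z'\to W'$, $w:W'\to W$ as its vertical ones, so that the apex $W'$ of the top pushout sits in the bottom-left corner. This is exactly the shape of the ``common right half'' appearing in \cref{lem:pb2}, with the stable pushout being the top face (apex $W'$, coprojections $q':Y'\to W'$ and $p':Z'\to W'$).

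First I would build the rectangle of \cref{lem:pb2} whose left half is the top face itself: since the top face is also assumed to be a pullback, it serves directly as the required left-hand pullback square sitting over the coprojection $q'$. For the corresponding whole rectangle I must check that
\[\xymatrix@R=14pt{X' \ar[d]_{m'} \ar[r]^{z\circ n'} & Z \ar[d]^{p} \\ Y' \ar[r]_{w\circ q'} & W}\]
is a pullback. By commutativity of the cube, $z\circ n'=n\circ x$ and $w\circ q'=q\circ y$, so this square coincides with the outer rectangle obtained by horizontally pasting the left face $(X'\!\to\! Y',\,X'\!\to\! X,\,Y'\!\to\! Y,\,X\!\to\! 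Y)$ with the bottom face $(X\!\to\! Z,\,X\!\to\! Y,\,Z\!\to\! W,\,Y\!\to\! W)$; both are pullbacks by hypothesis, so \cref{lem:pb1} gives that the outer rectangle, hence this whole rectangle, is a pullback.

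The second rectangle is where the monomorphism hypotheses on $p$ and $p'$ enter, and I expect this to be the crux of the argument. Its left half should be the pullback of $p'$ along itself, i.e.\ the kernel pair of $p'$; since $p'$ is mono this collapses to the identity square on $Z'$, which is trivially a pullback. Its whole rectangle is then
\[\xymatrix@R=14pt{Z' \ar[d]_{\id{Z'}} \ar[r]^{z} & Z \ar[d]^{p} \\ Z' \ar[r]_{w\circ p'} & W}\]
which commutes by the right face, and is a pullback precisely because $p$ is mono: any cone with legs $a$ into $Z$ and $b$ into $Z'$ satisfies $p\circ a=w\circ p'\circ b=p\circ z\circ b$, so left-cancelling $p$ forces $a=z\circ b$, and the factorisation through $Z'$ is $b$ itself.

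Finally, with the top face as a stable pushout and with both whole rectangles and both left halves established as pullbacks, \cref{lem:pb2} yields that their common right half is a pullback; this common right half is the transposed right face, so the right face is a pullback, as required. The only genuine subtlety is the bookkeeping of the two rectangles and their orientation relative to the pushout apex $W'$ — in particular recognising that the assumptions ``$p'$ mono'' and ``$p$ mono'' are exactly what turn the second rectangle and its left half into pullbacks — after which the conclusion is a direct application of \cref{lem:pb1} and \cref{lem:pb2}. (Note that only the commutativity of the back and front faces is needed, not their being pullbacks.)
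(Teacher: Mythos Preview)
Your proof is correct and follows essentially the same route as the paper: both apply \cref{lem:pb2} with the top face as the stable pushout, use the top face itself as the first left half and the paste of the left and bottom faces (via \cref{lem:pb1}) as the first whole rectangle, and use the mono hypotheses on $p'$ and $p$ to make the second left half (the kernel pair of $p'$) and the second whole rectangle into pullbacks. The paper phrases the second whole rectangle slightly differently—first observing that $Z'\to Z\to Z$ over $Z'\to Z\to W$ is a pullback because $p$ is mono, then identifying this with the needed rectangle via $p\circ z=w\circ p'$—but this is the same argument you give directly. Your closing remark is also accurate: the back face is used only for commutativity, not as a pullback (the front face was never assumed to be one).
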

\begin{proof}Since $p$ is a mono, by Lemma \ref{lem:pb1}, the rectangle
	\[\xymatrix@R=16pt{Z' \ar[d]_{\id{Z'}} \ar[r]^{z}& Z \ar[d]_{\id{Z}}\ar[r]^{\id{Z}} & Z \ar[d]^{p}\\ Z' \ar[r]_z & Z \ar[r]_{p}& W}\]
	is a pullback. Take now the following three diagrams
	\[\xymatrix@R=16pt{X' \ar[r]^{n'} \ar[d]_{m'}& Z'\ar[d]^{p'} &X' \ar@/^.5cm/[rr]^{n\circ x}\ar[r]_{n'} \ar[d]_{m'}& Z'\ar[d]_{p'}  \ar[r]_{z}& Z \ar[d]^{p} &Z' \ar@/^.5cm/[rr]^{z}\ar[r]_{\id{Z'}} \ar[d]_{\id{Z'}}& Z'\ar[d]_{p'}  \ar[r]_{z}& Z \ar[d]^{p}\\ Y' \ar[r]_{q'} & W' & Z'\ar[r]^{q'} \ar@/_.5cm/[rr]_{q\circ y} & W'  \ar[r]^{w}& W& Z'\ar[r]^{p'} \ar@/_.5cm/[rr]_{p\circ z} & W'  \ar[r]^{w}& W}\]
	By hypothesis the first square is a stable pushout and the left half of the first rectangle is a pullback. Since also the bottom face is a pullback by hypothesis, it follows that the whole first rectangle is a pullback too. By the previous observation, the whole second rectangle is a pullback and, since $p'$ is a mono, its first half is a pullback square. We can then apply Lemma \ref{lem:pb2} to get the thesis.
\end{proof}

\begin{cor}\label{lem:3/4}
Let $(\mathcal{M}, \mathcal{N})$ be a preadhesive structure on a category $\X$ with pullbacks and suppose that every arrow in $\mathcal{M}$ is $\mathcal{N}$-adhesive. For every $m\in \mathcal{M}$, $n\in \mathcal{N}$ and cube
\[\xymatrix@C=13pt@R=13pt{&X'\ar[dd]|\hole_(.65){x}\ar[rr]^{n'} \ar[dl]_{m'} && Z' \ar[dd]^{z} \ar[dl]_{p'} \\ Y'  \ar[dd]_{y}\ar[rr]^(.65){q'} & & W' \ar[dd]_(.3){w}\\&X\ar[rr]|\hole^(.65){n} \ar[dl]^{m} && Z \ar[dl]^{p} \\Y \ar[rr]_{q} & & W}\]
if the top and bottom faces are pushouts and the left and back ones are pullbacks then the right face is a pullback.
\end{cor}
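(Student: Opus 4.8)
The plan is to derive the statement directly from Lemma \ref{lem:q3/4}, so the entire work consists in verifying its hypotheses for the cube at hand: that the left, back, bottom and top faces are pullbacks, that the arrows $p$ and $p'$ are monos, and that the top face is moreover a stable pushout. The given data only tell us that the bottom and top faces are pushouts and that the left and back faces are pullbacks, so I would spend the argument upgrading the two pushouts and extracting the mono conditions, all of which is bought by the assumption that every arrow of $\mathcal{M}$ is $\mathcal{N}$-adhesive.

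First I would transport membership in $\mathcal{M}$ and $\mathcal{N}$ to the primed arrows. Since $(\mathcal{M},\mathcal{N})$ is a preadhesive structure, both classes are stable under pullbacks; as the left face is a pullback with $m\in\mathcal{M}$ we obtain $m'\in\mathcal{M}$, and as the back face is a pullback with $n\in\mathcal{N}$ we obtain $n'\in\mathcal{N}$. By hypothesis both $m$ and $m'$ are then $\mathcal{N}$-adhesive, hence $\mathcal{N}$-preadhesive by Proposition \ref{prop:dec}(1). Consequently, by Definition \ref{def:ade} the pushout of $m$ along $n\in\mathcal{N}$ may be taken to be a stable pushout which is also a pullback, and likewise for the pushout of $m'$ along $n'\in\mathcal{N}$; by the isomorphism-invariance recalled in Remark \ref{rem:tec1} the given bottom and top faces inherit both properties. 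In particular the bottom and top faces are pullbacks, and the top face is a stable pushout.

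It remains to see that $p$ and $p'$ are monos. Here I would invoke Lemma \ref{lem:adpush}(1): the arrows $m,m'$ lie in $\mathcal{M}\subseteq\mon(\X)$, so they are monos, and they are $\mathcal{N}$-adhesive; applying that lemma to the bottom pushout (with $n\in\mathcal{N}$) gives that $p$ is a mono, and applying it to the top pushout (with $n'\in\mathcal{N}$) gives that $p'$ is a mono.

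At this point every hypothesis of Lemma \ref{lem:q3/4} is in place, so that lemma yields that the right face is a pullback, as required. I do not expect any genuine obstacle: the statement is essentially a packaging of Lemma \ref{lem:q3/4}, and the only point demanding care is the bookkeeping that promotes the bare pushout hypotheses to the stronger ``stable pushout and pullback'' and ``$p,p'$ mono'' statements that Lemma \ref{lem:q3/4} consumes. That promotion is exactly what the $\mathcal{N}$-adhesivity of $m$ and $m'$, together with stability of $\mathcal{M}$ and $\mathcal{N}$ under pullbacks, provides.
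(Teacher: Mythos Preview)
Your proposal is correct and follows precisely the route the paper intends: the corollary is stated without proof as an immediate consequence of Lemma~\ref{lem:q3/4}, and your argument supplies exactly the missing bookkeeping (pullback-stability of $\mathcal{M},\mathcal{N}$ gives $m'\in\mathcal{M}$, $n'\in\mathcal{N}$; $\mathcal{N}$-adhesivity of $m,m'$ upgrades the pushouts to stable pushouts and pullbacks; $p,p'$ are monos). One cosmetic remark: since $(\mathcal{M},\mathcal{N})$ is a preadhesive structure, $\mathcal{M}$ is stable under pushouts, so $p,p'\in\mathcal{M}\subseteq\mon(\X)$ directly, which is slightly quicker than invoking Lemma~\ref{lem:adpush}(1).
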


We are now ready to prove our theorem, generalizing \cite[Prop.~3.1]{garner2012axioms}.

\begin{thm}\label{thm:primo}Let $(\mathcal{M}, \mathcal{N})$ be a preadhesive structure on a category $\X$ with pullbacks and suppose that every split mono is in $\mathcal{M}\cap\mathcal{N}$, $\mathcal{M}\subseteq \ad{N}$ and $\mathcal{M}$ is closed under $\mathcal{M},\mathcal{N}$-unions, then $\X$ is $\mathcal{M}, \mathcal{N}$-adhesive. 
\end{thm}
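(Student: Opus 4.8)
The plan is to verify the three clauses of \cref{def:class} in turn, the last of them — that $\mathcal{M},\mathcal{N}$-pushouts are Van Kampen — being where all the work lies. Clause $(1)$ is immediate, since $\X$ has all pullbacks and hence in particular $\mathcal{M}$-pullbacks. For clause $(2)$, every $m\in\mathcal{M}$ is $\mathcal{N}$-adhesive by hypothesis, hence $\mathcal{N}$-preadhesive by \cref{prop:dec}(1); thus for any $n\in\mathcal{N}$ with the same domain a pushout of $n$ along $m$ exists, and moreover \cref{def:ade} guarantees that this pushout is \emph{stable} and is itself a pullback. The stability of every $\mathcal{M},\mathcal{N}$-pushout already discharges the ``if'' half of the Van Kampen condition, so only the ``only if'' half remains: given a cube whose bottom face is an $\mathcal{M},\mathcal{N}$-pushout, whose back and left faces are pullbacks, and whose top face is a pushout, I must show that the front and right faces are pullbacks.

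The right face is the easy one. Since $p$ and $p'$, the coprojections parallel to $m$ and $m'$, are pushouts of arrows of $\mathcal{M}$, they lie in $\mathcal{M}$ and are therefore monos; hence \cref{lem:3/4} applies directly and yields that the right face is a pullback.

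The front face is the crux, and here the kernel-pair machinery of \cref{lem:fon} is essential: the difficulty is that the bottom edge $q$ of the front face lies only in $\mathcal{N}$, not necessarily among the monos, so the ``three-quarters'' argument of \cref{lem:3/4} — which only ever produces the face lying over the $\mathcal{M}$-leg — cannot be applied to it. Instead I would first invoke \cref{lem:fon} on the bottom $\mathcal{M},\mathcal{N}$-pushout to obtain the kernel pairs $K_n,K_q$ together with the stable-pushout-and-pullback decomposition; in particular the square with vertex $Y$ and legs $m\colon X\to Y$ and $y_1\colon K_q\to Y$ is a stable pushout. The strategy is then to feed this stable pushout into \cref{lem:pb2} with the front face as the ``common right half''. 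This requires exhibiting two pullback rectangles, one over the leg $m$ and one over the leg $y_1$. The rectangle over $m$ is routine: its left half is the given left face and its whole is a pullback by pasting (\cref{lem:pb1}) the back and right faces. It is therefore the rectangle over $y_1$, whose whole amounts to the identification $K_q\times_Y Y'\cong K_q\times_W W'$, that carries the content.

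To produce this last pullback without circularity, I would apply \cref{lem:fon} also to the top face, which is again an $\mathcal{M},\mathcal{N}$-pushout because $m'\in\mathcal{M}$ and $n'\in\mathcal{N}$ by stability of the two classes under pullback, obtaining kernel pairs $K_{n'},K_{q'}$. The back and right faces then give $K_{n'}\cong K_n\times_W W'$, after which \cref{lem:3/4}, applied to the auxiliary cube sitting over the leftmost stable pushout $K_q\cong K_n+_X Y$ of \cref{lem:fon}, yields $K_{n'}\cong K_n\times_{K_q}K_{q'}$; combining these identifications through one further use of \cref{lem:pb2} delivers the missing pullback, and hence the front face. I expect this final step to be the main obstacle: the naive attempt to descend the front face along the kernel pair of $q$ is circular, since the comparison $\gamma_q\colon Y\to K_q$ is a split mono and pulling back along it merely reproduces the front face. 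The point of passing through \cref{lem:3/4} on the auxiliary cube is precisely to break this circularity, by supplying the needed pullback on the $\mathcal{M}$-side, where the three-quarters argument does apply.
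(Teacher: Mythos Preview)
Your overall architecture matches the paper's: the right face comes from \cref{lem:3/4}, and for the front face you pass to the kernel-pair decomposition of \cref{lem:fon}, build the auxiliary cube over the leftmost square $K_q\cong K_n+_X Y$, and then try to assemble the front face via \cref{lem:pb2}. Where your sketch falls short is in the ``one further use of \cref{lem:pb2}'' that is supposed to produce the missing rectangle over $y_1$.

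The two identifications you obtain --- $K_{n'}\cong K_n\times_W W'$ (from pasting back and right) and $K_{n'}\cong K_n\times_{K_q}K_{q'}$ (from \cref{lem:3/4} on the auxiliary cube) --- only control the leg of the leftmost pushout lying over $k\colon K_n\to K_q$. Any application of \cref{lem:pb2} with the leftmost pushout also requires a pullback rectangle over the other leg $\gamma_q\colon Y\to K_q$, and that rectangle either has the front face as its whole (when $s\colon K_q\to W$) or needs $Y'\cong Y\times_{K_q}K_{q'}$ via $\gamma_q,\gamma_{q'}$ as its left half (when $s\colon K_q\to Y$). You have neither, so the circularity you correctly diagnose is not yet broken.

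What the paper does at exactly this point is to apply \cref{lem:q3/4} (not \cref{lem:3/4}) to the \emph{same} auxiliary cube in the \emph{other} orientation: since $\gamma_q$ and $\gamma_{q'}$ are split monos, the hypotheses of \cref{lem:q3/4} are met and one obtains $Y'\cong Y\times_{K_q}K_{q'}$ directly. This is the extra ingredient that supplies the $\gamma_q$-leg rectangle and lets \cref{lem:pb2} go through to give $K_{q'}\cong K_q\times_Y Y'$; pasting with the kernel-pair square of $q$ then yields the long rectangle, and a final \cref{lem:pb2} (the paper uses the top face as the stable pushout, though your choice of the middle square would also work once $K_{q'}\cong K_q\times_Y Y'$ is in hand) finishes the front face. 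So your plan is essentially right but is missing one invocation of \cref{lem:q3/4}, which is precisely the device that exploits the split-mono hypothesis and resolves the obstacle you flag.
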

\begin{proof}
Every element of $\mathcal{M}$ is $\mathcal{N}$-adhesive, thus we already know that $\mathcal{M},\mathcal{N}$-pushouts exist and are stable.
Since $\X$ has all pullbacks by hypothesis, all that we have to show  is the  remaining half of the Van Kampen condition.  Take a cube in which $m\in\mathcal{M}$, $n\in\mathcal{N}$ and such that the top and bottom faces are pushouts and the left and back ones are pullbacks
\[\xymatrix@C=13pt@R=13pt{&X'\ar[dd]|\hole_(.65){x}\ar[rr]^{n'} \ar[dl]_{m'} && Z' \ar[dd]^{z} \ar[dl]_{p'} \\ Y'  \ar[dd]_{y}\ar[rr]^(.65){q'} & & W' \ar[dd]_(.3){w}\\&X\ar[rr]|\hole^(.65){n} \ar[dl]^{m} && Z \ar[dl]^{p} \\Y \ar[rr]_{q} & & W}\]
$m'$ and $n'$ belong to, respectively, $\mathcal{M}$ and $\mathcal{N}$ thus the top face is a stable pushout square, which is also a pullback. By Corollary \ref{lem:3/4} we already know that the right face is a pullback, let us prove that the other one is a pullback too.

By Lemma \ref{lem:fon}, in the following diagrams all squares are stable pushouts and pullbacks. 
\[\xymatrix@R=16pt{X \ar[r]^-{\gamma_n} \ar[d]_{m} & K_n \ar[d]_{k} \ar[r]^{x_1}& X \ar[r]^n \ar[d]_{m}& Z\ar[d]^{p} & X \ar[r]^-{\gamma_n} \ar[d]_{m}& K_n \ar[d]_{k} \ar[r]^{x_2}& X \ar[r]^n \ar[d]_{m}& Z\ar[d]^{p}\\Y \ar[r]_-{\gamma_q}& K_q \ar[r]_{y_1}& Y\ar[r]_{q} & W & Y \ar[r]_-{\gamma_q} & K_q \ar[r]_{y_2}& Y\ar[r]_{q} & W\\X' \ar[r]^-{\gamma_{n'}} \ar[d]_{m'} & K_{n'} \ar[d]_{k'} \ar[r]^{x'_1}& X' \ar[r]^{n'} \ar[d]_{m'}& Z'\ar[d]^{p'} & X' \ar[r]^-{\gamma_{n'}} \ar[d]_{m'}& K_{n'} \ar[d]_{k'} \ar[r]^{x'_2}& X' \ar[r]^{n'} \ar[d]_{m'}& Z'\ar[d]^{p'}\\Y' \ar[r]_-{\gamma_{q'}}& K_{q'} \ar[r]_{y'_1}& Y\ar[r]_{q'} & W' & Y \ar[r]_-{\gamma_{q'}} & K_{q'} \ar[r]_{y'_2}& Y\ar[r]_{q'} & W}\]

By Corollary \ref{cor:cube}, there exist the dotted $t_1:K_{n'}\to K_n$ and $t_2:K_{q'}\to K_q$ as below: 
	\[\xymatrix@C=13pt@R=13pt{&K_{n'}\ar[dd]|\hole_(.65){x'_2}\ar[rr]^{x'_1} \ar@{.>}[dl]_{t_1} && X' \ar[dd]^{n'} \ar[dl]_{x} & & K_{q'}\ar[dd]|\hole_(.65){y'_2}\ar[rr]^{y'_1} \ar@{.>}[dl]_{t_2} && Y' \ar[dd]^{q'} \ar[dl]_{y}\\ K_n  \ar[dd]_{x_2}\ar[rr]^(.65){x_1} & & X\ar[dd]_(.3){n}&& K_q  \ar[dd]_{y_2}\ar[rr]^(.65){y_1} & & Y \ar[dd]_(.3){q}\\&X'\ar[rr]|\hole^(.65){n'} \ar[dl]_{x} && Z' \ar[dl]^{z} && Y'\ar[rr]|\hole^(.65){q'} \ar[dl]_{y} && W' \ar[dl]^{w}\\X \ar[rr]_{n} & & Z && Y \ar[rr]_{q} & & W}\]
and the left face of the first cube is a pullback square.  If we compute we get
\begin{align*}
	x_1\circ t_1\circ \gamma_{n'}& =x\circ x'_1\circ \gamma_{n'}=x\circ \id{X'} =\id{X} \circ x=x_1\circ \gamma_n \circ x
\\	
	y_1\circ t_2\circ \gamma_{q'}& =y\circ y'_1\circ \gamma_{q'}=y\circ \id{Y'} =\id{Y} \circ y=y_1\circ \gamma_q \circ y
\\
	y_1\circ t_2\circ k' & =y\circ y'_1\circ k'=y\circ m'\circ x'_1 = m\circ x\circ x'_1=m\circ x_1\circ t_1=y_1\circ k\circ t_1
\\
	x_2\circ t_1\circ \gamma_{n'}&=x\circ x'_2\circ \gamma_{n'} =x\circ \id{X'}
	=\id{X} \circ x=x_2\circ \gamma_n \circ x
\\
	y_2\circ t_2\circ \gamma_{q'}&=y\circ y'_2\circ \gamma_{q'}=y\circ \id{Y'}=\id{Y} \circ y=y_2\circ \gamma_q \circ y 
\\
	y_2\circ t_2\circ k'&=y\circ y'_2\circ k'=y\circ m'\circ x'_2 =m\circ x\circ x'_2=m\circ x_2\circ t_1=y_2\circ k\circ t_1
\end{align*}

therefore the following three squares commute
\[\xymatrix@R=16pt{X'  \ar[r]^{\gamma_{n'}} \ar[d]_{x}& K_{n'} \ar[d]^{t_1}& Y' \ar[d]_{y} \ar[r]^{\gamma_{q'}} & K_{q'}\ar[d]^{t_2}& K_{n'}\ar[r]^{k'} \ar[d]_{t_1} & K_{q'}\ar[d]^{t_2}\\ X \ar[r]_{\gamma_n}& K_n  & Y \ar[r]_{\gamma_q} & K_q& K_n\ar[r]_{k} & K_q}
\]
The first one of the squares above is a pullback: this follows applying Lemma \ref{lem:pb1} to the rectangle below.
\[\xymatrix@R=16pt{X' \ar[r]_{\gamma_{n'}} \ar[d]_{x}\ar@/^.4cm/[rr]^{\id{X'}}& K_{n'} \ar[r]_{x'_2} \ar[d]_{t_1} & X' \ar[d]^{x} \\ X \ar[r]^{\gamma_n} \ar@/_.4cm/[rr]_{\id{X}}& K_{n} \ar[r]^{x_2}& X}\]
We can then use these arrows $t_1$ and $t_2$ to construct the following cube
		\[\xymatrix@C=13pt@R=13pt{&X'\ar[dd]|\hole_(.65){x}\ar[rr]^{m'} \ar[dl]_{\gamma_{n'}} && Y' \ar[dd]^{y} \ar[dl]_{\gamma_{q'}}\\ K_{n'}  \ar[dd]_{t_1}\ar[rr]^(.65){k'} & & K_{q'}\ar[dd]_(.3){t_2}\\&X\ar[rr]|\hole^(.65){m} \ar[dl]_{\gamma_n} && Y \ar[dl]^{\gamma_q} \\K_n \ar[rr]_{k} & & K_q}\]
which has pullbacks as left and back faces and stable pushouts as top and bottom ones. $\gamma_q$ and $\gamma_{q'}$ are split monos, thus by Lemma \ref{lem:q3/4} the right face  is a pullback. Switching $\gamma_n$ and $m$ we get another cube		\[\xymatrix@C=13pt@R=13pt{&X'\ar[dd]|\hole_(.65){x}\ar[rr]^{\gamma_{n'}} \ar[dl]_{m'} && K_{n'} \ar[dd]^{t_1} \ar[dl]_{k'}\\ Y'  \ar[dd]_{y}\ar[rr]^(.65){\gamma_{q'}} & & K_{q'}\ar[dd]_(.3){t_2}\\&X\ar[rr]|\hole^(.65){\gamma_n} \ar[dl]_{m} && K_n \ar[dl]^{k} \\Y \ar[rr]_{\gamma_q} & & K_q}\]
to which we can apply  Corollary \ref{lem:3/4}, to get again that the right face is a pullback. Now, by Lemma \ref{lem:pb1} the following rectangle is a pullback 
\[\xymatrix@R=16pt{K_{n'} \ar[r]^{x'_1} \ar[d]_{t_1}& X' \ar[d]_{x} \ar[r]^{m'} & Y'\ar[d]^{y}\\K_n\ar[r]_{x_1} & X \ar[r]_{m} & Y}\]
Thus we can apply Lemma \ref{lem:pb2} to the diagrams
\[\xymatrix@R=16pt{X \ar[r]^{\gamma_{n}} \ar[d]_{m}& K_n\ar[d]^{k} &Y' \ar@/^.4cm/[rr]^{\id{Y'}}\ar[r]_{\gamma_{q'}} \ar[d]_{y}& K_{q'}\ar[d]_{t_2}  \ar[r]_{y'_2}& Y' \ar[d]^{y} &K_{n'} \ar@/^.4cm/[rr]^{m'\circ x'_1}\ar[r]_{k'} \ar[d]_{t_1}& K_{q'}\ar[d]_{q'}  \ar[r]_{y'_2}& Y' \ar[d]^{y}\\ Y \ar[r]_{\gamma_q} & K_q & Y\ar[r]^{\gamma_q} \ar@/_.4cm/[rr]_{\id{Y}} & K_q  \ar[r]^{y_2}& Y& K_n \ar@/_.4cm/[rr]_{m\circ x_1}\ar[r]^{k} & K_q  \ar[r]^{y_2}& Y}\]
to deduce that the square below is a pullback too.
\[\xymatrix@R=16pt{K_{q'}\ar[d]_{t_2}  \ar[r]^{y'_2}& Y' \ar[d]^{y} \\ K_q  \ar[r]_{y_2}& Y}\]	
	Which, in turn also entails that the following rectangle is a pullback.
\[\xymatrix@R=16pt{K_{q'} \ar[r]^{t_2}  \ar[d]_{y'_2}& K_q \ar[d]_{y_2} \ar[r]^{y_1}& Y\ar[d]^{q} \\ Y' \ar[r]_{y} & Y \ar[r]_q & W}\]	
	
	We can now notice that the diagrams
	 	\[\xymatrix@R=16pt{X' \ar[r]^{n'} \ar[d]_{m'}& Z'\ar[d]^{p'} &K_{q'} \ar@/^.4cm/[rr]^{y_1\circ t_2}\ar[r]_{y'_1} \ar[d]_{y'_2}& Y'\ar[d]_{q'}  \ar[r]_{y}& Y \ar[d]^{q} &X' \ar@/^.4cm/[rr]^{m\circ x}\ar[r]_{m'} \ar[d]_{n'}& Y'\ar[d]_{q'}  \ar[r]_{y}& Y \ar[d]^{q}\\ Y' \ar[r]_{q'} & W' & Y'\ar[r]^{q'} \ar@/_.4cm/[rr]_{q\circ y} & W'  \ar[r]^{w}& W& Z' \ar@/_.4cm/[rr]_{p\circ z}\ar[r]^{p'} & W'  \ar[r]^{w}& W}\]
satisfy the hypothesis of Lemma \ref{lem:pb2} and this yields the thesis.
\end{proof}

The previous theorem yields at once the following two corollaries (compare with \cite[Thm.~A and B]{garner2012axioms}).

\begin{cor}\label{cor:qad}
Let $\X$ be a category with pullbacks, then
\begin{enumerate}
	\item if $\mon(\X)\subseteq \mor(\X)_\mathsf{a}$ then $\X$ is adhesive;
	\item if $\reg(\X)\subseteq \mor(\X)_\mathsf{a}$ and it is closed under binary joins then $\X$ is quasiadhesive.
\end{enumerate}
\end{cor}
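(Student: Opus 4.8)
The plan is to derive both items directly from \Cref{thm:primo}, by choosing the classes $\mathcal{M}$ and $\mathcal{N}$ appropriately, verifying its hypotheses, and then translating the conclusion through \Cref{cor:equi}. In both cases the key preliminary move is to observe that the stated hypothesis, combined with \Cref{cor:rega}, forces a chain of inclusions to collapse into equalities; this is what makes the relevant pair a preadhesive structure in the first place.

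For the first item I would take $\mathcal{N}:=\mor(\X)$ and $\mathcal{M}:=\mon(\X)$. Since \Cref{cor:rega} always gives $\mor(\X)_{\mathsf{a}}\subseteq\reg(\X)\subseteq\mon(\X)$, the assumption $\mon(\X)\subseteq\mor(\X)_{\mathsf{a}}$ upgrades this to $\mon(\X)=\mor(\X)_{\mathsf{a}}=\reg(\X)$. With this equality, each hypothesis of \Cref{thm:primo} is routine: $(\mor(\X)_{\mathsf{a}},\mor(\X))$ is a preadhesive structure by \Cref{cor:pads}; every split mono is a mono, hence lies in $\mathcal{M}\cap\mathcal{N}=\mon(\X)$; the inclusion $\mathcal{M}\subseteq\mor(\X)_{\mathsf{a}}$ is exactly the assumption; and closure of $\mathcal{M}$ under $\mathcal{M},\mathcal{N}$-unions is automatic, since $\mathcal{M}=\mon(\X)$ contains \emph{every} mono and in particular any mono realizing a join $[m]\vee[n]$. \Cref{thm:primo} then shows $\X$ is $\mon(\X),\mor(\X)$-adhesive, which is adhesivity by \Cref{cor:equi}.

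For the second item I would instead set $\mathcal{N}:=\mor(\X)$ and $\mathcal{M}:=\reg(\X)$. Again \Cref{cor:rega} gives $\mor(\X)_{\mathsf{a}}\subseteq\reg(\X)$, and the hypothesis $\reg(\X)\subseteq\mor(\X)_{\mathsf{a}}$ yields the equality $\reg(\X)=\mor(\X)_{\mathsf{a}}$, which is precisely what lets \Cref{cor:pads} certify $(\reg(\X),\mor(\X))$ as a preadhesive structure. The split-mono condition holds because split monos are regular, and $\mathcal{M}\subseteq\mor(\X)_{\mathsf{a}}$ is the hypothesis. The only nontrivial closure clause is under $\mathcal{M},\mathcal{N}$-unions: here $\mathcal{M}\cap\mathcal{N}=\reg(\X)$, so such a union is a mono realizing $[m]\vee[n]$ with $m,n\in\reg(\X)$, and the assumption that $\reg(\X)$ is closed under binary joins says exactly that this mono is again regular. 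Thus \Cref{thm:primo} yields $\reg(\X),\mor(\X)$-adhesivity, i.e.\ quasiadhesivity by \Cref{cor:equi}.

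The one genuinely delicate point I would flag is this collapse to equalities: without it one cannot even assert that $(\mathcal{M},\mathcal{N})$ is a preadhesive structure, because $\reg(\X)$ is not in general closed under composition, and the whole reduction to \Cref{thm:primo} would break down. Everything else is bookkeeping --- aligning the closure-under-unions clause with the given join hypothesis, and recalling that split monos are regular (each split mono $m$ with retraction $r$ being the equalizer of the identity and $m\circ r$).
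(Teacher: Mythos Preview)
Your proposal is correct and follows essentially the same route as the paper, which simply records that both items are immediate from \Cref{thm:primo} (together with \Cref{cor:equi}). Your explicit use of \Cref{cor:rega} and \Cref{cor:pads} to force $\mon(\X)=\mor(\X)_{\mathsf{a}}$ (resp.\ $\reg(\X)=\mor(\X)_{\mathsf{a}}$) and thereby secure the preadhesive structure is exactly the verification the paper leaves implicit.
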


\begin{cor}\label{cor:mads}
Let $\mathcal{M}$ be a stable system of monos in a category $\X$ with pullbacks. Suppose that $\mathcal{M}$ is stable under pushouts, it contains all split monos,  it is closed under binary joins  and $\mathcal{M}\subseteq \mor(\X)_{\mathsf{a}}$, then $\X$ is an $\mathcal{M}$-adhesive category.
\end{cor}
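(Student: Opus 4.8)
The plan is to reduce the statement to Theorem \ref{thm:primo} by taking the matching class $\mathcal{N} := \mor(\X)$, and then to convert the resulting $\mathcal{M}, \mor(\X)$-adhesivity into $\mathcal{M}$-adhesivity through the equivalence of Lemma \ref{lem:mad}. The first task is to check that $(\mathcal{M}, \mor(\X))$ is a preadhesive structure. Since $\mathcal{M}$ is a stable system of monos, it contains all isomorphisms, is closed under composition, and is stable under pullbacks; it is stable under pushouts by hypothesis. Closure of $\mathcal{M}$ under decomposition comes for free from Lemma \ref{lem:deco}: if $g\circ f\in\mathcal{M}$ and $g\in\mathcal{M}$, then $g$ is in particular a mono, so $f\in\mathcal{M}$. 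All the remaining requirements concern $\mathcal{N}=\mor(\X)$ (containing isomorphisms, closure under composition, decomposition and $\mathcal{M}$-decomposition, stability under pullbacks and pushouts) and hold trivially because $\mor(\X)$ is the class of all arrows.

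Next I would verify the three further hypotheses of Theorem \ref{thm:primo}. Because $\mathcal{M}\subseteq\mon(\X)\subseteq\mor(\X)$, we have $\mathcal{M}\cap\mor(\X)=\mathcal{M}$; hence every split mono lies in $\mathcal{M}\cap\mor(\X)$ by assumption, and the inclusion $\mathcal{M}\subseteq\mor(\X)_{\mathsf{a}}$ is precisely one of the hypotheses. For the last condition, unwinding the definition of $\mathcal{M},\mor(\X)$-union shows that such a union is exactly a mono $u$ with $[u]=[m]\vee[n]$ for $m,n\in\mathcal{M}\cap\mor(\X)=\mathcal{M}$, i.e. a binary join of two $\mathcal{M}$-subobjects; closure of $\mathcal{M}$ under binary joins therefore delivers closure under $\mathcal{M},\mor(\X)$-unions. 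With all hypotheses in place, Theorem \ref{thm:primo} gives that $\X$ is $\mathcal{M},\mor(\X)$-adhesive.

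To conclude, I would apply Lemma \ref{lem:mad}: as $\mathcal{M}$ is a stable system of monos that is stable under pushouts, $\mathcal{M},\mor(\X)$-adhesivity is equivalent to $\mathcal{M}$-adhesivity, yielding the thesis. The argument is essentially a bookkeeping check of hypotheses, so I do not expect a genuine obstacle; the only points deserving a moment's care are confirming that $\mathcal{M},\mor(\X)$-unions really coincide with ordinary binary joins of $\mathcal{M}$-subobjects (which hinges on the identity $\mathcal{M}\cap\mor(\X)=\mathcal{M}$), and that the decomposition closure of $\mathcal{M}$ is obtained from Lemma \ref{lem:deco} rather than silently assumed.
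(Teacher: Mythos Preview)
Your proposal is correct and matches the paper's intended derivation: the corollary is stated immediately after Theorem \ref{thm:primo} as a direct consequence, obtained by specializing to $\mathcal{N}=\mor(\X)$ and then invoking Lemma \ref{lem:mad} (with the decomposition closure of $\mathcal{M}$ supplied by Lemma \ref{lem:deco}, exactly as you note). The care you take with the identification $\mathcal{M}\cap\mor(\X)=\mathcal{M}$ and the meaning of ``closed under binary joins'' aligns with the paper's own clarifying remark following the corollary.
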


\begin{rem} In Corollaries \ref{cor:qad} and \ref{cor:mads},  closure under joins means that, given  $m:M\to X$, $n:N\to X$ in $\reg(\X)$ or in $\mathcal{M}$, any representative of $[m]\vee[n]$, which exists by virtue of Proposition \ref{prop:uni}, is again in $\reg(\X)$ or in $\mathcal{M}$.
\end{rem}

\paragraph{Application to toposes}
In \cite{lack2006toposes} it is proved that any elementary topos is adhesive, using descent techniques. In this section we will provide an alternative proof, already hinted at in \cite{garner2012axioms}, based on what we have shown in this section. Our main references for topos theory are \cite{johnstone2002sketches1,johnstone2002sketches,maclane2012sheaves,mclarty1992elementary}.

\begin{defi} Let $\X$ be a finitely complete category. A \emph{subobject classifier} is a mono $\true:1\to \Omega$ such that, for every monomorphism $m:M\to X$, there is a unique $\chi_m:X\to \Omega$ such that the square below is a pullback
	\[\xymatrix@R=16pt{M \ar[d]_{m}\ar[r]^{!_M} & 1 \ar[d]^{\true}\\ X \ar@{.>}[r]_{\chi_m} & \Omega}\]
	A \emph{topos}  is a finitely complete, cartesian closed category $\X$ which admits a subobject classifier.
\end{defi}

Let us state some well known properties of toposes.

\begin{prop}\label{prop:stab}
	If $\X$ is a topos, then $\X$ is finitely cocomplete and all pushout squares are stable.
\end{prop}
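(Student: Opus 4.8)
The plan is to prove the two assertions---finite cocompleteness and stability of all pushouts---by separate classical arguments, both resting on standard topos-theoretic inputs that I would cite from the references.

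For finite cocompleteness I would invoke Par\'e's monadicity theorem: in a topos $\X$ the contravariant power-object functor $P=\Omega^{(-)}\colon \X^{\mathrm{op}}\to \X$ is monadic, exhibiting $\X^{\mathrm{op}}$ as the category of Eilenberg--Moore algebras for the induced monad on $\X$. Since a monadic functor creates limits, and $\X$ has all finite limits by the definition of topos, $\X^{\mathrm{op}}$ inherits all finite limits; this is precisely the statement that $\X$ has all finite colimits. This is the deep input, so here I would do little beyond pointing to the literature.

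For stability the key structural fact is that a topos is locally cartesian closed: for every arrow $w\colon W'\to W$ the pullback functor $w^{*}\colon \X/W\to \X/W'$ has a right adjoint. Being a left adjoint, $w^{*}$ preserves all colimits, in particular pushouts; equivalently, colimits in a topos are stable under pullback. It then remains to recast the cube from the definition of stability in this language. Given the bottom pushout of $n\colon X\to Z$ along $m\colon X\to Y$ with cocone $p\colon Z\to W$, $q\colon Y\to W$, each of $X,Y,Z,W$ carries a canonical arrow to $W$, so the bottom square is a pushout in the slice $\X/W$ whose value is the terminal object $(W,\id{W})$. Assuming, as stability does, that the four vertical faces are pullbacks, the front and right faces give $Y'\cong Y\times_{W}W'$ and $Z'\cong Z\times_{W}W'$, while pasting the back face with the right one via \Cref{lem:pb1} gives $X'\cong X\times_{W}W'$. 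Hence the top face is exactly $w^{*}$ applied to the bottom pushout, and preservation of pushouts by $w^{*}$ makes the top face a pushout, which is stability.

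The main obstacle is genuinely concentrated in the two cited theorems---Par\'e monadicity for cocompleteness and local cartesian closure for the stability of colimits---so the only hands-on work is the bookkeeping of the cube translation; the point requiring care there is to verify that the vertical pullback faces correctly identify the primed objects as the pullbacks along $w$ of the unprimed ones, after which the slice-categorical reformulation is immediate.
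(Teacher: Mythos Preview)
Your argument is correct and is the standard route. The paper does not actually prove this proposition: it is stated as a known fact from topos theory, with the proofs of such results explicitly deferred to the extended version and the references (Johnstone, Mac Lane--Moerdijk). There is therefore nothing in the paper to compare against, but your two ingredients---Par\'e's monadicity theorem for finite cocompleteness, and local cartesian closure so that each $w^{*}\colon \X/W\to \X/W'$ is a left adjoint and hence preserves pushouts---are exactly the classical ones these references supply.

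The only place worth a word of extra care is the passage back from a pushout in $\X/W'$ to one in $\X$: you should note that the forgetful functor $\X/W'\to\X$ creates (in particular reflects) colimits, so that once you have identified the top face with $w^{*}$ of the bottom face via the four pullback faces and \Cref{lem:pb1}, the resulting pushout in the slice is automatically one in $\X$. With that remark your bookkeeping is complete.
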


\begin{lemC}[{\cite[Cor.~A2.4.3]{johnstone2002sketches1}}]\label{lem:popb}
	Let $m:X\to Y$ and $f: X\to Z$ be arrows in a topos $\X$ and suppose that $m$ is a monomorphism, if the square 
	\[\xymatrix@R=16pt{X \ar[r]^{f} \ar[d]_{m}& Z \ar[d]^{q_1}\\ Y \ar[r]_{q_2} & Q}\]
	is a pushout, then $q_1$ is a mono and the square is also a pullback.
\end{lemC}

From Proposition \ref{prop:stab} and Lemma \ref{lem:popb} we can easily deduce the following.
\begin{cor}
	In a topos $\X$, every mono is adhesive.
\end{cor}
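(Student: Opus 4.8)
The plan is to reduce the statement to showing that every monomorphism is \emph{preadhesive} (i.e.\ $\mor(\X)$-preadhesive), and then to invoke the fact that pullbacks of monos are monos. Indeed, by Definition~\ref{def:ade} an arrow $m$ is adhesive precisely when \emph{all} of its pullbacks are preadhesive; since in any category the pullback of a mono along an arbitrary morphism is again a mono, once we know that every mono is preadhesive it follows immediately that every pullback of a given mono $m$ is preadhesive, and hence that $m$ is adhesive. So the entire content lies in establishing preadhesivity of an arbitrary mono.

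To this end I would fix a mono $m:X\to Y$ and an arbitrary arrow $n:X\to Z$ (recall that here $\mathcal{N}=\mor(\X)$). By Proposition~\ref{prop:stab} the topos $\X$ is finitely cocomplete, so the pushout of $m$ along $n$ exists:
\[\xymatrix@R=16pt{X \ar[r]^{n} \ar[d]_{m}& Z \ar[d]^{p}\\ Y \ar[r]_{q} & W}\]
The same proposition asserts that every pushout square in a topos is stable, so this square is in particular a stable pushout; this supplies the stability half of the definition for free.

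It remains only to verify that this pushout is also a pullback for $p$ along $q$, which is exactly the content of Lemma~\ref{lem:popb}: since $m$ is a monomorphism and the square is a pushout, the lemma yields both that $p$ is a mono and that the square is a pullback. Hence $m$ is preadhesive, and by the reduction above $m$ is adhesive. I expect no genuine obstacle here: the only point requiring (minor) care is the reduction step, namely checking that the defining condition ``all pullbacks are preadhesive'' holds, and this rests solely on stability of monos under pullback, while Proposition~\ref{prop:stab} and Lemma~\ref{lem:popb} deliver cocompleteness, stability, and the pushout–pullback coincidence directly.
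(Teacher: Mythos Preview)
Your proposal is correct and follows essentially the same approach as the paper, which simply states that the corollary follows from Proposition~\ref{prop:stab} and Lemma~\ref{lem:popb}. You have spelled out the details the paper leaves implicit: the reduction from adhesivity to preadhesivity via stability of monos under pullback, and the use of the two cited results to establish existence, stability, and the pushout--pullback coincidence.
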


We can now apply Corollary \ref{cor:mequiv}, Lemma \ref{lem:popb} and Remark \ref{rem:ex} to get our result.

\begin{cor}\label{cor:topad}
	Every topos is an adhesive category.
\end{cor}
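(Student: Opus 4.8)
The plan is to combine the two facts just established about toposes with the general criterion from Corollary \ref{cor:qad}(1). The key observation is that Corollary \ref{cor:qad}(1) reduces adhesivity of an arbitrary category with pullbacks $\X$ to a single inclusion, namely $\mon(\X)\subseteq \mor(\X)_{\mathsf{a}}$: once every mono is adhesive, the category is automatically adhesive. So I would first confirm that a topos meets the standing hypotheses of that corollary and then verify the inclusion.

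First I would record that a topos $\X$ has all pullbacks, since it is finitely complete by definition; this is exactly the hypothesis needed to invoke Corollary \ref{cor:qad}. Next, the heart of the matter is to show that every monomorphism of $\X$ is adhesive, i.e.\ that $\mon(\X)\subseteq \mor(\X)_{\mathsf{a}}$. But this is precisely the content of the corollary stated just above (``In a topos $\X$, every mono is adhesive''), which itself follows from Proposition \ref{prop:stab} (pushouts exist and are stable) together with Lemma \ref{lem:popb} (a pushout of a mono is a pullback along a mono). Thus the two ingredients needed for $\mor(\X)_{\mathsf{a}}$-membership---that pushouts along monos are stable and are also pullbacks for the opposite pair---are already in hand.

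With those two facts in place, the proof is essentially a citation: apply Corollary \ref{cor:qad}(1) with the inclusion $\mon(\X)\subseteq \mor(\X)_{\mathsf{a}}$ to conclude that $\X$ is adhesive. Concretely, I would write that since $\X$ has pullbacks and every mono is adhesive by the preceding corollary, Corollary \ref{cor:qad}(1) immediately gives adhesivity.

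I do not expect any genuine obstacle here, since all the real work has been front-loaded into Proposition \ref{prop:stab}, Lemma \ref{lem:popb}, and the preceding corollary on adhesiveness of monos. The only mild subtlety is bookkeeping: one must make sure the notion of ``adhesive morphism'' (Definition \ref{def:ade}, with $\mathcal{N}=\mor(\X)$) lines up with what Lemma \ref{lem:popb} and Proposition \ref{prop:stab} supply---namely that for a mono $m$ and \emph{any} arrow, the pushout is a stable pullback square---and that the closure requirement (``all pullbacks are preadhesive'') is met because pullbacks of monos are monos in a topos. I note the excerpt's invocation of ``Corollary \ref{cor:mequiv}'' and ``Remark \ref{rem:ex}'' suggests the authors route through the $\mathcal{M},\mathcal{N}$-formulation (identifying adhesivity with $\mon(\X),\mor(\X)$-adhesivity via Corollary \ref{cor:equi}); either path lands at the same one-line conclusion.
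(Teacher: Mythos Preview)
Your proposal is correct and essentially matches the paper's approach: both routes reduce to the preceding corollary that every mono in a topos is adhesive (via Proposition~\ref{prop:stab} and Lemma~\ref{lem:popb}) and then invoke a consequence of Theorem~\ref{thm:primo}. The only cosmetic difference, which you already flag, is that you cite Corollary~\ref{cor:qad}(1) directly, whereas the paper cites Corollary~\ref{cor:mequiv} together with Remark~\ref{rem:ex} and Lemma~\ref{lem:popb}; for $\mathcal{M}=\mon(\X)$ the union-closure and pushout-stability conditions in Corollary~\ref{cor:mequiv}(3) are automatic, so the two citations collapse to the same one-line argument.
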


\section{From $\mathcal{M},\mathcal{N}$-adhesivity to $\mathcal{M},\mathcal{N}$-unions}\label{sec:aduni}

In the previous section we deduced $\mathcal{M},\mathcal{N}$-adhesivity from the closure of $\mathcal{M}$ under some kinds of unions. In this section we will go in the opposite direction.

\begin{defi}

Let $f:X\to Y$ be an arrow in a category $\X$ such that pushout square below  exists.
\[\xymatrix@R=16pt{ X \ar[r]^f \ar[d]_f& Y \ar[d]^{y_1}\\ Y \ar[r]_{y_2} & Q_f}\]
The \emph{codiagonal} $\upsilon_f:Q_f\to Y$ is the unique arrow  fitting in the following diagram.
\[\xymatrix@R=16pt{ X \ar[r]^f \ar[d]_f& Y \ar@/^.3cm/[ddr]^{\id{Y}} \ar[d]^{y_1}\\ Y \ar@/_.3cm/[drr]_{\id{Y}}\ar[r]_{y_2} & Q_f \ar@{.>}[dr]^{\upsilon_f}\\ && Y}\]

Given a preadhesive structure $(\mathcal{M}, \mathcal{N})$, a \emph{$\mathcal{M}, \mathcal{N}$-codiagonal} is the codiagonal of an arrow $n\in \mathcal{M}\cap \mathcal{N}$.
\end{defi}

Let us list some useful properties of codiagonals.
\begin{lem}\label{rem:coeq}
	Let $f:X\to Y$ be a morphism in a category $\X$ and suppose that $f$ admits a codiagonal $\upsilon_f:Q_f\to Y$, then the following hold true:
	\begin{enumerate}
		\item $\upsilon_f$ is the coequalizer of the pair of coprojections $y_1, y_2:Y\rightrightarrows Q_f$;
		\item if a pullback of $y_1$ along $y_2$ exists, then the pair $y_1, y_2:Y\rightrightarrows Q_f$ has an equalizer $e:E\to Y$ and, moreover, the following square is a pullback
		\[\xymatrix@R=16pt{E \ar[r]^{e} \ar[d]_{e} & Y \ar[d]^{y_1}\\  Y\ar[r]_{y_2} &  Q_f}\]
	\end{enumerate}
\end{lem}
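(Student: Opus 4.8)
The plan is to exploit the two defining equations of the codiagonal, namely $\upsilon_f\circ y_1=\id{Y}=\upsilon_f\circ y_2$, which say that $\upsilon_f$ is a common retraction of the two coprojections; in particular $y_1$ and $y_2$ are (split) monos. Both parts then follow by combining these identities with the universal property of the pushout defining $Q_f$.

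For the first claim, I would first observe that $\upsilon_f$ coequalizes $y_1$ and $y_2$, since $\upsilon_f\circ y_1=\id{Y}=\upsilon_f\circ y_2$. To check universality, suppose $g:Q_f\to Z$ satisfies $g\circ y_1=g\circ y_2$, and set $h:=g\circ y_1$. Then $h\circ\upsilon_f$ and $g$ have the same composites with each coprojection: indeed $h\circ\upsilon_f\circ y_i=h=g\circ y_1=g\circ y_i$ for $i=1,2$, using the hypothesis on $g$ for the case $i=2$. Since a map out of the pushout $Q_f$ is determined by its composites with $y_1$ and $y_2$, this gives $h\circ\upsilon_f=g$. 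Uniqueness of $h$ is immediate, as any $h'$ with $h'\circ\upsilon_f=g$ must satisfy $h'=h'\circ\upsilon_f\circ y_1=g\circ y_1=h$. Hence $\upsilon_f$ is the coequalizer of $y_1,y_2$.

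For the second claim, let $P$ be the pullback of $y_1$ along $y_2$, with projections $\pi_1,\pi_2:P\to Y$, so that $y_1\circ\pi_1=y_2\circ\pi_2$. Composing this equation on the left with $\upsilon_f$ and using $\upsilon_f\circ y_1=\id{Y}=\upsilon_f\circ y_2$ yields $\pi_1=\pi_2$. This is the crucial point: the common retraction forces the two projections of the pullback to coincide. Writing $e:=\pi_1=\pi_2$ and $E:=P$, the pullback square becomes exactly the square in the statement, with $y_1\circ e=y_2\circ e$.

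It remains to check that this $e:E\to Y$ is the equalizer of $y_1,y_2$. The equation $y_1\circ e=y_2\circ e$ holds by construction. Given any $c:T\to Y$ with $y_1\circ c=y_2\circ c$, the pair $(c,c)$ is a cone over the pullback diagram, so there is a unique $t:T\to E$ with $\pi_1\circ t=\pi_2\circ t=c$, that is $e\circ t=c$; uniqueness of such $t$ follows from the uniqueness clause of the pullback. Thus $e$ is the equalizer and the displayed square is a pullback, as required. The only step demanding a moment's thought is the identity $\pi_1=\pi_2$; once it is in hand, both halves of the second claim are immediate.
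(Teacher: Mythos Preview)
Your proof is correct. The paper omits the proof of this lemma as straightforward (referring to the extended version), so there is no detailed argument to compare against, but your approach is the natural one: using that $\upsilon_f$ is a common retraction of $y_1,y_2$ together with the universal property of the pushout for part~(1), and forcing $\pi_1=\pi_2$ via the retraction for part~(2).
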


The first ingredient we need is a generalization of \cite[Prop. $4.4$]{garner2012axioms}.
\begin{lem}\label{lem:fact}
Let $(\mathcal{M}, \mathcal{N})$ be a preadhesive structure on a category $\X$ with pullbacks and $u:U\to X$ an $\mathcal{M}, \mathcal{N}$-union. Suppose that $\mathcal{M}\subseteq\ad{N}$, that $\mathcal{M}\cap \mathcal{N}$ contains all split monomorphisms and
that $\mathcal{N}$ contains all $\mathcal{M}, \mathcal{N}$-codiagonals, then:
\begin{enumerate}
	\item $u$ admits pushouts  along itself (i.e. it has a \emph{cokernel pair}); 
	\item there exists an epi $e_u:U\to E_u$ and an element $m_u: E_u\to X$ of $\mathcal{M}\cap \mathcal{N}$ such that   $u=m_u\circ e_u$.
\end{enumerate}
\end{lem}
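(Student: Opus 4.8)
The plan is to first make the union concrete and then build the cokernel pair by gluing together the cokernel pairs of the constituent $\mathcal{M}\cap\mathcal{N}$-subobjects. By \cref{prop:uni} I may assume $u$ comes from a square in which $P$ is the pullback of $m\in\mathcal{M}$ and $n\in\mathcal{M}\cap\mathcal{N}$ and $U=M+_P N$ is their pushout. Stability of $\mathcal{M}$ and $\mathcal{N}$ under pullbacks and pushouts then shows that $p_1\colon P\to M$ and $u_1\colon N\to U$ lie in $\mathcal{M}\cap\mathcal{N}$, while $p_2,u_2$ and $m$ lie in $\mathcal{M}$. Since $n$ and $p_1$ belong to $\mathcal{M}\cap\mathcal{N}$ and every element of $\mathcal{M}$ is $\mathcal{N}$-adhesive, the cokernel pairs $Q_n=X+_N X$ and $Q_{p_1}=M+_P M$ exist as $\mathcal{M},\mathcal{N}$-pushouts; by \cref{lem:adpush} and \cref{prop:dec} they are stable pushouts which are simultaneously pullbacks, their coprojections again lie in $\mathcal{M}\cap\mathcal{N}$, and by hypothesis their codiagonals $\upsilon_n$ and $\upsilon_{p_1}$ lie in $\mathcal{N}$.

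For part (1) I would compute $X+_U X$ as an iterated colimit. The pullback square $(p_1,p_2,m,n)$ induces a comparison $\phi\colon Q_{p_1}\to Q_n$ between the two cokernel pairs, and I claim that the cokernel pair of $u$ is exactly the pushout of $\phi$ along $\upsilon_{p_1}$: by \cref{rem:coeq}(1) the codiagonal $\upsilon_{p_1}$ is the coequalizer of the two coprojections $M\rightrightarrows Q_{p_1}$, so this pushout realizes the coequalizer that glues the two copies of $X$ inside $Q_n$ along $M$, which is precisely $X+_U X$. The crucial point is that this is an $\mathcal{M},\mathcal{N}$-pushout and hence exists: $\upsilon_{p_1}\in\mathcal{N}$ by hypothesis, and $\phi\in\mathcal{M}$ because the ``codiagonal square'' with sides $\upsilon_{p_1},\phi,\upsilon_n,m$ is a pullback, exhibiting $\phi$ as a pullback of $m\in\mathcal{M}$. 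To see that this square is a pullback I would pull the stable pushout $Q_n$ back along $m$: stability turns this pullback into a pushout, and comparing it with $Q_{p_1}$—whose defining square is linked to that of $Q_n$ through the pullback square $(p_1,p_2,m,n)$, via the cube of \cref{lem:3/4}—identifies the two.

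For part (2), once $u$ has a cokernel pair $z_1,z_2\colon X\rightrightarrows Z$ it also acquires a codiagonal, and I would take $m_u\colon E_u\to X$ to be the equalizer of $z_1,z_2$. By \cref{rem:coeq}(2) this equalizer exists and fits into a pullback square with $z_1$ and $z_2$, while the induced comparison $e_u\colon U\to E_u$ with $m_u\circ e_u=u$ is an epimorphism, since the comparison into the equalizer of a cokernel pair is always epic. That $m_u\in\mathcal{N}$ is then immediate: the coprojections factor as $z_i=\psi\circ a_i$, composites of $a_i\in\mathcal{M}\cap\mathcal{N}$ with the pushout $\psi$ of $\upsilon_{p_1}\in\mathcal{N}$, so they lie in $\mathcal{N}$, and $m_u$ is their pullback.

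The main obstacle is to prove $m_u\in\mathcal{M}$, for which the $\mathcal{N}$-argument above is useless, the relevant coprojection $\psi$ being a split epi rather than a mono. Here I would exploit that in the pushout presentation of $Z$ the remaining leg $\theta\colon M\to Z$, being the pushout of $\phi\in\mathcal{M}$, again lies in $\mathcal{M}$, and that $\theta=z_1\circ m=z_2\circ m$. The idea is to pull the (stable, and simultaneously pullback) square defining $Z$ back along the coprojections $z_1,z_2$ and to run the left-cancellation \cref{lem:pb2}, so as to exhibit the equalizer $E_u$ as obtained from $\theta$ by pullbacks and thereby place $m_u$ in $\mathcal{M}$ by stability under pullbacks. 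Organizing this descent—matching the kernel pair of $\psi$ with the data coming from $\upsilon_{p_1}$ and verifying the hypotheses of \cref{lem:pb2}—is the delicate computational heart of the argument, and the reduction of everything to stable pushouts that are at the same time pullbacks is exactly what is designed to make it go through.
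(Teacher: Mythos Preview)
Your construction of the cokernel pair in part~(1) is essentially the paper's: the paper pulls $m$ back along $\upsilon_n$ to obtain an object $T$, shows via stability that $T\cong Q_{p_1}$ (with $t_1\cong\upsilon_{p_1}$ and $t_2\cong\phi$ in your notation), and then forms the pushout of $t_1$ along $t_2$. You run this in the opposite order---build $Q_{p_1}$ and $\phi$ first, then argue the comparison square is a pullback---but the content is identical. The paper then checks the universal property of the cokernel pair by hand; your colimit-pasting sentence is correct in outline but would need that explicit verification.

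Part~(2), however, contains a genuine gap and an unnecessary detour. The gap is the assertion that ``the comparison into the equalizer of a cokernel pair is always epic.'' This is not a general categorical fact, and nothing in the standing hypotheses makes it automatic. The paper devotes real effort to exactly this point: it first observes (via the decomposition axioms for $\mathcal{M}$ and $\mathcal{N}$) that $e_u\circ u_2\in\mathcal{M}$ and $e_u\circ u_1\in\mathcal{M}\cap\mathcal{N}$, so that $e_u$ is itself an $\mathcal{M},\mathcal{N}$-union; it then reapplies part~(1) to produce the cokernel pair of $e_u$, takes its equalizer $c$, and shows $c$ is an isomorphism by proving that $m_u\circ c$ has the same cokernel pair as $u$ and invoking \cref{prop:dec}(3) to recognise $m_u\circ c$ as a regular mono. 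None of this is bypassed by a one-line general principle.

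The detour concerns $m_u\in\mathcal{M}$. You overlook that both coprojections $z_1,z_2\colon X\rightrightarrows Z$ are \emph{split} monomorphisms (split by the codiagonal $\upsilon_u$), and hence lie in $\mathcal{M}\cap\mathcal{N}$ by hypothesis. By \cref{rem:coeq}(2) the equalizer $m_u$ sits in a pullback square with $z_1$ and $z_2$, so $m_u\in\mathcal{M}\cap\mathcal{N}$ immediately by stability under pullback. The elaborate descent you sketch through $\theta$ and \cref{lem:pb2} is not needed at all.
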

\begin{rem}
Notice that, if $\mathcal{M}\subseteq \ad{N}$, then for every $n\in \mathcal{M}\cap \mathcal{N}$ a pushout square 
\[\xymatrix@R=16pt{N \ar[r]^n \ar[d]_n& X \ar[d]^{n_1} \\ X \ar[r]_{n_2} &Q_n}\]
of $n$ along itself exists, and thus there also exists the codiagonal $\upsilon_n$. 
\end{rem}
\begin{proof}[Proof of Lemma \ref{lem:fact}.]
\begin{enumerate}[leftmargin=0pt,itemindent=1.7em]
	\item 
	Let $m:M\to X$ in $\mathcal{M}$ and $n:N\to X$ in $\mathcal{M}\cap \mathcal{N}$ be such that
	\[[u]=[m]\vee [n]\] 
	By Proposition \ref{prop:uni} we can consider the following diagram,	in which the outer edges form a pullback and the inner square is a pushout.
	\[\xymatrix@R=16pt{P \ar[r]^{p_1} \ar[d]_{p_2} & M\ar[d]^{u_2} \ar@/^.3cm/[ddr]^{m} \\ N \ar@/_.3cm/[drr]_{n}\ar[r]_{u_1} & U \ar[dr]^{u} \\ && X}\]
	Pulling back $m$ along $\upsilon_n$ we get a pullback
	square
	\[\xymatrix@R=16pt{T\ar[r]^{t_1}  \ar[d]_{t_2}& M\ar[d]^{m}\\ Q_{n} \ar[r]_{\upsilon_n} & X}\]
	There exist $l_1, l_2:M\rightrightarrows T$ as in the following diagram
	\[\xymatrix@R=16pt{M\ar@{.>}[dr]_{l_1} \ar@/^.3cm/[drr]^{\id{M}} \ar[d]_{m} & & & M\ar@{.>}[dr]_{l_2} \ar@/^.3cm/[drr]^{\id{M}} \ar[d]_{m}\\X \ar@/_.3cm/[dr]_{n_1}&T\ar[r]^{t_1}  \ar[d]_{t_2}& M\ar[d]^{m} & X \ar@/_.3cm/[dr]_{n_2}&T\ar[r]^{t_1}  \ar[d]_{t_2}& M\ar[d]^{m}\\ &Q_{n} \ar[r]_{\upsilon_n} & X & &Q_{n} \ar[r]_{\upsilon_n} & X}\]
	
	By Lemma \ref{lem:pb1} the following are pullback squares
	\[\xymatrix@R=16pt{M \ar[r]^{l_1} \ar[d]_{m}& T \ar[d]^{t_2}&  M \ar[r]^{l_2} \ar[d]_{m}& T \ar[d]^{t_2}\\ X \ar[r]_{n_1} & Q_{n} & X \ar[r]_{n_2} & Q_{n}}\]
	therefore, since $n$ is $\mathcal{N}$-adhesive, the top face of the following cube is a pushout.
	\[\xymatrix@C=13pt@R=13pt{&P\ar[dd]|\hole_(.65){p_2}\ar[rr]^{p_1} \ar[dl]_{p_1} && M \ar[dd]^{m} \ar[dl]_{l_1}\\ M  \ar[dd]_{m}\ar[rr]^(.65){l_2} & & T\ar[dd]_(.3){t_2}\\&N\ar[rr]|\hole^(.65){n} \ar[dl]_{n} && X \ar[dl]^{n_1} \\X \ar[rr]_{n_2} & & Q_n}\]
	
	Now, $t_1$ is the pullback of an $\mathcal{M}, \mathcal{N}$-codiagonal, thus it is in $\mathcal{N}$, while $t_2$ is in $\ad{N}$ since it is the pullback of $m$, therefore the pushout square below exists.
	\[\xymatrix@R=16pt{T \ar[r]^{t_1} \ar[d]_{t_2}& M \ar[d]^{q_1}\\ Q_n \ar[r]_{q_2} & Q }\]
	Suppose now that the solid part of the next diagram is given
	\[\xymatrix@R=16pt{U \ar[rr]^u \ar[dd]_u && X \ar[dl]_{z_1} \ar[d]^{n_1}\\ &Z& Q_n \ar[d]^{q_2}\\ X\ar[r]_{n_2} \ar[ur]^{z_2} & Q_n \ar[r]_{q_2}  &Q \ar@{.>}[ul]_{z}}\]
	Precomposing with $u_1$ and $u_2$ we get the following identities
	\[\begin{split}
			z_1\circ m &= z_1\circ u\circ u_2 =z_2\circ u \circ u_2 =z_2\circ m 
	\end{split} \qquad \begin{split}	z_1\circ n &= z_1\circ u\circ u_1 =z_2\circ u \circ u_1 =z_2\circ n
	\end{split}\]
	The second chain of the equalities above implies the existence of the dotted $w:Q_n\to Z$.
	\[\xymatrix@R=16pt{N \ar[r]^{n} \ar[d]_{n} & X\ar[d]^{n_1} \ar@/^.3cm/[ddr]^{z_1} \\ X \ar@/_.3cm/[drr]_{z_2}\ar[r]_{n_2} & Q_n \ar@{.>}[dr]^{w} \\ && Z}\]
	If we compute we get
	\begin{align*}
		w\circ t_2\circ l_2 =w\circ n_2\circ m =z_2\circ m =z_1\circ m =w\circ n_1\circ m =w\circ t_2\circ l_1
	\end{align*}
	By construction and by our previous observations, $t_1$ is a codiagonal for $p_1$, thus Lemma \ref{rem:coeq}(1) implies the existence of a unique $k:M\to Z$ making the following diagram commutative
	\[\xymatrix@R=16pt{T \ar[r]^{t_1} \ar[d]_{t_2}& M \ar@/^.3cm/[ddr]^{k}\ar[d]^{q_1}\\ Q_n \ar@/_.3cm/[drr]_{w}\ar[r]_{q_2} & Q \ar@{.>}[dr]^{z}\\ && Z }\]
	which, in turn, implies the existence of the dotted $z$. If we compute further we have
	\[z_1=w\circ n_1=z\circ q_2\circ n_1 \qquad 
	  z_2=w\circ n_2=z\circ q_2\circ n_2
	\]
	Moreover, if $z':Q\to Z$ is such that
	\[z_1=z'\circ q_2\circ n_1 \qquad z_2=z'\circ q_2\circ n_2\]
	then we also have 
	\[
		z'\circ q_2\circ n_1 =z_1 =w\circ n_1 
	    \qquad
		z'\circ q_2\circ n_2 =z_2 =w\circ n_2
	\]
	which shows that $w=z'\circ q_2$. On the other hand
	\[z'\circ q_1\circ t_1  = z'\circ q_2\circ t_2 =w\circ t_2 \]
	and so we also have that $z'\circ q_1=k$, allowing us to conclude that $z=z'$. We can now deduce that the following square is a pushout 
	\[\xymatrix@R=16pt{U \ar[r]^u \ar[d]_u & X \ar[d]^{q_2\circ n_1}\\  X\ar[r]_{q_2\circ n_2} &  Q }\]
	\item By the previous point $u$ has pushout along itself, therefore there exists a codiagonal $\upsilon_{u}:Q\to U$. In particular, $q_2\circ n_1$ and $q_2\circ n_2$ are split monos and thus elements of $\mathcal{M}\cap\mathcal{N}$. By the second point of Lemma \ref{rem:coeq} they have an equalizer $m_u:E_u\to X$ which, since $\mathcal{M}$ and $\mathcal{N}$ are stable under pullback, is also an element of $\mathcal{M}\cap \mathcal{N}$. Since, by construction
	\[q_2\circ n_1\circ u= q_2\circ n_2\circ u\]
	we also get an arrow $e_u:U\to E_u$ such that $u=m_u\circ e_u$. To show that this arrow is epi, let us start with the equalities
	\[
		m=u\circ u_2=m_u\circ e_u\circ u_2
	 \qquad 
	n=u\circ u_2=m_u\circ e_u\circ u_1
\]
Since $\mathcal{M}$ and $\mathcal{N}$ are closed under decomposition and $\mathcal{M}$-decomposition we can deduce that $e_u\circ u_2$ belongs to $\mathcal{M}$ and that $e_u\circ u_1$ is an element of $\mathcal{M}\cap \mathcal{N}$.  

Let now $b:B\to E_u$ be another mono such that 
\[b\circ b_1 =e_u\circ u_1 \qquad b\circ b_2 =e_u\circ u_2\] 
for some $b_1:N\to B$ and $b_2:M\to B$. Then 
\[
b \circ b_1 \circ p_2  = e_u \circ u_1\circ p_2= e_u \circ  u_2\circ p_1=b\circ b_2\circ p_1\]
which, since $b$ is mono entails
\[b_1\circ  p_2 =  b_2\circ p_1\]
and thus there exists $\hat{b}:U\to B$ such that
\[b_1 =\hat{b} \circ u_1 \qquad b_2=\hat{b}\circ u_2\]
Further computing we get
\[
	b\circ \hat{b} \circ u_1 =b\circ b_1 =e_u\circ u_1 
\qquad 
b\circ \hat{b} \circ u_2 =b\circ b_2 =e_u\circ u_2 
\]
which shows that $[e_u] \leq [b]$ in $(\sub(E_u), \leq)$. Thus $e_u$ is a union for $e_u\circ u_2$ and $e_u\circ u_1$. 

By the previous point and point $2$ of Lemma \ref{rem:coeq} we get the diagram below, in which the outer edges form a pushout, the inner square is a pullback and $c$ is the equalizer of $c_1$ and $c_2$.
\[\xymatrix@R=16pt{U  \ar@/^.5cm/[rr]^{e_u} \ar@/_.4cm/[dr]_{e_u}  \ar@{.>}[r]_{e}& C \ar[d]_{c} \ar[r]_{c} & E_u \ar[d]^{c_2} \\  & E_u \ar[r]_{c_1} & \hat{Q}}\]
If we show that $c$ is invertible we are done: in such a case $c_1$ must be equal to $c_2$ and this implies that $e_u$ is an epimorphism. The existence of $e:U\to C$ can then be inferred from the universal property of pullbacks. Notice, moreover, that $c_1$ and $c_2$ are in $\mathcal{M}\cap \mathcal{N} $ since they are split monos, thus $c\in \mathcal{M}\cap \mathcal{N}$ too. Suppose that the solid part of the following diagram is given.
\[\xymatrix@R=16pt{C\ar[r]^c \ar[d]_{c}& E_u \ar[r]^{m_u} & X \ar[d]^{n_1}  \ar[dl]_{z_1}\\ E_u \ar[d]_{m_u} &Z& Q_n \ar[d]^{q_2}\\ X \ar[ur]^{z_2} \ar[r]_{n_2}&Q_n \ar[r]_{q_2}& Q \ar@{.>}[ul]^{z} }\]
Then we have
\[
z_1\circ u = z_1\circ m_u\circ e_u = z_1\circ m_u\circ c \circ e = z_2\circ m_u\circ c \circ e = z_2\circ m_u \circ e_u= z_2\circ u \]
and thus there exists $z:Q\to Z$ such that 
\[z_1= z\circ q_2\circ n_1\qquad z_2=z\circ q_2\circ n_2\]
Uniqueness of such a $z$ follows at once since $q_2\circ n_1$ and $q_2\circ n_2$ are the coprojections of a pushout , thus we can conclude that the square below is a pushout.
\[\xymatrix@C=40pt@R=16pt{U \ar[r]^{m_u\circ c} \ar[d]_{m_u\circ c} & X \ar[d]^{q_2\circ n_1}\\  X\ar[r]_{q_2\circ n_2} &  Q }\]
Now, $\mathcal{M}$ and $\mathcal{N}$ are closed under composition, thus $m_u\circ c$ is in $\mathcal{M}\cap \mathcal{N}$ and, since $\mathcal{M}\subseteq \ad{N}$, it follows from the third point of Proposition \ref{prop:dec} that $m_u\circ c$ is a regular mono.  In every category a regular mono is the equalizer of its cokernel pair, thus $m_u\circ c$ is the equalizer of $q_2\circ n_1$ and $q_2\circ n_2$ and therefore $c$ must be an isomorphism.	\qedhere 
\end{enumerate}
\end{proof}

We are now ready to prove the main theorem of this section (see \cite[Thm. $19$]{johnstone2007quasitoposes}).
\begin{thm}\label{thm:secondo}Let $\X$ be an $\mathcal{M}, \mathcal{N}$-adhesive  category with pullbacks. If $\mathcal{M}\cap \mathcal{N}$ contains all split monomorphisms and $\mathcal{N}$ contains all $\mathcal{M}, \mathcal{N}$-codiagonals, then $\mathcal{M}$ is closed under $\mathcal{M}, \mathcal{N}$-unions.
\end{thm}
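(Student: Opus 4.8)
The plan is to combine the factorization supplied by Lemma \ref{lem:fact} with the Van Kampen property of the cokernel pair of a map lying in $\mathcal{M}\cap\mathcal{N}$. First I would check that all the hypotheses of Lemma \ref{lem:fact} are at hand: since $\X$ is $\mathcal{M},\mathcal{N}$-adhesive, $(\mathcal{M},\mathcal{N})$ is a preadhesive structure and $\mathcal{M}\subseteq\ad{N}$ by Example \ref{ex:ade}, while the inclusion of all split monos into $\mathcal{M}\cap\mathcal{N}$ and the fact that $\mathcal{N}$ contains every $\mathcal{M},\mathcal{N}$-codiagonal are precisely our standing assumptions. Hence, for an $\mathcal{M},\mathcal{N}$-union $u\colon U\to X$, Lemma \ref{lem:fact} produces a factorization $u=m_u\circ e_u$ with $m_u\in\mathcal{M}\cap\mathcal{N}$ and $e_u$ an epimorphism. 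Because $u$ is a monomorphism and $m_u$ is mono, $e_u$ is a monomorphism as well, so the whole statement reduces to proving that $e_u$ is an isomorphism: once this is known, $u=m_u\circ e_u$ lies in $\mathcal{M}$, since $\mathcal{M}$ is closed under composition with isomorphisms.

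To invert $e_u$ I would pass to cokernel pairs. As $m_u\in\mathcal{M}\cap\mathcal{N}$, the pushout of $m_u$ along itself is an $\mathcal{M},\mathcal{N}$-pushout, hence a Van Kampen square, and therefore also a pullback by Proposition \ref{prop:pbpoad}; for a pushout of a map along itself this says exactly that $m_u$ is the equalizer of its cokernel pair $(a,b)\colon X\rightrightarrows Q$. Since $e_u$ is epi, a parallel pair coequalizes $u=m_u\circ e_u$ if and only if it coequalizes $m_u$, so $(a,b)$ is at the same time the cokernel pair of $u$; in particular $a\circ u=b\circ u$, and the comparison of $u$ into the equalizer $m_u$ is precisely $e_u$.

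The heart of the argument—and the step I expect to be the main obstacle—is to promote this to the statement that the cokernel pair square of $u$ is itself a pullback, i.e.\ that $u$ is again the equalizer of $(a,b)$. Here one cannot argue formally from ``$e_u$ is a monic epi'', since in an $\mathcal{M},\mathcal{N}$-adhesive category such maps need not be invertible; it is the Van Kampen condition that does the real work. Concretely, I would compare the stable pushout given by the cokernel pair of $m_u$ with the pushout of $u$ along itself through the map $e_u$, splitting and reassembling the resulting rectangles with Lemma \ref{lem:pb1} and feeding them into Lemma \ref{lem:pb2}, exactly in the style of the proofs of Lemma \ref{lem:fon} and Lemma \ref{lem:fact}, to conclude that the cokernel pair square of $u$ is a pullback. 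Once this is established, $u$ and $m_u$ are equalizers of the same parallel pair $(a,b)$ and hence represent the same subobject of $X$, which forces the comparison $e_u$ to be an isomorphism.

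With $e_u$ invertible, $u=m_u\circ e_u\in\mathcal{M}$, so every $\mathcal{M},\mathcal{N}$-union belongs to $\mathcal{M}$; that is, $\mathcal{M}$ is closed under $\mathcal{M},\mathcal{N}$-unions, as claimed. The delicate point is therefore not the production of the factorization, which Lemma \ref{lem:fact} already handles, but the transfer of the pullback (equivalently, regularity) property from the cokernel pair of $m_u$ to that of $u$, and this is where the Van Kampen axiom for $\mathcal{M},\mathcal{N}$-pushouts is indispensable.
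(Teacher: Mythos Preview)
Your high-level strategy agrees with the paper's: invoke Lemma \ref{lem:fact} to factor $u=m_u\circ e_u$ with $m_u\in\mathcal{M}\cap\mathcal{N}$ and $e_u$ a monic epimorphism, and then argue that $e_u$ is invertible. Your observation that, $e_u$ being epic, the cokernel pairs of $u$ and of $m_u$ coincide is also correct.

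The gap lies in the step you yourself flag as the main obstacle. Reducing the problem to ``the cokernel-pair square of $u$ is a pullback'' is not a genuine reduction: by Proposition \ref{prop:pbpoad} the cokernel-pair square of $m_u$ is already the pullback of the pair $(a,b)$, so the cokernel-pair square of $u$ is a pullback precisely when the comparison $e_u\colon U\to E_u$ is an isomorphism. You have restated the goal, not moved towards it. The obvious Van Kampen cube over the cokernel pair of $m_u$ with vertical arrow $e_u$ fails for the same reason: its back face (with edges $e_u$, $u$, $m_u$, $\id{X}$) is a pullback iff $e_u$ is invertible, so you cannot verify the hypotheses of the Van Kampen condition without already knowing the conclusion. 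More structurally, after you invoke Lemma \ref{lem:fact} your sketch uses only the data ``$u=m_u\circ e_u$ with $e_u$ a monic epi and $m_u\in\mathcal{M}\cap\mathcal{N}$'', and nothing about an arbitrary such factorization forces $e_u$ to be invertible. One must return to the presentation of $u$ as a union.

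That is exactly what the paper does, and it never touches the cokernel pair of $m_u$. It works instead with the pushout square $P\rightrightarrows M,N\to U$ defining the union, and with the cokernel pairs of $e_u\circ u_1\in\mathcal{M}\cap\mathcal{N}$ and of $p_1\in\mathcal{M}\cap\mathcal{N}$. From these it builds auxiliary objects $A$ and $Q_{p_1}$, establishes that several derived squares are simultaneously pushouts and pullbacks, and finally assembles a Van Kampen cube whose bottom face is the $\mathcal{M},\mathcal{N}$-pushout of $b\colon Q_{p_1}\to A$ along the codiagonal $\upsilon_{p_1}$. Applying $\mathcal{M},\mathcal{N}$-adhesivity to that cube yields a pullback which, after pasting, exhibits $e_u$ as the pullback of an identity. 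Lemmas \ref{lem:pb1} and \ref{lem:pb2} are indeed used throughout, but on rectangles built from $u_1$, $u_2$, $p_1$ and their cokernel pairs, not from $m_u$ alone.
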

\begin{proof}	Let $u:U\to X$ be the $\mathcal{M}, \mathcal{N}$-union of $m:M\to X$ in $\mathcal{M}$ and $n:N\to X$ in $\mathcal{M}\cap \mathcal{N}$, by Example \ref{ex:ade} and Proposition \ref{prop:uni} we know that these arrows fit in a diagram
	\[\xymatrix@R=16pt{P \ar[r]^{p_1} \ar[d]_{p_2} & M\ar[d]^{u_2} \ar@/^.3cm/[ddr]^{m} \\ N \ar@/_.3cm/[drr]_{n}\ar[r]_{u_1} & U \ar[dr]^{u} \\ && X}\]
in which the outer edges form a pullback and the inner square is a pushout.  Notice that $p_2\in \mathcal{M}$ and $p_1\in \mathcal{N}$ thus, by Proposition \ref{prop:pbpo} the inner square is also a pullback. By Lemma \ref{lem:fact} we also know that $u=m_u\circ e_u$ for some epi $e_u:Y\to E_u$ and $m_u:E_u\to X$ in $\mathcal{M}\cap \mathcal{N}$. As we have noticed before, the decomposition properties of $\mathcal{M}$ and $\mathcal{N}$ imply that $e_u\circ u_2\in \mathcal{M}$ and $e_u\circ u_1 \in \mathcal{M}\cap \mathcal{N}$.   Our strategy to prove the theorem consists in  showing that $e_u$ is an isomorphism.

First of all notice that $e_u$ is a mono because $u=m_u\circ e_u$, thus in the following diagram every square is a pullback 
and applying Lemma \ref{lem:pb1} we can deduce that the composite square is a pullback too.
\[\xymatrix@R=16pt{P \ar[r]^{p_1} \ar[d]_{p_2}& M \ar[r]^{\id{M}} \ar[d]_{u_2} & M \ar[d]^{u_2} \\ N \ar[r]_{u_1}  \ar[d]_{\id{U}}& U \ar[r]^{\id{U}} \ar[d]^{\id{U}} & U \ar[d]^{e_u} \\ N  \ar[r]_{u_2}& U \ar[r]_{e_u} & E_u}\]

Next,  since the arrow $n$ is in $\mathcal{M}$,  $p_1$ is in $\mathcal{M}$ as it is its pullback and $u_1\in \mathcal{M}$ since it is the pushout of $p_1$. We can then build the following two pushout squares, which, by Proposition \ref{prop:pbpoad}, are also pullbacks.
\[\xymatrix@C=40pt@R=16pt{N \ar[r]^-{e_u\circ u_1} \ar[d]_{e_u\circ u_1} & E_u \ar[d]^{e_1} & N \ar[r]^-{e_u\circ u_1} \ar[d]_{ u_1} & E_u \ar[d]^{a_1} \\ E_u   \ar[r]_-{e_2} & Q_{e_u\circ u_1} & U \ar[r]_{a_2}& A}\]
which can be merged together: indeed the solid part of the following diagram is commutative, thus the dotted arrow $a$ exists. Moreover, by Lemma \ref{lem:po}, the bottom rectangle is a pushout.
\[\xymatrix@R=16pt{N \ar[r]^{u_1} \ar[d]_{u_1} & U \ar[r]^{e_u} & E_u \ar[d]_{a_1} \ar@/^.5cm/[dd]^{e_1} \\
U \ar[d]_{e_u} \ar[rr]^{a_2}&& A \ar@{.>}[d]_{a} \\ E_u \ar[rr]_{e_2}&&Q_{e_u\circ u_1}}\]
Notice that, since $u_1\in \mathcal{M}$ and $e_u\circ u_1$ is in $\mathcal{N}$, the upper half of the square above is also a pullback.

Now, $e_2$ is the pushout of $e_u\circ u_1$, thus it is in $\mathcal{M}$ and so it is a mono. This, together with Lemma \ref{lem:pb1}, entails that the following rectangle is a pullback.
\[\xymatrix@R=16pt{U \ar[r]^{e_u}  \ar[d]_{\id{U}}& E_u \ar[r]^-{\id{E_u}} \ar[d]_{\id{E_u}} & E_u \ar[d]^{e_2}\\ U \ar[r]_{e_u} & E_u \ar[r]_-{e_2} & Q_{e_u\circ u_1}}\]
The arrow $a_2$ is in $\mathcal{M}$ as it is the pushout of $e_u\circ u_1$. By applying Lemma \ref{lem:pb2} to the diagrams
\[\xymatrix@R=16pt{ N \ar[rr]^-{e_u\circ u_1} \ar[d]_{ u_1} && E_u \ar[d]^{a_1} & N \ar[d]_{e_u\circ u_1} \ar[r]^{u_1} & U \ar[d]_{a_2}\ar[r]^-{e_u} & E_u  \ar[d]^{e_2}&  U  \ar[r]^{\id{U}} \ar[d]_{\id{U}}& U \ar[d]_{a_2} \ar[r]^-{e_u} & E_u \ar[d]^{e_2} \\ U \ar[rr]_{a_2}&& A & E_u \ar[r]_{a_1} &  A \ar[r]_-{a} & Q_{e_u\circ u_1} & U \ar[r]^{a_2} \ar@/_.4cm/[rr]_{e_2\circ e_u} & A \ar[r]^-{a} & Q_{e_u\circ u_1}}\]
we get that also the following square is a pullback too.
\[\xymatrix@R=16pt{
	U \ar[d]_{e_u} \ar[r]^-{a_2}& A \ar[d]^{a} \\ E_u \ar[r]_-{e_2}&Q_{e_u\circ u_1}}\]

On the other hand, the arrow $p_1:P\to M$ is in $\mathcal{M}\cap \mathcal{N}$ as it is the pullback of $n$, thus it admits a pushout  the following pushout square along itself
\[\xymatrix@R=16pt{P \ar[r]^{p_1} \ar[d]_{p_1}& M \ar[d]^{m_1} \\ M \ar[r]_{m_2} & Q_{p_1}}\]
We can then construct the solid part of the rightmost rectangle in the diagram below, inducing the dotted $b:Q_{p_1}\to A$. Notice that the first rectangle is a pushout by Lemma \ref{lem:po} so that  the right half of the second diagram also is a pushout, again because of Lemma \ref{lem:po}, and $b$ belongs to $\mathcal{M}$.
\[\xymatrix@R=16pt{P \ar[r]^{p_2} \ar[d]_{p_1}& N \ar[d]_{u_1} \ar[r]^{u_1}& U \ar[r]^{e_u} & E_u \ar[d]^{a_1} & P \ar@/^.4cm/[rrr]^{e_u\circ u_1\circ p_2}\ar[r]_{p_1} \ar[d]_{p_1}& M \ar[d]_{m_1} \ar[r]_{u_2}& U \ar[r]_{e_u}& E_u \ar[d]^{a_1}\\ M \ar[r]_{u_2}& U \ar[rr]_{a_2}&& A& M \ar@/_.4cm/[rrr]_{a_2\circ u_2}\ar[r]^{m_2} & Q_{p_1} \ar@{.>}[rr]^{b} && A}\]
We can compose with the codiagonal $\upsilon_{p_1}:Q_{p_1}\to M$ to obtain the solid diagram
\[\xymatrix@R=16pt{M \ar@/_.5cm/[dd]_{\id{M}}\ar[r]^{u_2} \ar[d]^{m_1} & U \ar[r]^-{e_u} & E_u \ar[d]_{a_1} \ar@/^.5cm/[dd]^{\id{E_u}} \\
	Q_{p_1} \ar[d]^{\upsilon_{p_1}} \ar[rr]^{b}&& A \ar@{.>}[d]_{r} \\ M\ar[r]_{u_2} & U\ar[r]_-{e_u}&E_u}\]	
Since the upper half of the square above is a pushout then the dotted $r:A\to Q_{e_u\circ u_1}$ exists. Moreover, since the outer edges make a pushout square, the lower half is a pushout too, by Lemma \ref{lem:po}, and, because $\upsilon_{p_1}$ belongs to $\mathcal{N}$,  also a pullback, by Proposition \ref{prop:pbpoad}.

We can now notice that for every $z_1: Z\to M$ and $z_2:Z\to E_u$ such that  $m\circ z_1=m_u\circ z_2$ we have the following chain of equalities
\[
m_u\circ e_u\circ u_2\circ z_1=u\circ u_2\circ z_1=m\circ z_1=m_u\circ z_2 \]
which, since $m_u$ is mono, entails that $z_2=e_u\circ u_2\circ z_1$.

This, in turn, can be rephrased saying that the square below is a pullback 
\[\xymatrix@R=16pt{M \ar[r]^{\id{M}} \ar[d]_{e_u\circ u_2}& M \ar[d]^{m}\\ E_u \ar[r]_{m_u} & X}\] 
In particular, we can now apply Lemma \ref{lem:pb2} to the following
$\mathcal@R=16pt{M}, \mathcal{N}$-pushout square
\[\xymatrix@C=40pt@R=16pt{N \ar[r]^{e_u\circ u_1} \ar[d]_{e_u\circ u_1}& E_u\ar[d]^{e_1}\\
E_u \ar[r]_{e_2}& Q_{e_u\circ u_1} }\]
and to the pullback rectangles
	\[
\xymatrix@C=30pt@R=18pt{ M \ar@/^.4cm/[rrr]^{\id{M}} \ar[d]_{e_u\circ u_2}\ar[r]_{m_2} & Q_{p_1} \ar[rr]_{\upsilon_{p_1}} \ar[d]_{a\circ b} && M\ar[d]^{m} & M\ar@/^.4cm/[rrr]^{\id{M}} \ar[d]_{e_u\circ u_2} \ar[r]_{m_1}  & Q_{p_1} \ar[rr]_{\upsilon_{p_1}} \ar[d]_{a\circ b}  && M \ar[d]^{m}\\ E_u \ar@/_.4cm/[rrr]_{m_u}\ar[r]^-{e_2} & Q_{e_u\circ u_1} \ar[r]^{\upsilon_{e_u\circ u_1}} &E_u \ar[r]^{m_u} &X& E_u \ar@/_.4cm/[rrr]_{m_u}\ar[r]^-{e_1}& Q_{e_u\circ u_1} \ar[r]^{\upsilon_{e_u\circ u_1}} & E_u \ar[r]^{m_u} &X }\]
to show that the outer rectangle in the diagram below is a pullback, so that,  in particular, $a\circ b\in \mathcal{M}$. We can also 
apply Lemma \ref{lem:pb1} to deduce that the left half of the rectangle  is a pullback too. 
\[\xymatrix@C=40pt@R=16pt{Q_{p_1} \ar[r]^{\upsilon_{p_1}} \ar[d]_{a\circ b} &M \ar[r]^{\id{M}} \ar[d]_{e_u\circ u_2}& M \ar[d]^{m}\\ Q_{e_u\circ u_1} \ar[r]_-{\upsilon_{e_u\circ u_1}} & E_u \ar[r]_{m_u}& X}\]
If we compute, we can notice that
\begin{gather*}
\upsilon_{e_u\circ u_1}\circ a\circ b  = e_u\circ u_2\circ \upsilon_{p_1}= r\circ b	
\qquad
	\upsilon_{e_u\circ u_1}\circ a\circ a_1 = \upsilon_{e_u\circ u_1}\circ e_1= \id{E_u}=r\circ a_1
\end{gather*}
and therefore $r= \upsilon_{e_u\circ u_1}\circ a$.

We can apply Lemma \ref{lem:pb1} to the rectangle below, showing that its left half is a pullback
\[ \xymatrix@R=16pt{Q_{p_1} \ar[d]_b \ar[r]^{\id{Q_{p_1}}} & Q_{p_1}  \ar[d]_{a\circ b}\ar[r]^{\upsilon_{p_1}}& M \ar[d]^{e_u\circ u_2}\\ A \ar@/_.4cm/[rr]_{r}\ar[r]^-a&Q_{e_u\circ u_1} \ar[r]^{\upsilon_{e_u\circ u_1}}& E_u }\]

Suppose now that the solid part of the diagram below is given
\[\xymatrix@R=16pt{Q_{p_1} \ar[r]^{b} \ar[d]_{\upsilon_{p_1}}& A \ar[r]^-{a}& Q_{e_u\circ u_1} \ar[d]_{\upsilon_{e_u\circ u_1}} \ar@/^.3cm/[ddr]^{z_1} \\M \ar@/_.3cm/[drrr]_{z_2}\ar[r]_{u_2}&U \ar[r]_{e_u}& E_u \ar@{.>}[dr]^{z}\\ & & & Z}\]
we want to show that the inner rectangle is a pushout. Uniqueness of the dotted $z:E_u\to Z$ is guaranteed by the fact that $\upsilon_{e_u\circ u_1}$ is an epimorphism, so it is enough to construct an arrow fitting in the diagram. 

First of all we can notice that
\[z_1\circ e_1\circ e_u\circ u_1 = z_1\circ e_2\circ e_u\circ u_1\]
while, on the other hand, we have
\begin{align*}
z_1\circ e_1\circ e_u\circ u_2 &= z_1\circ a\circ a_1\circ e_u\circ u_2=z_1\circ a \circ b\circ m_1=z_2\circ \upsilon_{p_1} \circ m_1\\
& = z_2\circ \upsilon_{p_1} \circ m_2
=z_1\circ a \circ b\circ m_2 =z_1\circ a\circ a_2\circ u_2=z_1\circ e_2\circ e_u \circ u_2
\end{align*}
which implies that $z_1\circ e_1\circ e_u  = z_1\circ e_2\circ e_u$, so, since $e_u$ is an epimorphism, we have
$z_1\circ e_1=z_1\circ e_2$.
So equipped, we can now compute:
\begin{gather*}
z_1\circ e_1\circ \upsilon_{e_u\circ u_1}\circ e_1=z_1\circ e_1\circ \id{E_u}=z_1\circ e_1	\\
	z_1\circ e_1\circ \upsilon_{e_u\circ u_1}\circ e_2=z_1\circ e_1\circ \id{E_u}=z_1\circ e_1=z_1\circ e_2
\end{gather*}
showing  that $z_1$ is equal to $z_1\circ e_1 \circ  \upsilon_{e_u\circ u_1}$. 

On the other hand, 
\begin{align*}
z_2\circ \upsilon_{p_1}=z_1\circ a \circ b =z_1\circ \id{E_u} \circ a \circ b=z_1\circ e_1 \circ \upsilon_{e_u\circ u_1} \circ a\circ b=z_1\circ e_1\circ e_u\circ u_2 \circ \upsilon_{p_1} 
\end{align*}
and $\upsilon_{p_1}$ is an epimorphism and thus
$z_2=z_1\circ e_1\circ e_u \circ u_2$.
Summing up, $z_1\circ e_1$ fills our original diagram, thus its inner rectangle is indeed a pushout.

We are now ready to collect all our arrows in the following cube
		\[\xymatrix@C=20pt@R=8pt{ & &Q_{p_1}\ar[ddll]_{b} \ar[ddd]^(.4){\id{Q_{p_1}}}|(.666666)\hole\ar[rrr]^{\upsilon_{p_1}} &&&M \ar[ddd]^{\id{M}} \ar[dl]^{u_2}\\&  &&&U \ar[ddd]^{\id{U}} \ar[dl]^{e_u} \\A \ar[ddd]_{a} \ar[rrr]^{r}&&& E_u\ar[ddd]^(.6){\id{E_u}}\\&&Q_{p_1} \ar[rrr]|(.36)\hole^{\upsilon_{p_1}}|(.69)\hole \ar[dl]_{b}&&& M \ar[dl]^{u_2}\\ & A  \ar[dl]_(.4){a}&&& U \ar[dl]^{e_u} \\ Q_{e_u\circ u_1}\ar[rrr]_{\upsilon_{e_u\circ u_1}} &&& E_u}\]
This cube has an $\mathcal{M}, \mathcal{N}$-pushout as top and bottom face and all faces beside the frontal one are pullbacks, hence, by $\mathcal{M}, \mathcal{N}$-adhesivity it follows that also this last face is a pullback. By Lemma \ref{lem:pb1} the rectangles
\[\xymatrix@C=40pt@R=16pt{U\ar[r]^{a_2} \ar[d]_{e_u} & A \ar[r]^{r} \ar[d]_{a}& E_u \ar[d]^{\id{E_u}}\\
E_u \ar[r]_{e_2}& Q_{e_u\circ u_1} \ar[r]_{\upsilon_{e_u\circ u_1}} & E_u}\]
is a pullback, thus $e_u$ is an isomorphism as it is the pullback of $\id{E_u}$.
\end{proof}

\begin{cor}\label{cor:made}
Let $\X$ be a category with pullbacks and $\mathcal{M}$ a stable system of monos on it. If $\X$ is $\mathcal{M}$-adhesive, then for every object $X$  and every $[m]$ and $[n]$ in $\msub{M}(X)$, their supremum in $(\sub(X), \leq)$ exists and it belongs to $\msub{M}(X)$.
\end{cor}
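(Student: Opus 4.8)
The plan is to produce the join concretely with Proposition~\ref{prop:uni} and then to recognize it as a member of $\mathcal{M}$ through Theorem~\ref{thm:secondo}.

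The first step is to check that $\mathcal{M}\subseteq \mor(\X)_{\mathsf{a}}$. Take $\mathcal{N}=\mor(\X)$, which is trivially a matching class. For $m\in\mathcal{M}$ and an arbitrary arrow $f$, $\mathcal{M}$-adhesivity supplies a pushout of $f$ along $m$ which is Van Kampen, hence stable; since $m$ is a mono, Proposition~\ref{prop:pbpo} turns this square into a pullback as well, so $m$ is $\mor(\X)$-preadhesive. As $\mathcal{M}$ is stable under pullbacks, every pullback of $m$ again lies in $\mathcal{M}$ and is therefore preadhesive too, whence $m$ is $\mor(\X)$-adhesive.

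Next I would fix representatives $m,n\in\mathcal{M}$ of $[m],[n]$ and apply Proposition~\ref{prop:uni} with $\mathcal{N}=\mor(\X)$: $m$ is an $\mor(\X)$-adhesive mono and $n$, being a mono, lies in $\mor(\X)$, so forming the pullback of $m,n$ and then the pushout yields a mono $u\colon U\to X$ with $[u]=[m]\vee[n]$ in $(\sub(X),\leq)$. This already gives existence of the supremum. Moreover, the proof of Proposition~\ref{prop:uni} exhibits the two coprojections $u_1\colon N\to U$ and $u_2\colon M\to U$ as the pullbacks of $n$ and $m$ along $u$, so stability of $\mathcal{M}$ under pullbacks gives $u_1,u_2\in\mathcal{M}$.

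It remains to prove $u\in\mathcal{M}$, i.e.\ that $[u]\in\msub{M}(X)$, and this is where I expect the real work to be. The arrow $u$ is exactly an $\mathcal{M},\mor(\X)$-union, so the natural route is Theorem~\ref{thm:secondo} with $\mathcal{N}=\mor(\X)$, whose codiagonal hypothesis is automatic; its conclusion is precisely closure of $\mathcal{M}$ under such unions. The obstacle is to secure that theorem's standing hypotheses — that $\X$ is $\mathcal{M},\mor(\X)$-adhesive and that $\mathcal{M}=\mathcal{M}\cap\mor(\X)$ contains the split monos — from the mere assumption that $\mathcal{M}$ is a stable system; by Lemma~\ref{lem:mad} and Proposition~\ref{prop:po} both reduce to $\mathcal{M}$ being stable under pushouts. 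The decisive point is thus to upgrade to pushout-stability (equivalently, to deduce $u\in\mathcal{M}$ directly from $m=u\circ u_2$ with $u_2\in\mathcal{M}$): since the cancellation result of Lemma~\ref{lem:deco} only cancels left factors, the argument must instead feed the Van Kampen cube over the defining pushout, together with $u_1,u_2\in\mathcal{M}$, to present $u$ as a pullback of an $\mathcal{M}$-arrow.
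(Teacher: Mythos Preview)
Your route is the paper's intended one: the corollary sits immediately after Theorem~\ref{thm:secondo} and is meant to be the specialisation $\mathcal{N}=\mor(\X)$, after Proposition~\ref{prop:uni} supplies the existence of the join. Your first two steps are correct and match this exactly.

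Where you part company with the paper is that you have actually checked the hypotheses of Theorem~\ref{thm:secondo} and found them wanting. You are right: applying that theorem with $\mathcal{N}=\mor(\X)$ needs $(\mathcal{M},\mor(\X))$ to be a preadhesive structure (hence $\mathcal{M}$ stable under pushouts) and $\mathcal{M}=\mathcal{M}\cap\mor(\X)$ to contain all split monos. Neither is assumed in Corollary~\ref{cor:made} as stated, and neither follows from ``stable system of monos'' alone. The paper simply does not record this; note that the neighbouring Corollaries~\ref{cor:mad} and~\ref{cor:mequiv}, which package the same circle of ideas, \emph{do} carry the split-mono hypothesis explicitly, and Proposition~\ref{prop:po} and Lemma~\ref{lem:stab} both need it to derive pushout stability. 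So the honest reading is that Corollary~\ref{cor:made} tacitly assumes split monos lie in $\mathcal{M}$ (whence pushout stability via Proposition~\ref{prop:po}, and then $\mathcal{M},\mor(\X)$-adhesivity via Lemma~\ref{lem:mad}); with that in hand your argument and the paper's coincide.

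Your final paragraph, however, does not close the gap on its own terms. Having $u_1,u_2\in\mathcal{M}$ and $m=u\circ u_2$ gives you nothing via Lemma~\ref{lem:deco} (wrong direction, as you note), and the Van Kampen cube over the defining pushout $P\rightrightarrows M,N\to U$ controls squares lying \emph{over} that pushout, not the external arrow $u\colon U\to X$; there is no evident way to exhibit $u$ as a pullback of an $\mathcal{M}$-arrow from that data. If you want a self-contained fix, add the split-mono hypothesis and invoke Proposition~\ref{prop:po} and Lemma~\ref{lem:mad} before calling Theorem~\ref{thm:secondo}; that is the argument the paper has in mind.
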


Combining Theorem \ref{thm:primo} with Theorem \ref{thm:secondo} we obtain also the following results.

\begin{cor}\label{cor:equiv} Let $(\mathcal{M}, \mathcal{N})$ be a preadhesive structure on  a category $\X$ with pullbacks. If $\mathcal{M}\cap \mathcal{N}$ contains every split mono and every $\mathcal{M}, \mathcal{N}$-codiagonal is in $\mathcal{N}$, then the following are equivalent:
	\begin{enumerate}
		\item $\mathcal{M}\subseteq \ad{N}$ and $\mathcal{M}$ is closed under $\mathcal{M}$, $\mathcal{N}$-unions;
		\item $\X$ is $\mathcal{M}, \mathcal{N}$-adhesive.
	\end{enumerate}
\end{cor}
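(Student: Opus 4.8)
The plan is to recognize this biconditional as nothing more than the assembly of Theorems~\ref{thm:primo} and~\ref{thm:secondo} together with Example~\ref{ex:ade}; the only real work is to verify that the two standing hypotheses of the corollary---that $\mathcal{M}\cap\mathcal{N}$ contains every split mono and that $\mathcal{N}$ contains every $\mathcal{M},\mathcal{N}$-codiagonal---supply precisely the side conditions each of those results requires.

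For the direction $(1)\Rightarrow(2)$, I would appeal directly to Theorem~\ref{thm:primo}. That theorem requires exactly three things: that every split mono lie in $\mathcal{M}\cap\mathcal{N}$, that $\mathcal{M}\subseteq\ad{N}$, and that $\mathcal{M}$ be closed under $\mathcal{M},\mathcal{N}$-unions. The first is a standing hypothesis of the corollary, while the latter two are precisely the two conjuncts of condition (1); hence $\X$ is $\mathcal{M},\mathcal{N}$-adhesive. It is worth observing that the codiagonal hypothesis plays no role whatsoever in this direction.

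For $(2)\Rightarrow(1)$, I would split condition (1) into its two conjuncts and establish each separately. The inclusion $\mathcal{M}\subseteq\ad{N}$ is exactly the content of Example~\ref{ex:ade}: in an $\mathcal{M},\mathcal{N}$-adhesive category the $\mathcal{M},\mathcal{N}$-pushouts are Van Kampen, so every $m\in\mathcal{M}$ is $\mathcal{N}$-preadhesive, and since $\mathcal{M}$ is stable under pullbacks every such $m$ is in fact $\mathcal{N}$-adhesive. Closure of $\mathcal{M}$ under $\mathcal{M},\mathcal{N}$-unions is then exactly Theorem~\ref{thm:secondo}, whose three hypotheses---$\mathcal{M},\mathcal{N}$-adhesivity of $\X$, split monos lying in $\mathcal{M}\cap\mathcal{N}$, and codiagonals lying in $\mathcal{N}$---are all in force here, the last two being the standing hypotheses of the corollary.

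Since the argument merely quotes previously established results, there is no genuine technical obstacle; the only point demanding care is the bookkeeping of hypotheses. In particular one should flag the asymmetry just noted: the codiagonal condition is consumed only by the reverse implication (through Theorem~\ref{thm:secondo}), whereas the forward implication rests on Theorem~\ref{thm:primo} alone. Making this explicit is the cleanest way to present what is otherwise a one-line deduction.
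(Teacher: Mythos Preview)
Your proposal is correct and matches the paper's own approach exactly: the corollary is stated there as a direct combination of Theorem~\ref{thm:primo} and Theorem~\ref{thm:secondo}, with Example~\ref{ex:ade} supplying the inclusion $\mathcal{M}\subseteq\ad{N}$ in the reverse direction. Your remark that the codiagonal hypothesis is used only in $(2)\Rightarrow(1)$ is accurate and a useful clarification.
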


Finally Proposition \ref{prop:po} and Corollaries \ref{cor:mad} and \ref{cor:mads} yield the result below.

\begin{cor}\label{cor:mequiv}
	Let $\mathcal{M}$ be a stable system of monos on a category $\X$ with pullbacks and suppose that $\mathcal{M}$ contains all split monos, then the following are equivalent:
\begin{enumerate}
	\item $\X$ is $\mathcal{M}$-adhesive;
	\item every $\mathcal{M}$-pushout square is Van Kampen and for every object $X$, any two $[m], [n]\in \msub{M}(X)$ have a supremum in $(\sub(X), \leq)$ belonging to $\msub{M}(X)$;
		\item $\mathcal{M}$ is stable under pushouts, $\mathcal{M}\subseteq \mor(X)_{\mathsf{a}}$ and for every object $X$, every $[m], [n]\in \msub{M}(X)$ have a supremum in $(\sub(X), \leq)$ which is again in $\msub{M}(X)$.
\end{enumerate}
\end{cor}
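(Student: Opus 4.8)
The plan is to establish the equivalences by reducing each implication to results already proved, exploiting the identification of $\mathcal{M}$-adhesivity with $\mathcal{M}, \mor(\X)$-adhesivity. Since $\X$ is assumed to have all pullbacks, the first clause of $\mathcal{M}$-adhesivity (existence of $\mathcal{M}$-pullbacks) holds automatically, so the real content of $(1)$ is that pushouts along arrows of $\mathcal{M}$ exist and are Van Kampen. I would prove $(1)\Rightarrow(3)$, then $(3)\Rightarrow(1)$, and finally fold in $(2)$.

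For $(1)\Rightarrow(3)$, assume $\X$ is $\mathcal{M}$-adhesive. Because $\mathcal{M}$ contains every split mono, Proposition \ref{prop:po} gives at once that $\mathcal{M}$ is stable under pushouts. To obtain $\mathcal{M}\subseteq\mor(\X)_{\mathsf{a}}$ I would invoke Corollary \ref{cor:mad}, by which $\X$ is also $\mathcal{M}, \mor(\X)$-adhesive; then Example \ref{ex:ade} applied with $\mathcal{N}=\mor(\X)$ shows that every arrow of $\mathcal{M}$ is $\mor(\X)$-adhesive, i.e.\ lies in $\mor(\X)_{\mathsf{a}}$. The remaining clause, the existence in $\msub{M}(X)$ of suprema belonging to $\msub{M}(X)$, is exactly the content of Corollary \ref{cor:made}.

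The converse $(3)\Rightarrow(1)$ is an immediate appeal to Corollary \ref{cor:mads}: condition $(3)$ furnishes precisely its hypotheses, since $\mathcal{M}$ is a stable system of monos which is stable under pushouts, contains all split monos, satisfies $\mathcal{M}\subseteq\mor(\X)_{\mathsf{a}}$, and---by the Remark following Corollary \ref{cor:mads}---its supremum clause is exactly closure of $\mathcal{M}$ under binary joins. Finally, to handle $(2)$ I would use that, under the standing all-pullbacks hypothesis, $\mathcal{M}$-adhesivity amounts to ``pushouts along $\mathcal{M}$-morphisms exist and are Van Kampen'', which is the first clause of $(2)$; hence $(2)\Rightarrow(1)$ is immediate, while $(1)\Rightarrow(2)$ holds because the Van Kampen property is part of the definition of $\mathcal{M}$-adhesivity and the supremum clause is once more Corollary \ref{cor:made}.

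The genuine analytic work here is all borrowed---it sits inside Theorem \ref{thm:primo}/Corollary \ref{cor:mads} and Theorem \ref{thm:secondo}/Corollary \ref{cor:made}---so I expect the main obstacle to be bookkeeping rather than a new idea. The point requiring care is that, as warned in Remark \ref{rem:stab1}, stability under pushouts does \emph{not} guarantee the existence of the relevant pushouts; one must therefore be sure existence is genuinely supplied, namely by the definition of $\mathcal{M}$-adhesivity in $(1)$ and by reading the first clause of $(2)$ as asserting that the $\mathcal{M}$-pushouts exist and are Van Kampen. A secondary check is to confirm that ``closed under binary joins'' in Corollary \ref{cor:mads} matches verbatim the supremum condition stated in $(2)$ and $(3)$.
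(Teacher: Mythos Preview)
Your proposal is correct and follows essentially the same route as the paper, which simply points to Proposition~\ref{prop:po} and Corollaries~\ref{cor:mad} and~\ref{cor:mads}; you have merely filled in the bookkeeping (invoking Example~\ref{ex:ade} for $\mathcal{M}\subseteq\mor(\X)_{\mathsf{a}}$ and Corollary~\ref{cor:made} for the join clause) that the paper leaves implicit. Your care about reading the first clause of $(2)$ as including existence of $\mathcal{M}$-pushouts is warranted and matches the intended meaning.
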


\begin{rem}\label{rem:ex}
	Notice that, in items $2$ and $3$ of the previous corollary, the existence of a supremum in $(\sub(X), \leq)$ for $[m], [n]\in \msub{M}(X)$ is guaranteed by the hypothesis that every arrow in $\mathcal{M}$ is adhesive and by Proposition \ref{prop:uni}.
\end{rem}

\section{An embedding theorem for $\mathcal{M}, \mathcal{N}$-adhesive categories} \label{sec:top}
In this section we generalize the embedding theorem of \cite{garner2012axioms,lack2011embedding}: given a category $\X$ which is $\mathcal{M}, \mathcal{N}$-adhesive, if $\mathcal{M}$ and $\mathcal{N}$ are nice enough, we are able to construct a (Grothendieck) topos in which $\X$ embeds via a functor preserving pullbacks and $\mathcal{M}, \mathcal{N}$-pushouts.

\begin{defi} 
Let $(\mathcal{M}, \mathcal{N})$ be a preadhesive structure for a category $\X$. A \emph{$\jp$-covering family}  for an object $X$ is a set $\{p, q\}$ of arrows $p:Z\to X$ and $q:Y\to X$ such that there exist $m:N\to Y$ in $\mathcal{M}$ and $n:N\to Z$ in $\mathcal{N}$ making the following square a pushout
\[\xymatrix@R=16pt{N \ar[r]^{n} \ar[d]_{m}& Z \ar[d]^{p} \\ Y \ar[r]^q & X}\]
We will define $\jp(X)$ as the set of $\jp$-covering families for $X$.
\end{defi}

\begin{prop}Let $\X$ be a category with all pullbacks. Given a preadhesive structure  $(\mathcal{M}, \mathcal{N})$ such that $\mathcal{M}\subseteq \ad{N}$, the family $\{\jp(X)\}_{X\in \X}$ defines a \emph{coverage} (in the sense of \cite[Def.~C2.1.1]{johnstone2002sketches}) $\jp$ on $\X$.
\end{prop}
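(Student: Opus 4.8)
The plan is to verify the one axiom in the definition of a coverage \cite[Def.~C2.1.1]{johnstone2002sketches}: given a $\jp$-covering family $\{p\colon Z\to X,\,q\colon Y\to X\}$ and an arbitrary arrow $h\colon X^{*}\to X$, I must exhibit a $\jp$-covering family $\{p'\colon Z'\to X^{*},\,q'\colon Y'\to X^{*}\}$ such that $h\circ p'$ and $h\circ q'$ each factor through a member of $\{p,q\}$. The idea is simply to pull the defining pushout back along $h$. So let $m\colon N\to Y$ in $\mathcal{M}$ and $n\colon N\to Z$ in $\mathcal{N}$ be the span witnessing the family, with the square of apex $X$ a pushout. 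Since $m\in\mathcal{M}\subseteq\ad{N}$, the arrow $m$ is $\mathcal{N}$-preadhesive by Proposition~\ref{prop:dec}(1), so the pushout of $m$ along $n$ is \emph{stable}; by Remark~\ref{rem:tec1} the given square is therefore stable too.

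First I would take the pullbacks $p'\colon Z'\to X^{*}$, $z\colon Z'\to Z$ of $p$ along $h$ and $q'\colon Y'\to X^{*}$, $y\colon Y'\to Y$ of $q$ along $h$, which exist as $\X$ has all pullbacks; these already give the factorizations $h\circ p'=p\circ z$ and $h\circ q'=q\circ y$. To build the rest of the cube over the given pushout, I would let $N'$ be the pullback of $n$ along $z$, obtaining $n'\colon N'\to Z'$ and $x\colon N'\to N$ with the back face a pullback. A short computation, $q\circ m\circ x=p\circ n\circ x=p\circ z\circ n'=h\circ p'\circ n'$, lets me apply Corollary~\ref{cor:cube}: the front face being a pullback yields a unique $m'\colon N'\to Y'$ completing the cube, and since the right and back faces are pullbacks the left face is one as well.

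Now the cube sits over the stable pushout with all four vertical faces pullbacks (the back and left faces being those over the span out of $N$), so stability forces the top face
\[\xymatrix@R=16pt{N'\ar[r]^{n'}\ar[d]_{m'} & Z'\ar[d]^{p'}\\ Y'\ar[r]_{q'} & X^{*}}\]
to be a pushout. Here $n'$ is the pullback of $n\in\mathcal{N}$ and $m'$ the pullback of $m\in\mathcal{M}$, so stability of $\mathcal{M}$ and $\mathcal{N}$ under pullbacks (part of the preadhesive structure) gives $n'\in\mathcal{N}$ and $m'\in\mathcal{M}$. Hence $\{p',q'\}$ is a $\jp$-covering family for $X^{*}$, and together with the factorizations above this establishes the coverage axiom.

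The main obstacle I anticipate is the bookkeeping in assembling the cube and matching its faces to the hypotheses of Corollary~\ref{cor:cube} and of stability---specifically, making sure that the face built over the $\mathcal{N}$-leg $n$ and the face over the $\mathcal{M}$-leg $m$ are recognised as the back and left faces, so that all four vertical faces are pullbacks before stability is invoked. Once the hypothesis $\mathcal{M}\subseteq\ad{N}$ has been used to secure stability of the original pushout, the remaining verifications (existence of the pullbacks, the commutativity checks, and the class memberships of $m'$ and $n'$) are routine.
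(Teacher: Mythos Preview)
Your proposal is correct and is precisely the expected argument: pull the witnessing $\mathcal{M},\mathcal{N}$-pushout back along $h$, use stability (available because $\mathcal{M}\subseteq\ad{N}$) to see that the top face is again a pushout, and observe that the pulled-back span lies in $\mathcal{M}\times\mathcal{N}$ by stability of these classes under pullback. The paper omits the proof as straightforward, but your construction of the cube via Corollary~\ref{cor:cube} and the identification of the back/left versus front/right faces is exactly the routine verification that is needed.
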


\begin{rem}The coverage $\jp$ is a \emph{cd-structure} in the sense of \cite{VOEVODSKY20101384,voevodsky2010unstable}.
\end{rem}

Let us recall some terminology from basic sheaf theory (see, e.g., \cite{johnstone2002sketches,maclane2012sheaves}). Let $(\X, \j)$ be a site and $\{f_i\}_{i\in I}$ be a covering family for $X$, i.e., $f_i:Y_i\to X$ and $\{f_i\}_{i\in I}$ belongs to $\j(X)$. Given a functor $\X^{op}\to \Set$,  a family $\{a_i\}_{i\in I}$ of elements $a_i\in F(Y_i)$ is \emph{compatible} for $F$ if whenever we have two arrows $g:U\to X_i$, $h:U\to X_j$ such that $f_i\circ g = f_j\circ h$ 
then $F(g)(a_i)=F(h)(a_j)$.
A functor $F:\X^{op}\to \Set$ is a \emph{sheaf} if for every covering family $\{f_i\}_{i\in I}$ and compatible family $\{a_i\}_{i\in I}$ there exists a unique \emph{amalgamation} $a\in F(X)$ such that, for every $i\in I$,  $F(f_i)(a)=a_i$.

Our next step is to characterize sheaves for the site $(\X, \jp)$.

\begin{lem}\label{lem:cond}Let $(\mathcal{M}, \mathcal{N})$ be a preadhesive structure for a category $\X$ with pullbacks and suppose that every element in $\mathcal{M}$ is $\mathcal{N}$-adhesive. Then the following are equivalent for a presheaf $F:\X^{op}\to \Set$:
	\begin{enumerate}
	\item $F$ is in $\sh(\X, \jp)$;
	\item given the following two squares, the first of which is a  $\mathcal{M}, \mathcal{N}$-pushout and the second one a pullback, 
	\[\xymatrix@R=16pt{N \ar[r]^{n} \ar[d]_{m}& Z \ar[d]^{p} & K_q \ar[r]^{y_1}  \ar[d]_{y_2}& Y\ar[d]^{q} \\ Y \ar[r]_q & X & Y\ar[r]_q & X }\]
	if the solid part of the diagram below is given, then there exists a unique $f:S\to F(X)$ fitting in it.	
	\[\xymatrix@C=40pt{S\ar@{.>}[dr]^{f} \ar@/_.3cm/[ddr]_{f_1} \ar@/^.3cm/[drr]^{f_2}\\& F(X) \ar[d]_{F(p)} \ar[r]^{F(q)} & F(Z) \ar[d]^{F(m)}\\ F(K_q) & F(Y) \ar[r]_{F(n)} \ar@<.5ex>[l]^(.455){F(y_2)}  \ar@<-.5ex>[l]_{F(y_1)} & F(N)}\]
	\end{enumerate} 
\end{lem}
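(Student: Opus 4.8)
The plan is to observe that both statements express one and the same universal property: that $F(X)$, together with $F(p):F(X)\to F(Z)$ and $F(q):F(X)\to F(Y)$, is the limit of the finite diagram built from $F(n):F(Z)\to F(N)$, $F(m):F(Y)\to F(N)$ and the parallel pair $F(y_1),F(y_2):F(Y)\to F(K_q)$. I would first record three structural facts about the covering square. Since $(\mathcal{M},\mathcal{N})$ is a preadhesive structure, $\mathcal{M}$ is stable under pushouts, so $p$ (being the pushout of $m\in\mathcal{M}$) lies in $\mathcal{M}$ and is in particular a monomorphism; hence its kernel pair is trivial, $Z\times_X Z\cong Z$. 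Since $m\in\mathcal{M}\subseteq\ad{N}$ is $\mathcal{N}$-adhesive, hence $\mathcal{N}$-preadhesive by \cref{prop:dec}(1), the $\mathcal{M},\mathcal{N}$-pushout is also a pullback; thus $N\cong Z\times_X Y$ with projections $n,m$, while $K_q=Y\times_X Y$ is the kernel pair of $q$ with projections $y_1,y_2$.

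The key step is to show that a pair $(a_Z,a_Y)\in F(Z)\times F(Y)$ is a compatible family for the covering $\{p,q\}$ if and only if it satisfies the two equalities
\[F(n)(a_Z)=F(m)(a_Y)\qquad\text{and}\qquad F(y_1)(a_Y)=F(y_2)(a_Y).\]
Necessity is immediate, since these are the instances of compatibility obtained from the commuting pairs $p\circ n=q\circ m$ and $q\circ y_1=q\circ y_2$. For sufficiency I would run through the four cases of the compatibility condition indexed by $\{p,q\}\times\{p,q\}$. Whenever $g:U\to Z$ and $h:U\to Y$ satisfy $p\circ g=q\circ h$, the universal property of $N$ as the pullback $Z\times_X Y$ yields a unique $u:U\to N$ with $n\circ u=g$ and $m\circ u=h$, whence $F(g)(a_Z)=F(u)(F(n)(a_Z))=F(u)(F(m)(a_Y))=F(h)(a_Y)$ by the first equality; this disposes of the two mixed cases. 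The case of two maps into $Y$ factors through $K_q$ and is settled identically using the second equality, while the case of two maps $g,h:U\to Z$ is vacuous, since $p$ monic forces $g=h$. Hence compatibility is exactly the conjunction of the two displayed equalities.

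It remains to match this with the two statements. For a general $S$, condition~(2) says precisely that giving $f:S\to F(X)$ is the same as giving $f_1,f_2$ with $F(n)\circ f_2=F(m)\circ f_1$ and $F(y_1)\circ f_1=F(y_2)\circ f_1$; since $\Set$ is well-pointed it suffices to test this on the terminal object $S=1$, where it becomes the assertion that $(F(p),F(q))$ restricts to a bijection between $F(X)$ and the set of pairs $(a_Z,a_Y)$ obeying the two equalities---that is, between $F(X)$ and the compatible families. On the other hand, because every covering family in $\jp(X)$ is exactly a pair $\{p,q\}$ arising from such an $\mathcal{M},\mathcal{N}$-pushout, $F$ is a sheaf if and only if for each such family $(F(p),F(q))$ is a bijection onto the compatible families (existence and uniqueness of the amalgamation). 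The two descriptions coincide, giving $(1)\Leftrightarrow(2)$.

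The main obstacle is the key step: the sheaf condition quantifies over \emph{all} test objects $U$ and pairs $g,h$, whereas condition~(2) mentions only the two canonical overlaps $N$ and $K_q$. Reducing the former to the latter is precisely where the pushout-is-pullback property (here supplied by $m$ being $\mathcal{N}$-preadhesive, via \cref{prop:dec}(1)) and the monicity of $p$ are indispensable; once the overlaps are identified as $Z\times_X Y$ and $Y\times_X Y$, and $Z\times_X Z$ is seen to collapse to $Z$, the remaining verification is a routine diagram chase.
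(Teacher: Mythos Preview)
Your proof is correct and follows essentially the same approach as the paper's. Both arguments hinge on the same key step---reducing the compatibility condition for the cover $\{p,q\}$ to the two equalities $F(n)(a_Z)=F(m)(a_Y)$ and $F(y_1)(a_Y)=F(y_2)(a_Y)$ by factoring test spans through the pullbacks $N\cong Z\times_X Y$ and $K_q=Y\times_X Y$, and using that $p\in\mathcal{M}$ is monic to trivialise the $Z$--$Z$ case---and your framing of this as identifying the limit cone is a clean way to package the two directions at once, whereas the paper treats $(1\Rightarrow 2)$ and $(2\Rightarrow 1)$ separately but with the same ingredients.
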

\begin{proof}$(1\Rightarrow 2)$ Let us start showing that, for every $s\in S$, the family $\left\{f_1(s), f_2(s)\right\}$ is compatible for $F$. Given a $\jp$-covering family $\{p,q\}$ and three commutative squares as the ones below
	\[\xymatrix@R=16pt{A \ar[r]^{a_1} \ar[d]_{a_2}& Y \ar[d]^{q} & B \ar[r]^{b_1} \ar[d]_{b_2}& Y \ar[d]^{q} & C \ar[r]^{c_1} \ar[d]_{c_2}& Z \ar[d]^{p}\\ Z \ar[r]_{p} & X & Y \ar[r]_{q} & X & Z \ar[r]_{p} & X}\]
$p$ is the pushout of $m$, thus it belongs to $\mathcal{M}$ and so is a mono, implying that
\[c_1=c_2 \qquad F(c_1)\circ f_2=F(c_2)\circ f_2\]
Moreover $m\in \ad{N}$ and $n\in \mathcal{N}$, thus in the following diagrams the two inner squares are pullbacks,  giving us the dotted arrows $a:A\to N$, $b:B\to K_q$.
\[\xymatrix@R=16pt{A \ar@/^.4cm/[drr]^{a_2} \ar@/_.4cm/[ddr]_{a_1} \ar@{.>}[dr]^{a}&&& B \ar@/^.4cm/[drr]^{b_2} \ar@/_.4cm/[ddr]_{b_1} \ar@{.>}[dr]^{b}\\&N \ar[r]^{n} \ar[d]_{m}& Z \ar[d]^{p} && K_q \ar[r]^{y_2} \ar[d]_{y_1}& Y \ar[d]^{q}\\ &Y \ar[r]_{q} & X && Y \ar[r]_{q} & X}\]
Computing we get the following chains of identities
\begin{gather*}
	F(a_1)\circ f_1= F(a)\circ F(m)\circ f_1= F(a)\circ F(n)\circ f_2=F(a_2)\circ f_2\\
	F(b_1)\circ f_1= F(b)\circ F(y_1)\circ f_1=F(b)\circ F(y_1)\circ f_1=F(b_2) \circ f_1 
\end{gather*}
which imply that, for every $s\in S$,  $\{f_1(s), f_2(s)\}$ is a compatible family for $F$. Since $F$ is a sheaf we can define $f:S\to F(X)$ taking as $f(s)$ the unique amalgamation of $\{f_1(s), f_2(s)\}$, by construction
\[f_1=F(p)\circ f \qquad f_2=F(q)\circ f \]
For uniqueness, it is enough to notice that, if $g:S\to F(X)$ is another arrow such that 
\[f_1=F(p)\circ g \qquad f_2=F(q)\circ g\]
 then $g(s)$ is an amalgamation for $\{f_1(s), f_2(s)\}$.
 
	\smallskip\noindent $(2\Rightarrow 1)$ Let $\{p,q\}$ be a $\jp$-cover of $X$, by definition there exists an $\mathcal{M}, \mathcal{N}$-pushout square 
		\[\xymatrix@R=16pt{N \ar[r]^{n} \ar[d]_{m}& Z \ar[d]^{p} \\ Y \ar[r]_q & X}\]
		Take  a compatible family $\{s_1, s_2\}$, with $s_1\in F(Y)$ and $s_2\in F(Z)$.  From  the two squares
		\[\xymatrix@R=16pt{N \ar[r]^{n} \ar[d]_{m}& Z \ar[d]^{p} & K_q \ar[r]^{y_1}  \ar[d]_{y_2}& Y\ar[d]^{q} \\ Y \ar[r]_q & X & Y\ar[r]_q & X }\]
		we get that $F(m)(s_1)=F(n)(s_2)$ and $F(y_1)(s_1)=F(y_2)(s_1)$.
		Thus, if $\delta_{s_1}:1\to F(Y)$ and $\delta_{s_2}:1\to F(Z)$ pick, respectively, $s_1$ and $s_2$, we have the solid part of the following commutative diagram and, by hypothesis, also the dotted $\delta:1\to F(X)$.		
		\[\xymatrix@C=40pt{1\ar@{.>}[dr]^{\delta} \ar@/_.3cm/[ddr]_{\delta_{s_1}} \ar@/^.3cm/[drr]^{\delta_{s_2}}\\& F(X) \ar[d]_{F(p)} \ar[r]^{F(q)} & F(Z) \ar[d]^{F(m)}\\ F(K_q) & F(Y) \ar[r]_{F(n)} \ar@<.5ex>[l]^(.455){F(y_2)}  \ar@<-.5ex>[l]_{F(y_1)} & F(N)}\]
		Now, let $s$ be the element of $F(X)$ picked by $\delta$, then, by construction $s$ is an amalgamation for $\{s_1, s_2\}$. On the other hand, if $s'$ is another amalgamation, then $\delta_{s_1}=F(p)$ and $\delta_{s'} \qquad \delta_{s_2}=F(p)\circ \delta_{s'}$, so that $\delta=\delta_{s'}$, showing that $s=s'$, i.e. that $F$ is a sheaf.
\end{proof}

We can now combine the previous lemma with Lemma \ref{lem:fon} to get the following.
\begin{lem}\label{lem:cond2}
Let $(\mathcal{M}, \mathcal{N})$ be a preadhesive structure on a category $\X$ with pullbacks such that  $\mathcal{M}\subseteq \ad{N}$, $\mathcal{M}\cap\mathcal{N}$ contains every split mono  and  $\mathcal{M}$ is closed under $\mathcal{M},\mathcal{N}$-unions. Then for a presheaf $F:\X^{op}\to \Set$ the following are equivalent
\begin{enumerate}
	\item $F$ is in $\sh(\X, \jp)$;
	\item $F$ sends $\mathcal{M}, \mathcal{N}$-pushouts to pullbacks.
\end{enumerate}
\end{lem}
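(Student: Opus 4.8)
The plan is to treat Lemma \ref{lem:cond} as a black box and reduce the statement to comparing two universal properties. Since $\mathcal{M}\subseteq\ad{N}$, Lemma \ref{lem:cond} applies and tells us that $F\in\sh(\X,\jp)$ if and only if, for every $\mathcal{M},\mathcal{N}$-pushout (with coprojections $p\colon Z\to X$, $q\colon Y\to X$ and legs $m\colon N\to Y$ in $\mathcal{M}$, $n\colon N\to Z$ in $\mathcal{N}$) together with the kernel pair $y_1,y_2\colon K_q\to Y$ of $q$, the limit-cone condition (2) of that lemma holds. On the other hand, ``$F$ sends $\mathcal{M},\mathcal{N}$-pushouts to pullbacks'' says exactly that $F(X)$ is the pullback of $F(Y)\xrightarrow{F(m)}F(N)\xleftarrow{F(n)}F(Z)$. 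The two conditions differ only by the extra constraint $F(y_1)\circ f_1=F(y_2)\circ f_1$ imposed in condition (2) of Lemma \ref{lem:cond}, so the whole lemma comes down to showing that, under the present hypotheses, this kernel-pair constraint is redundant.

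The direction from the pullback condition to the sheaf condition is immediate: if $F(X)$ is the pullback above, then any pair $f_1,f_2$ satisfying the compatibility $F(m)\circ f_1=F(n)\circ f_2$ already admits a unique mediating arrow into $F(X)$, so a fortiori the more constrained families appearing in condition (2) of Lemma \ref{lem:cond} do, and the $K_q$-constraint is simply never used. Hence such an $F$ is a sheaf.

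For the converse I would bring in Lemma \ref{lem:fon}, whose hypotheses coincide with those of the present statement. Assuming $F$ is a sheaf, condition (2) of Lemma \ref{lem:cond} holds for \emph{every} $\mathcal{M},\mathcal{N}$-pushout. Fix one, and form the kernel pairs $K_n,K_q$, the comparison $k\colon K_n\to K_q$ and the split diagonals $\gamma_n\colon N\to K_n$, $\gamma_q\colon Y\to K_q$ as in the paragraph preceding Lemma \ref{lem:fon}; recall the identity $k\circ\gamma_n=\gamma_q\circ m$. By Lemma \ref{lem:fon} the square with legs $m,\gamma_n$ and coprojections $\gamma_q,k$ is again an $\mathcal{M},\mathcal{N}$-pushout, so condition (2) of Lemma \ref{lem:cond} applies to it; its \emph{uniqueness} half yields that $(F(\gamma_q),F(k))\colon F(K_q)\to F(Y)\times F(K_n)$ is jointly monic (the compatibility needed to invoke uniqueness follows precisely from $\gamma_q\circ m=k\circ\gamma_n$). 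Now given $f_1,f_2$ with $F(m)\circ f_1=F(n)\circ f_2$, I would check that $F(y_1)\circ f_1$ and $F(y_2)\circ f_1$ become equal after post-composing with $F(\gamma_q)$ and with $F(k)$: post-composition with $F(\gamma_q)$ returns $f_1$ in both cases because $y_i\circ\gamma_q=\id{Y}$, while post-composition with $F(k)$ gives $F(n\circ x_i)\circ f_2$ because $y_i\circ k=m\circ x_i$ and $F(m)\circ f_1=F(n)\circ f_2$, and these agree since $x_1,x_2$ is the kernel pair of $n$, whence $n\circ x_1=n\circ x_2$. Joint monicity then forces $F(y_1)\circ f_1=F(y_2)\circ f_1$, so the kernel-pair constraint is automatic; feeding $f_1,f_2$ into condition (2) of Lemma \ref{lem:cond} for the original pushout produces the desired unique $f$, i.e.\ $F$ sends the pushout to a pullback.

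I expect the crux to be the bootstrapping step in the last paragraph: the realization that Lemma \ref{lem:fon} upgrades the auxiliary kernel-pair square to a bona fide $\mathcal{M},\mathcal{N}$-pushout, which is what licenses re-applying the sheaf hypothesis there and extracting joint monicity of $(F(\gamma_q),F(k))$. Everything else is bookkeeping with the structural identities $y_i\circ\gamma_q=\id{Y}$, $y_i\circ k=m\circ x_i$, $n\circ x_1=n\circ x_2$ and $k\circ\gamma_n=\gamma_q\circ m$, together with functoriality of $F$; the only points requiring care are checking the compatibility hypotheses each time condition (2) of Lemma \ref{lem:cond} is invoked, but these all reduce to the identities just listed.
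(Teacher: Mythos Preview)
Your proof is correct and follows essentially the same route as the paper: both invoke Lemma \ref{lem:fon} to recognise the square $(m,\gamma_n;\gamma_q,k)$ as an $\mathcal{M},\mathcal{N}$-pushout, extract joint monicity of $(F(\gamma_q),F(k))$ from the sheaf hypothesis, and then verify the kernel-pair constraint for the original square via the identities $y_i\circ\gamma_q=\id{Y}$, $y_i\circ k=m\circ x_i$, $n\circ x_1=n\circ x_2$. One small point: when you invoke uniqueness in condition (2) of Lemma \ref{lem:cond} for the auxiliary square, there is also a kernel-pair constraint (now on $\gamma_q$) that you do not mention; it is automatic either because $\gamma_q$ is a split mono (this is how the paper phrases it, concluding that $F$ sends the auxiliary square to an actual pullback) or, in your formulation, because your test map $f_1$ factors through $F(\gamma_q)$ and $\gamma_q$ coequalises its own kernel pair.
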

\begin{proof}
	\smallskip\noindent $(1\Rightarrow 2)$ Given the following two squares, the first one of which is an $\mathcal{M}, \mathcal{N}$-pushout, while the second is a pullback
	\[\xymatrix@R=16pt{N \ar[r]^{n} \ar[d]_{m}& Z \ar[d]^{p} & K_n \ar[r]^{n_1} \ar[d]_{n_2}& N\ar[d]^{n} \\ Y \ar[r]_q & X & N \ar[r]_n & Y }\]
	 Lemma \ref{lem:fon} gives the following diagrams, in wich the common square on the left is an $\mathcal{M}, \mathcal{N}$-pushout. In particular this implies $\{k, \gamma_{q}\}$ is a $\jp$-covering family of $X$
	\[\xymatrix@R=16pt{N \ar[r]^-{\gamma_n} \ar[d]_{m} & K_n \ar[d]_{k} \ar[r]^{n_1}& N \ar[r]^n \ar[d]_{m}& Z\ar[d]^{p} & N \ar[r]^-{\gamma_n} \ar[d]_{m}& K_n \ar[d]_{k} \ar[r]^{n_2}& N \ar[r]^n \ar[d]_{m}& Z\ar[d]^{p}\\Y \ar[r]_-{\gamma_q}& K_q \ar[r]_{y_1}& Y\ar[r]_{q} & X & Y \ar[r]_-{\gamma_q} & K_q \ar[r]_{y_2}& Y\ar[r]_{q} & X}\]
	$k$ is in $\mathcal{M}$ since it is the pushout of $\mathcal{M}$, thus $k$ and $\gamma_q$ are both mono, so that Lemma \ref{lem:cond} now implies that the square below is a pullback.
	\[\xymatrix@R=16pt{F(K_q) \ar[r]^{F(k)} \ar[d]_{F(\gamma_q)} & F(K_n) \ar[d]^{F(\gamma_n)}\\ F(Y) \ar[r]_{F(m)} & F(N)}\]

 Suppose that the solid part of the following diagram is given
	\[\xymatrix@C=30pt@R=16pt{S\ar@{.>}[dr]^{f} \ar@/_.5cm/[ddr]_{f_1} \ar@/^.5cm/[drr]^{f_2}\\& F(X) \ar[d]_{F(p)} \ar[r]^{F(q)} & F(Z) \ar[d]^{F(n)}\\ & F(Y) \ar[r]_{F(m)}  & F(N)}\]

On one hand we have at once that
\begin{align*}	F(\gamma_q)\circ F(y_1)\circ f_1=F(y_1\circ \gamma_q)\circ f_1=F(\id{Y})\circ f_1=F( y_2\circ \gamma_q)\circ f_1=	F(\gamma_q)\circ F(y_2)\circ f_1
\end{align*}
while, on the other hand, we also have
 \begin{align*}
	F(k)\circ F(y_1)\circ f_1 &=F( y_1 \circ k)=F(m\circ n_1)\circ f_1 =F(n_1)\circ F(m)\circ f_1\\
	&=F(n_1)\circ F(n)\circ f_2=F(n_2)\circ F(n)\circ f_2=F(n_2)\circ F(m)\circ f_1\\
	&=F(m\circ n_2) \circ f_1=F(y_2\circ k) \circ f_1=F(k)\circ F(y_2)\circ f_1
\end{align*}
and we  can deduce that $F(y_1)\circ f_1= F(y_2)\circ f_1$, thus Lemma \ref{lem:cond} now entails the thesis.
	
	\smallskip\noindent $(2\Rightarrow 1)$ This follows at once from Lemma \ref{lem:cond}.
\end{proof}

We are now ready to obtain our main theorem.

\begin{thm} 
	Let $(\mathcal{M}, \mathcal{N})$ be a preadhesive structure on a category $\X$ with pullbacks such that  $\mathcal{M}\subseteq \ad{N}$, $\mathcal{M}\cap\mathcal{N}$ contains every split mono  and  $\mathcal{M}$ is closed under $\mathcal{M},\mathcal{N}$-unions. Then
	\begin{enumerate}
		\item the Yoneda embedding $\yo_{\X}:\X\to \Set^{\X^{op}}$ factors through a full and faithful functor $\yo'_{\X}:\X \to \sh(\X, \jp)$;
		\item $\yo'_{\X}$ preserves pullbacks and sends $\mathcal{M}, \mathcal{N}$-pushouts to pushouts.
	\end{enumerate}
\end{thm}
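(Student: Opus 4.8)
The plan is to funnel everything through the sheaf characterization of \cref{lem:cond2}, using the Yoneda lemma as the only extra ingredient. The key observation is that the sheaf condition ``$F$ sends $\mathcal{M},\mathcal{N}$-pushouts to pullbacks'' is, after transport along the Yoneda correspondence, precisely the data needed to verify a pushout universal property in the sheaf category. Since the hypotheses of the theorem are exactly those of \cref{lem:cond2}, I never need to touch the classes $\mathcal{M}$, $\mathcal{N}$ directly.

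For item (1), I would first show that every representable $\yo_{\X}(X)=\X(-,X)$ is a sheaf. Being a contravariant representable, $\X(-,X)$ carries every colimit of $\X$ to a limit; in particular it sends any $\mathcal{M},\mathcal{N}$-pushout to a pullback in $\Set$, the pullback property of the image square being nothing but the universal property of the pushout. By \cref{lem:cond2} this places every $\yo_{\X}(X)$ in $\sh(\X,\jp)$, so $\yo_{\X}$ factors as $i\circ\yo'_{\X}$, with $i:\sh(\X,\jp)\hookrightarrow\Set^{\X^{op}}$ the inclusion. Full faithfulness of $\yo'_{\X}$ is then formal: $\yo_{\X}$ is full and faithful by Yoneda and $i$ is full and faithful as the inclusion of a full subcategory, and from $\yo_{\X}=i\circ\yo'_{\X}$ with $i$ full and faithful one reads off both properties for $\yo'_{\X}$.

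For the pullback half of item (2), I would use that limits in $\sh(\X,\jp)$ are computed as in $\Set^{\X^{op}}$, since $i$ is a right adjoint to sheafification (hence preserves limits) and is fully faithful (hence reflects them). As $\yo_{\X}=i\circ\yo'_{\X}$ preserves pullbacks and $i$ creates them, $\yo'_{\X}$ preserves pullbacks. The interesting half is that $\yo'_{\X}$ sends $\mathcal{M},\mathcal{N}$-pushouts to pushouts; this cannot come from colimit-preservation of $\yo_{\X}$, which fails, so I would verify the pushout universal property directly in $\sh(\X,\jp)$. Fix an $\mathcal{M},\mathcal{N}$-pushout with span $Y\xleftarrow{m}N\xrightarrow{n}Z$ and apex $W$, a sheaf $F$, and a cocone $\alpha:\yo'_{\X}(Z)\to F$, $\beta:\yo'_{\X}(Y)\to F$ with $\alpha\circ\yo'_{\X}(n)=\beta\circ\yo'_{\X}(m)$. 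By Yoneda these correspond to elements $a\in F(Z)$, $b\in F(Y)$, and naturality turns the compatibility condition into $F(n)(a)=F(m)(b)$ in $F(N)$. Since $F$ is a sheaf, \cref{lem:cond2} says $F$ sends the given pushout to a pullback, so there is a unique $w\in F(W)$ with $F(p)(w)=a$ and $F(q)(w)=b$; the corresponding $\gamma:\yo'_{\X}(W)\to F$ is the unique factorization, which is exactly the pushout universal property.

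The main obstacle is conceptual rather than computational: one must resist arguing via colimit-preservation of the Yoneda embedding (which is false) or via sheafifying the presheaf pushout, and instead recognize that the pullback condition of \cref{lem:cond2} is literally the dual of the pushout universal property once everything is carried through the Yoneda isomorphism. The only delicate bookkeeping is the naturality identifications $\alpha\circ\yo'_{\X}(n)\leftrightarrow F(n)(a)$ and $\beta\circ\yo'_{\X}(m)\leftrightarrow F(m)(b)$, which must be done carefully to get the variances and directions of the arrows right, but these are routine.
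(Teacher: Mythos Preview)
Your proposal is correct and follows essentially the same route as the paper: both parts hinge on \cref{lem:cond2} and the Yoneda lemma, with representables lying in $\sh(\X,\jp)$ because they send all colimits to limits, and the pushout property verified by identifying $\sh(\X,\jp)(\yo'_{\X}(-),F)\cong F(-)$ so that the hom-square is a pullback exactly when $F$ sends the $\mathcal{M},\mathcal{N}$-pushout to one. The only difference is presentational: the paper packages the pushout argument into a single commutative diagram exhibiting the Yoneda isomorphism between the inner $F$-square and the outer hom-square, whereas you unpack the same identification element-wise.
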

\begin{proof}
	\begin{enumerate}[leftmargin=0pt,itemindent=1.7em]
	\item  Since $\sh(\X, \jp)$ is a full subcategory of $\Set^{\X^{op}}$, it is enough to show that, for every $X\in \X$, the functor $\X(-, X)$ is a sheaf, but this follows at once from Lemma \ref{lem:cond2}, since any representable presheaf sends pushouts to pullbacks.
	\item The inclusion $\sh\left(\X, \jp\right)\to \Set^{\X^{op}}$ creates limits and $\yo_\X$ sends pullbacks in pullbacks, therefore $\yo'_\X$ preserves pullbacks too. Take now an $\mathcal{M}, \mathcal{N}$-pushout 
		\[\xymatrix@R=16pt{X \ar[r]^n \ar[d]_{m}& Z \ar[d]^{q}\\ Y\ar[r]_p &Q}\]
Since $\sh\left(\X, \jp\right)$ is a full subcategory of $\Set^{\X^{op}}$, by Yoneda lemma every sheaf $F$ yields a natural isomorphism $\mathsf{y}:\sh\left(\X, \jp\right)\left(\yo'_{\X}(-), F\right)\to F$, so we obtain a diagram
\[\xymatrix@C=30pt@R=16pt{\mathbf{Sh}\left(\X, \jp\right)\left(\yo'_{\X}(Q), F\right) \ar[dr]_{\mathsf{y}_Q} \ar[rrr]^{(-)\circ \yo'_{\X}(q)} \ar[ddd]_{(-)\circ \yo'_{\X}(p)}& & & \sh\left(\X, \jp\right)\left(\yo'_{\X}(Z), F\right) \ar[dl]^{\mathsf{y}_Z} \ar[ddd]^{(-)\circ \yo'_{\X}(n)}\\& F(Q)  \ar[r]^{F(q)} \ar[d]_{F(p)}& F(Z) \ar[d]^{F(n)} \\& F(Y) \ar[r]_{F(m)} & F(X) \\ \sh\left(\X, \jp\right)\left(\yo'_{\X}(Y), F\right)  \ar[ur]^{\mathsf{y}_Y}\ar[rrr]_{(-)\circ \yo'_{\X}(m)}& & & \sh\left(\X, \jp\right)\left(\yo'_{\X}(X), F\right) \ar[ul]_{\mathsf{y}_X}}\]
$F$ is a sheaf, hence the inner square is a pullback by Lemma \ref{lem:cond2} and thus the outer one is a pullback too, proving that $\yo'_\X$ sends $\mathcal{M}, \mathcal{N}$-pushouts to pushouts.
	\qedhere 
	\end{enumerate}
\end{proof}

\begin{cor}
Let $\X$ be an $\mathcal{M}, \mathcal{N}$-adhesive  category with pullbacks such that $\mathcal{M}\cap \mathcal{N}$ contains all split monomorphisms and $\mathcal{N}$ contains all $\mathcal{M}, \mathcal{N}$-codiagonals, then there exists a full and faithful functor from $\X$ into a topos, preserving all pullbacks and  $\mathcal{M}, \mathcal{N}$-pushouts.
\end{cor}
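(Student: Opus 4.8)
The plan is to read this corollary off the embedding theorem proved immediately above, whose conclusion already supplies a full and faithful functor $\yo'_{\X}\colon\X\to\sh(\X,\jp)$ that preserves pullbacks and sends $\mathcal{M},\mathcal{N}$-pushouts to pushouts. All that then remains is to check that the hypotheses of that theorem are met and to recognise $\sh(\X,\jp)$ as a topos.

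First I would verify the three hypotheses of the theorem. The requirement that $\mathcal{M}\cap\mathcal{N}$ contain every split monomorphism is assumed outright in the corollary. The requirement $\mathcal{M}\subseteq\ad{N}$ follows from \Cref{ex:ade}: since $\X$ is $\mathcal{M},\mathcal{N}$-adhesive, $\mathcal{N}$ is a matching class and every arrow of $\mathcal{M}$ is $\mathcal{N}$-adhesive. For the last requirement, that $\mathcal{M}$ be closed under $\mathcal{M},\mathcal{N}$-unions, I would invoke \Cref{thm:secondo}, whose hypotheses — $\X$ being $\mathcal{M},\mathcal{N}$-adhesive with pullbacks, $\mathcal{M}\cap\mathcal{N}$ containing all split monos, and $\mathcal{N}$ containing all $\mathcal{M},\mathcal{N}$-codiagonals — coincide exactly with those of the present corollary, and whose conclusion is precisely this closure property.

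With all hypotheses in hand, the embedding theorem delivers $\yo'_{\X}$ with the stated preservation properties. To finish I would note that the coverage $\jp$ is well defined by the proposition preceding the theorem (which needs only $\mathcal{M}\subseteq\ad{N}$, already secured), so $\sh(\X,\jp)$ is genuinely a category of sheaves on a site; by the standard result of sheaf theory (e.g.\ \cite{maclane2012sheaves,johnstone2002sketches}) such a category is a Grothendieck topos, and a Grothendieck topos is in particular finitely complete, cartesian closed and equipped with a subobject classifier, hence a topos in the sense defined above. Thus $\yo'_{\X}$ is the desired embedding.

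The step I expect to require the only genuine care — everything else being bookkeeping over earlier results — is the size question hidden in ``$\sh(\X,\jp)$ is a Grothendieck topos'': the sheaf construction yields a Grothendieck topos for a \emph{small} site, so one must either assume $\X$ essentially small or work relative to a fixed Grothendieck universe in which $\X$ is small, exactly as is tacit in the (quasi)adhesive embedding results of \cite{garner2012axioms,lack2011embedding} that this corollary generalises.
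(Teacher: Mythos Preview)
Your proposal is correct and is exactly the intended argument: the paper states the corollary without proof, treating it as the immediate combination of \Cref{ex:ade} (giving $\mathcal{M}\subseteq\ad{N}$), \Cref{thm:secondo} (giving closure under $\mathcal{M},\mathcal{N}$-unions), and the embedding theorem just above, with $\sh(\X,\jp)$ recognised as a Grothendieck topos. Your remark on the tacit smallness assumption is accurate and matches the convention in \cite{garner2012axioms,lack2011embedding}.
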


\section{Conclusions and further work}\label{sec:concl}
In this paper we have generalized three well-known results from the theory of (quasi)adhesive categories to the $\mathcal{M}, \mathcal{N}$-adhesive setting, adapting the techniques developed in \cite{garner2012axioms}. 

First, we have studied the poset of subobjects of an $\mathcal{M}, \mathcal{N}$-adhesive category. We have shown that given a mono in $\mathcal{M}$ and one in $\mathcal{M}\cap \mathcal{N}$, then their supremum, called a \emph{$\mathcal{M}, \mathcal{N}$-union}, exists and it is computed as the pushout of the pullbacks of the given monos.

Next, we have proved a kind of converse of the previous result: in  presence of $\mathcal{M}, \mathcal{N}$-unions, we can guarantee $\mathcal{M}, \mathcal{N}$-adhesivity once we know that $\mathcal{M}$ is contained in the class of $\mathcal{N}$-adhesive morphisms. This allows the reduction of the proof of the Van Kampen condition to the proof of stability of some square and of some pullbacks being pushouts. As an example, adhesivity of toposes can be readily proved using this methods.

Finally, we have shown that, under some mild hypotheses about $\mathcal{M}$ and $\mathcal{N}$, an $\mathcal{M}, \mathcal{N}$-adhesive category can be embedded in a Grothendieck topos via a functor preserving all the relevant structure (i.e. pullbacks and $\mathcal{M}, \mathcal{N}$-pushouts). Therefore, the slogan for which ``an adhesive category is one whose pushouts of monomorphisms exist and behave more or less as they do in a topos'' holds even for $\mathcal{M}, \mathcal{N}$-adhesive categories.

An open question is how to equip a category $\X$ with a preadhesive structure $(\mathcal{M}, \mathcal{N})$, suited to a given situation.
One may notice that if a category is $\mathcal{M}, \mathcal{N}$-adhesive, then $\mathcal{M}$ must be contained in the class of $\mathcal{N}$-adhesive morphisms. In particular, in the $\mathcal{M}$-adhesive case, $\mathcal{M}$ must be a subclass of the class of adhesive morphisms.
This suggests to investigate the poset of all preadhesive structures $(\mathcal{M}, \mathcal{N})$, with $\mathcal{M}$ contained in $\ad{N}$, in order to characterize the largest one, suited for the specific problem, for which $\X$ is $\mathcal{M}, \mathcal{N}$-adhesive.%

\bibliographystyle{alphaurl}
\bibliography{bibliog.bib}
\end{document}